\documentclass[review, 11pt, a4paper]{elsarticle}
\usepackage{amsmath}
\usepackage{graphicx,psfrag,epsf}
\usepackage{natbib}
\setcitestyle{authoryear,open={(},close={)}}
\usepackage[hidelinks]{hyperref}
\usepackage{enumerate}
\usepackage{doi}

\usepackage{url} 

\newcommand{\blind}{0}

\usepackage{geometry}
\geometry{a4paper, top=1in, bottom=1in, left=0.75in, right=0.75in, 
	bindingoffset=5mm}

\makeatletter
\ifx\@journal\empty
  \def\ps@pprintTitle{%
  \let\@oddhead\@empty
  \let\@evenhead\@empty
  \def\@oddfoot{\small\itshape\hfil\today}
  \let\@evenfoot\@oddfoot
}
\fi
\makeatother
\setcounter{MaxMatrixCols}{20}

\usepackage{amsthm}
\usepackage{amsfonts}
\usepackage{bm}
\usepackage{bbm}

\makeatletter
\renewcommand*\env@matrix[1][\arraystretch]{%
  \edef\arraystretch{#1}%
  \hskip -\arraycolsep
  \let\@ifnextchar\new@ifnextchar
  \array{*\c@MaxMatrixCols c}}
\makeatother

\usepackage{mathrsfs}
\usepackage{amssymb}
\usepackage[dvipsnames]{xcolor}
\usepackage{paralist}
\usepackage{multirow, float}
\usepackage{makecell}
\usepackage{calc}
\usepackage{enumitem}
\usepackage{microtype}

\usepackage{thmtools}
\declaretheoremstyle[headfont=\sffamily\bfseries,%
 notefont=\sffamily\bfseries,%
 notebraces={}{},%
 headpunct=,%
 bodyfont={},%
 headformat=Theorem~\NUMBER.\NOTE\\,%
 ]{cstThm}
\declaretheorem[style=cstThm]{thm}

\declaretheoremstyle[headfont=\sffamily\bfseries,%
 notefont=\sffamily\bfseries,%
 notebraces={}{},%
 headpunct=,%
 bodyfont={},%
 headformat=Corollary~\NUMBER.\NOTE\\,%
 ]{cstCrl}
\declaretheorem[style=cstCrl]{crl}

\declaretheoremstyle[headfont=\sffamily\bfseries,%
 notefont=\sffamily\bfseries,%
 notebraces={}{},%
 headpunct=,%
 bodyfont={},%
 headformat=Remark~\NUMBER.,%
 ]{cstRmk}
\declaretheorem[style=cstRmk]{rml}

\usepackage[nodisplayskipstretch]{setspace}
\usepackage[bottom,hang,flushmargin]{footmisc}
\interfootnotelinepenalty=10000 

\usepackage{adjustbox}

\usepackage{caption}
\DeclareCaptionStyle{italic}[justification=centering]
 {labelfont={bf},font=small,labelsep=colon}
\captionsetup[figure]{style=italic,singlelinecheck=true,skip=1ex}
\captionsetup[table]{style=italic,singlelinecheck=true,skip=1ex}
\usepackage{subcaption}

\usepackage{longtable}
\usepackage{booktabs}
\usepackage{array}
\newcolumntype{M}[1]{>{\centering\arraybackslash}m{#1}}
\newcolumntype{L}[1]{>{\raggedright\arraybackslash}m{#1}}
\newcolumntype{R}[1]{>{\raggedleft\arraybackslash}m{#1}}

\newcommand{\bvet}{\bm{b}}

\newcommand{\uvet}{\bm{u}}

\newcommand{\xvet}{\bm{x}}
\newcommand{\yvet}{\bm{y}}

\newcommand{\Avet}{\bm{A}}
\newcommand{\Bvet}{\bm{B}}
\newcommand{\Cvet}{\bm{C}}
\newcommand{\Dvet}{\bm{D}}
\newcommand{\Evet}{\bm{E}}
\newcommand{\Fvet}{\bm{F}}
\newcommand{\Gvet}{\bm{G}}

\newcommand{\Ivet}{\bm{I}}
\newcommand{\Jvet}{\bm{J}}
\newcommand{\Kvet}{\bm{K}}
\newcommand{\Lvet}{\bm{L}}
\newcommand{\Mvet}{\bm{M}}

\newcommand{\Pvet}{\bm{P}}

\newcommand{\Rvet}{\bm{R}}
\newcommand{\Svet}{\bm{S}}

\newcommand{\Wvet}{\bm{W}}
\newcommand{\Xvet}{\bm{X}}
\newcommand{\Yvet}{\bm{Y}}

\newcommand{\Unovet}{\bm{1}}
\newcommand{\Zerovet}{\bm{0}}

\newcommand{\betavet}{\bm{\beta}}
\newcommand{\gammavet}{\bm{\gamma}}
\newcommand{\muvet}{\bm{\mu}}
\newcommand{\epsvet}{\bm{\varepsilon}}
\newcommand{\etavet}{\bm{\eta}}
\newcommand{\lambdavet}{\bm{\lambda}}
\newcommand{\xivet}{\bm{\xi}}
\newcommand{\omegavet}{\bm{\omega}}

\newcommand{\Gammavet}{\bm{\Gamma}}
\newcommand{\Deltavet}{\bm{\Delta}}
\newcommand{\Lambdavet}{\bm{\Lambda}}
\newcommand{\Omegavet}{\bm{\Omega}}
\newcommand{\Phivet}{\bm{\Phi}}
\newcommand{\Psivet}{\bm{\Psi}}
\newcommand{\Sigmavet}{\bm{\Sigma}}
\newcommand{\Thetavet}{\bm{\Theta}}

\DeclareMathOperator*{\argmin}{arg\,min}

\definecolor{mybluehl}{HTML}{cbd3ff}

\makeatletter
\def\@endtheorem{\endtrivlist}
\makeatother

%
\usepackage{tikz}
\usetikzlibrary{arrows,shapes,positioning,shadows,trees}
\usetikzlibrary{matrix, decorations.pathreplacing, arrows, calc, fit, arrows.meta, decorations.pathmorphing, decorations.markings}

\tikzset{
  basic/.style  = {draw, text width=2cm, drop shadow, font=\sffamily, rectangle},
  root/.style   = {basic, rounded corners=2pt, thin, align=center,
                   fill=green!30},
  level 2/.style = {basic, rounded corners=6pt, thin,align=center, fill=green!60,
                   text width=4em},
  level 3/.style = {basic, thin, align=left, fill=pink!60, text width=1.5em}
}
\newcommand{\relation}[3]
{
	\draw (#3.south) -- +(0,-#1) -| ($ (#2.north) $)
}

\newcommand{\relationD}[3]
{
	\draw (#3.west) -- +(-#1,0) |- (#2.east)
}

\newcommand{\relationDred}[3]
{
	\draw[thick, red] (#3.west) -- +(-#1,0) |- (#2.east)
}

\newcommand{\relationDR}[3]
{
	\draw (#3.east) -- +(#1,0) |- (#2.west)
}

\theoremstyle{definition}

\makeatletter
\newcommand{\githuburl}{\begingroup%
\if1\blind
{
\texttt{-Link omitted for double-blind reviewing-}
}\fi
\if0\blind
{
\url{https://github.com/danigiro/cfc-project}
}\fi
\endgroup}
\makeatother

\makeatletter
\newcommand{\Rpackage}{\begingroup%
	\if1\blind
	{
		\texttt{-R package omitted for double-blind reviewing-}
	}\fi
	\if0\blind
	{
		\textsf{R} packages \texttt{FoReco} \citep{FoReco} and \texttt{FoCo2} \citep{Foco2}
	}\fi
	\endgroup}
\makeatother

\begin{document}


\begin{frontmatter}

\title{Coherent 
	forecast combination for linearly constrained multiple time series}

\if0\blind
{
  \author{Daniele Girolimetto\corref{mycorrespondingauthor}}
  \cortext[mycorrespondingauthor]{Corresponding author}
  \ead{daniele.girolimetto@unipd.it}
  \author{Tommaso Di Fonzo}
  
  \address{Department of Statistical Sciences, University of Padua, Padova 35121, Italy}
} \fi

\if1\blind
{
	\author{\vspace*{-4em}}
} \fi

\begin{abstract}
	\noindent Linearly constrained multiple time series may be encountered in many practical contexts, such as the National Accounts (e.g., GDP disaggregated by Income, Expenditure and Output), and multilevel frameworks where the variables  are organized according to hierarchies or groupings, like the total energy consumption of a country disaggregated by region and energy sources. In these cases, when multiple incoherent base forecasts for each individual variable are available, a forecast combination-and-reconciliation approach, that we call \textit{coherent forecast combination}, may be used to improve the accuracy of the base forecasts and achieve coherence in the final result. In this paper, we develop an optimization-based technique that combines multiple unbiased base forecasts while assuring the constraints valid for the series. We present closed form expressions for the coherent combined forecast vector and its error covariance matrix in the general case where a different number of forecasts is available for each variable. We also discuss practical issues related to the covariance matrix that is part of the optimal solution. Through simulations and a forecasting experiment on the daily Australian electricity generation hierarchical time series, we show that the proposed methodology, in addition to adhering to sound statistical principles, may yield in significant improvement on base forecasts, single-task combination and single-expert reconciliation approaches as well.
\end{abstract}

\begin{keyword}
	Forecasting; Linearly constrained multiple time series; Coherent forecasts; Forecast combination; Forecast reconciliation; Australian electricity generation
\end{keyword}
\end{frontmatter}

\newpage
\def\spacingset#1{\renewcommand{\baselinestretch}%
{#1}\small\normalsize} \spacingset{1}

\spacingset{1.5} 
\section{Introduction}

Forecasting is a critical and important component of effective and informed decision-making in various domains, from business and economics to public policy and environmental management. However, selecting an appropriate forecasting approach is often a complex and time-consuming challenge, requiring significant resources to implement techniques that slightly improve accuracy. Despite these efforts, the intrinsic uncertainty associated with forecasting means that no method can guarantee the most accurate predictions for all scenarios. The challenge is particularly pronounced in contexts where multiple variables are interrelated through specific constraints, linear but not limited to. These relationships offer valuable supplementary information that can improve the accuracy, but they also introduce additional complexity into the forecasting process.

Many real-world forecasting scenarios involve data structures with  multiple variables linked by constraints \citep{Athanasopoulos2024}. This is common in National Accounts systems, where aggregates like GDP are broken down into components from Income, Expenditure and Output sides \citep{Difonzo2023}, or in hierarchical frameworks like the total electricity demand and power generation of a country, which is disaggregated by regions and energy sources \citep{Ben_Taieb2021, Panagiotelis2023}. In these cases, when multiple forecasts of the individual variables are available, a forecast combination-and-reconciliation approach, that we call \textit{coherent forecast combination}, may be used to improve the accuracy of individual forecasts while ensuring that the final forecasts satisfy the constraints. For instance, when base, i.e., possibly incoherent multiple forecasts, are available for various components of a constrained time series, these forecasts can be combined and reconciled to produce coherent and more accurate predictions. This paper investigates the problem of combining forecasts for multiple time series linked by linear constraints, focusing on deriving a coherent forecast combination. Specifically, the study examines how to optimally assign combination weights to base forecasts such that the final combined forecast is both coherent (i.e., consistent with the underlying constraints) and exhibits improved accuracy. It is important to note that the generation of the initial base forecasts is outside the scope of this study, as these forecasts may be produced through various methodologies.

The classical univariate forecast combination approach \citep{Bates1969, Clemen1989, Timmermann2006, Wang2024} considers each time series separately, by using only forecasts of that variable, according to a local approach. In contrast, global forecast combination techniques consider multiple time series simultaneously, using information across series and common patterns available in the forecast errors to improve accuracy \citep{Thompson2024}. Although standard (univariate) forecast combination approaches focus on using local information to fit the weights, i.e., information that only concerns the target forecast, by its nature forecast reconciliation \citep{Hyndman2011} is a `global' approach, insofar it derives the forecast of a single variable belonging to a constrained structure, by using information coming from the forecasts of all variables. The literature on forecast reconciliation has extensively explored this idea, methodologically formalizing the problem and demonstrating its significant empirical benefits across various applications \citep[see][among many others]{Wickra2019, Ben_Taieb2021, Girolimetto2024-jm}.

Some authors have investigated whether combining forecasts generated from different models can improve the forecast accuracy of hierarchical time series rather than individual models \citep{Spiliotis2019, Yang2019, Goehri2020, Mohamed2023, Rostami2024, Zhang2024}, and similar research questions, related to general linearly constrained multiple time series, have been addressed in engineering \citep{Porrill1988, Sun2004}. Notably, \cite{Hollyman2021} and  \cite{Difonzo2024} propose a reconciliation approach that focuses on combining forecasts across sub-hierarchical structures. However, they do not address the integration of different forecasting models within a unified combination framework, limiting the scope of their reconciliation efforts to sub-hierarchical models without considering the potential of broader model combinations.

In this paper, we follow up on an insight of \cite{Bates1969}, who wrote: ``\textit{Work by \cite{Stone1942-fa} has made use of ideas rather similar to these, though their work related only to making improved estimates of past national income figures for the U.K. and did not tackle forecasting problems}''.
We show that integrating their linear forecast combination approach with the constrained multivariate least-squares adjustment setting developed by \cite{Stone1942-fa}, results in a closed-form expression that provides optimal combined and coherent forecasts for multiple linearly constrained time series. This new result unifies linear forecast reconciliation and combination in a simultaneous and statistically justified way, ensuring that the forecasts are both accurate and coherent, which is particularly valuable in applications where the integrity of the forecast structure is crucial.

The paper is organized as follows. In \autoref{sec:Notation} we set the notation and consider different representatons of the problem. \autoref{sec:Theory} describes the model linking the multiple base forecasts to the target linearly constrained forecast vector. The methodological analysis, including the closed-form solution for optimal coherent forecast combination and its covariance matrix, is presented in \autoref{sec:OCC}. \autoref{sec:on the covmat} discusses practical issues related to the estimation of the covariance matrix. Insights into the empirical performance of the newly proposed methodology are presented in \autoref{sec:SimulationExperiment}, that describes a Monte-Carlo experiment, and in \autoref{sec:energy}, where an application on forecasting the daily time series of the Australian electricity generation is discussed. Finally, \autoref{sec:conclusion} presents conclusions and indications for future research. 
Appendices A-G contain supplementary theoretical materials, as well as tables and graphs related to the empirical applications.

\section{Notation and preliminaries}
\label{sec:Notation}

We use lower case bold letters to denote column vectors, e.g., $\xvet \in \mathbb{R}^{N}$, and upper case bold letters to denote matrices, e.g., $\Avet \in \mathbb{R}^{N \times K}$. An all-one vector of dimension $N$ is denoted as $\Unovet_N$. We use $\Ivet_N$ to denote the $(N \times N)$ identity matrix, and $\Zerovet_{(N \times K)}$ to denote the $(N \times K)$ zero matrix. To specify that a vector $\xvet$ is non-negative for all its elements, we write $\xvet \succeq \Zerovet_{(N \times 1)}$. For matrices $\Avet$, $\Avet \succeq \Zerovet$ means that $\Avet$ is positive semi-definite. \autoref{tab:notation} summarizes the notation used throughout the paper.

\begingroup
\spacingset{1.1} 
\centering
	\small
\begin{longtable}{c|p{0.85\linewidth}}
\caption{List of symbols used in the paper.} \label{tab:notation} \\

		\toprule
		{\bf Symbol} & {\bf Description} \\
		\midrule 
\endfirsthead

\multicolumn{2}{c}%
{{\bfseries \tablename\ \thetable{} -- continued from previous page}} \\
		\toprule
		{\bf Symbol} & {\bf Description} \\
		\midrule 
\endhead

\midrule \multicolumn{2}{r}{{Continued on next page}} \\
\endfoot

\bottomrule
\endlastfoot

		\makecell[t]{$n_u$, $n_b$,\\[-0.15cm] $n$, $p$}  & Scalars denoting the number of constrained (upper), free (bottom), total variables, respectively, and the number of forecast experts.\\
		$i, k$ & Indices running on the variables, $i, k = 1, \dots, n$. \\
		$j, l$ & Indices running on the forecast experts, $j,l = 1, \dots, p$. \\
		$n_j$ & Scalar denoting the number of variables for which forecasts produced by the $j$-th expert are available, $1 \le n_j \le n$. \\
		$p_i$ & Scalar denoting the number of forecast experts available for the $i$-th variable, $1 \le p_i \le p$, with $p = \displaystyle\max_{i=1,\ldots,n} p_i$. \\
		$m$ & Scalar denoting the total number of available forecasts produced by $p$ experts for $n$ variables: $m = \displaystyle\sum_{j=1}^{p} n_j = \displaystyle\sum_{i=1}^{n} p_i$, $n \le m \le np$. When $n_j = n \; \forall j$, i.e., $p_i=p \; \forall i$, $m=np$ (`balanced' case). \\
		$y_i$ & Target forecast for the $i$-th variable. \\
$\uvet$ & $(n_u \times 1)$  vector of constrained (upper) variables.\\
$\bvet$ & $(n_b \times 1)$ vector of free (bottom) variables.\\
$\Avet$ & $(n_u \times n_b)$ linear combination matrix mapping $\bvet$ into $\uvet$: $\uvet = \Avet\bvet$.\\
		$\Cvet$ & $(n_u \times n)$ zero-constraints matrix: $\Cvet = 
		\begin{bmatrix}[0.7]
			\Ivet_{n_u} & -\Avet
		\end{bmatrix}$. \\
$\yvet$ & $(n \times 1)$ target forecast vector: $\yvet = \begin{bmatrix}[0.7]
			\uvet^\top & \bvet^\top
		\end{bmatrix}^\top$, by definition coherent ($\Cvet\yvet = \Zerovet_{(n_u \times 1)}$).\\
		$\Svet$ & $(n \times n_b)$ structural-like matrix: $\Svet = \begin{bmatrix}[0.7]
\Avet \\ \Ivet_{n_b}
		\end{bmatrix}$, such that $\yvet = \Svet\bvet$.\\
		$\widehat{y}_i^j$ & Unbiased base forecast of the $i$-th variable produced by the $j$-th expert. \\
		$\widehat{\yvet}^j$ &  $(n_j \times 1)$ vector of the base forecasts produced by the $j$-th expert.\\
		$\widehat{\yvet}_i$ & $(p_i \times 1)$ vector of the base forecasts of the $i$-th variable. \\
		$\widehat{\yvet} \equiv \widehat{\yvet}_{\text{be}}$ & $\left[
	\widehat{\yvet}^{1\top} \dots \;\widehat{\yvet}^{j\top} \dots \;\widehat{\yvet}^{p\top}
\right]^\top$: $(m \times 1)$ vector of the base forecasts stacked \textit{by-expert} (be).\\
$\widehat{\yvet}_{\text{bv}}$ & $\left[
	\widehat{\yvet}_1^\top \; \ldots \; \widehat{\yvet}_i^\top \; \ldots \; \widehat{\yvet}_n^\top
\right]^\top$: \makecell[{{p{0.9\linewidth}}}]{$(m \times 1)$ vector of the base forecasts stacked \textit{by-variable} (bv). } \\
		$\widehat{\Yvet}$ & $\begin{bmatrix}[0.7]
		\widehat{\yvet}^1 & \ldots & \widehat{\yvet}^j & \ldots & \widehat{\yvet}^p
		\end{bmatrix} $.
		If $n_j=n \; \forall j$ (i.e., `balanced' case): ($n \times p$) matrix containing the base forecasts produced by $p$ different experts for the target vector $\yvet$. Each column denotes an expert $j = 1,\dots,p$, while each row denotes a variable $i = 1, \dots, n$. In this case, $\widehat{\yvet}=\text{vec}\left(\widehat{\Yvet}\right)$.\\
		$\widehat{\Yvet}^\top$ & $\begin{bmatrix}[0.7]
			\widehat{\yvet}_1 & \ldots & \widehat{\yvet}_i & \ldots & \widehat{\yvet}_n
		\end{bmatrix}$.
		If $n_j=n \; \forall j$ (i.e., `balanced' case):  ($p \times n$) matrix containing the base forecasts produced by $p$ different experts for the target vector $\yvet$. Each column denotes a variable $i = 1, \dots, n$, while each row denotes an expert $j = 1,\dots,p$. In this case, $\widehat{\yvet}_{\text{bv}}=\text{vec}\left(\widehat{\Yvet}^\top\right)$.\\
		$\Pvet$  & $(m \times m)$ permutation matrix such that $\Pvet\widehat{\yvet} = \widehat{\yvet}_{\text{bv}}$. As $\Pvet^{-1} = \Pvet^\top$, $\widehat{\yvet} = \Pvet^\top\widehat{\yvet}_{\text{bv}}$. If $n_j=n \; \forall j$ (i.e., `balanced' case): $(np \times np)$ commutation matrix such that $\Pvet\text{vec}\left(\widehat{\Yvet}\right) = \text{vec}\left(\widehat{\Yvet}^\top\right)$. \\
	$\Lvet_j$ & ($n_j \times n$) zero-one matrix selecting the entries of $\yvet$ for which forecasts of expert $j$ are available.\\
$\Lvet$ & $\text{Diag}\left(\Lvet_1, \ldots, \Lvet_j, \ldots, \Lvet_p\right) \in \{0,1\}^{m \times np}$. If $n_j=n \; \forall j$ (i.e., `balanced' case): $\Lvet = \Ivet_{np}$.\\
$\Kvet$  & $\Lvet\left(\Unovet_p \otimes \Ivet_n\right) = \left[\Lvet_1^\top \ldots \Lvet_j^\top \ldots \Lvet_p^\top\right]^\top \in \{0,1\}^{m \times n}$. If $n_j=n \; \forall j$ (i.e., `balanced' case):  $\Kvet = \Unovet_p \otimes \Ivet_n$, $(np \times n)$ matrix such that $\Kvet\yvet = \big[\;\underbrace{\yvet^\top \; \ldots^{\phantom{\top}} \; \yvet^\top}_{p \; \text{times}}\;\big]^\top =
		\begin{bmatrix}[0.7] \yvet \\ \vdots \\ \yvet \end{bmatrix}$. \\
		$\Jvet$ & $\Pvet\Kvet = \Pvet\Lvet\left(\Unovet_p \otimes \Ivet_n\right) \in \{0,1\}^{m \times n}$, such that $\Jvet\yvet =  \begin{bmatrix}[0.7]
			\Unovet_{p_1} y_1 \\ \vdots \\ \Unovet_{p_n} y_n
		\end{bmatrix}$.
If $n_j=n \; \forall j$ (i.e., `balanced' case):  $\Jvet = \Ivet_n \otimes \Unovet_p$: $(np \times n)$ matrix such that $\Jvet\yvet = \yvet \otimes \Unovet_p$.	\\
		$\Wvet$ & $(m \times m)$ error covariance matrix of $\widehat{\yvet}$.\\
		$\Wvet_{jl}$ & $(n_j \times n_l)$ error cross-covariance matrix of $\widehat{\yvet}^j$ and $\widehat{\yvet}^l$.\\
$\Sigmavet$ & $(m \times m)$ error covariance matrix of $\widehat{\yvet}_{\text{bv}}$. 
 $\Sigmavet = \Pvet\Wvet\Pvet^\top$ and $\Wvet = \Pvet^\top\Sigmavet\Pvet$.\\
		$\Sigmavet_{ik}$ & $(p_i \times p_k)$ error cross-covariance matrix of $\widehat{\yvet}_i$ and $\widehat{\yvet}_k$.\\
$\widehat{\yvet}^c$ & $(n \times 1)$ vector of the multi-task combined forecasts.
In general, these forecasts are incoherent: $\Cvet\widehat{\yvet}^c \neq \Zerovet_{(n_u \times 1)}$. \\
		$\widetilde{\yvet}^j$ & If $n_j=n$, $(n \times 1)$ vector of reconciled forecasts using the base forecasts produced by the $j$-th expert. $\widetilde{\yvet}^j$ is by construction coherent: $\Cvet\widetilde{\yvet}^j = \Zerovet_{(n_u \times 1)}$.\\
$\widetilde{\yvet}^c$ & $(n \times 1)$ vector of the optimal coherent combined forecasts, by construction coherent: $\Cvet\widetilde{\yvet}^c = \Zerovet_{(n_u \times 1)}$. \\
		$\Omegavet$ & $\begin{bmatrix}[0.7]
			\Omegavet_1 \; \ldots \; \Omegavet_j \; \ldots \; \Omegavet_p
		\end{bmatrix}^\top$:
		$(m \times n)$ matrix given by the concatenation of the matrix combination weights $\Omegavet_j \in \mathbb{R}^{n \times n_j}$ of the multi-task forecast combination $\widehat{\yvet}^c = \Omegavet^\top\widehat{\yvet} = \displaystyle\sum_{j=1}^{p}\Omegavet_j\widehat{\yvet}^j$ (by-expert formulation).\\
		$\Gammavet$ & $\begin{bmatrix}[0.7]
	\Gammavet_1 \; \ldots \; \Gammavet_i \; \ldots \; \Gammavet_n
\end{bmatrix}^\top$:
$(m \times n)$ matrix given by the concatenation of the matrix combination weights $\Gammavet_i \in \mathbb{R}^{n \times p_i}$ of the multi-task forecast combination $\widehat{\yvet}^c = \Gammavet^\top\widehat{\yvet}_{\text{bv}} = \displaystyle\sum_{i=1}^{n}\Gammavet_i\widehat{\yvet}_i$ (by-variable formulation).\\
		$\Psivet$ & $\begin{bmatrix}[0.7]
			\Psivet_1 \; \ldots \; \Psivet_j \; \ldots \; \Psivet_p
		\end{bmatrix}^\top$:
		$(m \times n)$ matrix given by the concatenation of the matrix reconciliation weights $\Psivet_j \in \mathbb{R}^{n \times n_j}$ of the coherent multi-task forecast combination $\widetilde{\yvet}^c = \Psivet^\top\widehat{\yvet} = \displaystyle\sum_{j=1}^{p}\Psivet_j\widehat{\yvet}^j$ (by-expert formulation).\\
		$\Phivet$ & $\begin{bmatrix}[0.7]
	\Phivet_1 \; \ldots \; \Phivet_i \; \ldots \; \Phivet_n
\end{bmatrix}^\top$:
$(m \times n)$ matrix given by the concatenation of the matrix reconciliation weights $\Phivet_i \in \mathbb{R}^{n \times p_i}$ of the coherent multi-task forecast combination $\widetilde{\yvet}^c = \Phivet^\top\widehat{\yvet}_{\text{bv}} = \displaystyle\sum_{i=1}^{n}\Phivet_i\widehat{\yvet}_i$ (by-variable formulation).\\
		src & Sequential first-reconciliation-then-combination. \\
		scr & Sequential first-combination-then-reconciliation.\\
		occ & Optimal (minimum mean square error, MMSE) coherent linear combination.\\
\end{longtable}
\endgroup

\subsection{Zero-constrained and structural representations of a general linearly constrained time series}
\label{sec:general_lcts}

Let $\yvet_t = \begin{bmatrix}[0.7] y_{1,t} & \ldots & y_{i,t} & \ldots \; y_{n,t} \end{bmatrix}^\top \in \mathbb{R}^{n}$ be a vector of observed time series we are interested in forecasting, and assume that the time series is linearly constrained, in the sense that at every time $t$ the $n$ individual variables of $\yvet_t$ must satisfy $n_u < n$ independent linear constraints. Without loss of generality, we assume that the $n_u$ constraints are expressed in \textit{zero-constrained form}, according to the homogeneous linear system:
\begin{equation}
\label{eq:zero-constraint-formulation}
\Cvet \yvet_t = \Zerovet_{(n_u \times 1)} ,
\end{equation}
where $\Cvet \in \mathbb{R}^{n_u \times n}$ is a zero-constraints matrix with full row rank\footnote{We assume that these constraints are well defined throughout this paper, i.e., they are not in conflict with each other and there are no repeated constraints.}.

\cite{Girolimetto2024} show that every general linearly constrained time series $\yvet_t$ may be expressed as $\yvet_t = \begin{bmatrix}[0.7] \uvet_t^\top & \bvet_t^\top \end{bmatrix}^\top$, i.e., stacking an `upper' vector $\uvet_t \in \mathbb{R}^{n_u}$ and a `bottom' vector $\bvet_t \in \mathbb{R}^{n_b}$, with $n = n_u + n_b$, where $\uvet_t$ is a linear combination of the bottom time series $\bvet_t$: $\uvet_t = \Avet\bvet_t$. $\Avet \in \mathbb{R}^{n_u \times n_b}$ is the \textit{linear combination matrix} mapping $n_b$ free variables $\bvet_t$ into $n_u$ constrained variables $\uvet_t$. Then, denoting $\Cvet = \begin{bmatrix}[0.7] \Ivet_{n_u} & -\Avet \end{bmatrix} \in \mathbb{R}^{n_u \times n}$ the zero-constraints matrix, and $\Svet = \begin{bmatrix}[0.7] \Avet \\ \Ivet_{n_b} \end{bmatrix} $ the \textit{structural-like matrix} \citep{Girolimetto2024}, the constraints on $\yvet_t$ may be equivalently expressed either as expression (\ref{eq:zero-constraint-formulation}) or in \textit{structural form}:
\begin{equation}
\label{eq:structural-like-formulation}
\yvet_t = \Svet\bvet_t	.
\end{equation}
For example, the left panel of \autoref{fig:example} shows a three-level hierarchical time series ($n = 7$, $n_u = 3$) $\yvet_t = \begin{bmatrix}[0.7] \texttt{X}_t & \texttt{A}_t & \texttt{B}_t & \texttt{AA}_t & \texttt{AB}_t & \texttt{BA}_t & \texttt{BB}_t \end{bmatrix}^\top$, with 
\begin{equation}\label{eq:hsr}
\addtolength{\jot}{-1em}
\left\{\rule{0cm}{0.75cm}\right.\begin{aligned}
		\texttt{X} &= \texttt{AA} + \texttt{AB} + \texttt{BA} + \texttt{BB} \\
	\texttt{A} &= \texttt{AA} + \texttt{AB} \\
	\texttt{B} &= \texttt{BA} + \texttt{BB} 
\end{aligned}\ \quad\mbox{and}\quad \Cvet = \begin{bmatrix}[0.8]
	1 & 0 & 0 & -1 & -1 & -1 & -1\\
	0 & 1 & 0 & -1 & -1 & 0 & 0\\
	0 & 0 & 1 & 0 & 0 & -1 & -1\\
\end{bmatrix}.	
\end{equation}
In this case, $\uvet_t = \begin{bmatrix}[0.7] \texttt{X}_t & \texttt{A}_t & \texttt{B}_t \end{bmatrix}^\top$, $\bvet_t = \begin{bmatrix}[0.7] \texttt{AA}_t & \texttt{AB}_t & \texttt{BA}_t & \texttt{BB}_t & \texttt{BC}_t \end{bmatrix}^\top$, and matrices $\Avet$ and  $\Svet$ are easily deduced from (\ref{eq:hsr}):
\begin{equation}
	\label{eq:SimpleExampleMatrices}
\Avet = \begin{bmatrix}[0.8]
	1 & 1 & 1 & 1\\
	1 & 1 & 0 & 0\\
	0 & 0 & 1 & 1\\
\end{bmatrix} , \qquad
\Svet = \begin{bmatrix}[0.8]
	1 & 1 & 1 & 1\\
	1 & 1 & 0 & 0\\
	0 & 0 & 1 & 1\\
\multicolumn{4}{c}{\Ivet_4}
\end{bmatrix} .
\end{equation}
This is not the case for the general linearly constrained structure shown in the right panel of \autoref{fig:example}, consisting of two simple hierarchies that share the same top-level series, but with different bottom series. Denoting now $\yvet_t = \begin{bmatrix}[0.7] \texttt{X}_t & \texttt{A}_t & \texttt{AA}_t & \texttt{AB}_t & \texttt{B}_t & \texttt{C}_t & \texttt{D}_t \end{bmatrix}^\top$, we may write:
\begin{equation}\label{eq:lsr}
	\addtolength{\jot}{-0.75em}
\left\{\rule{0cm}{0.9cm}\begin{aligned}
	\texttt{X} &= \texttt{C} + \texttt{D} \\
	\texttt{X} &= \texttt{AA} + \texttt{AB} + \texttt{B} \\
	\texttt{A} &= \texttt{AA} + \texttt{AB} 
\end{aligned}\right. \quad\mbox{and}\quad \begin{bmatrix}[0.8]
	1 & 0 & 0 & 0 & 0 & -1 & -1 \\
	1 & 0 & -1 & -1 & -1 & 0 & 0 \\
	0 & 1 & -1 & -1 & 0 & 0 & 0 \\
\end{bmatrix} \yvet_t = \Zerovet_{(3 \times 1)},
\end{equation}
where the zero-constraints matrix has not the requested $\begin{bmatrix}[0.7] \Ivet_{n_u} & -\Avet \end{bmatrix}$ form. However, $\Avet$ (and $\Cvet$) may be obtained after simple operations on (\ref{eq:lsr}), shown in \cite{Girolimetto2024}, resulting in:
\begin{equation}\label{eq:lsr_new}
	\addtolength{\jot}{-0.75em}
	\left\{\rule{0cm}{0.9cm}\begin{aligned}
		\texttt{X} &= \texttt{C} + \texttt{D} \\
		\texttt{A} &=  -\texttt{B} + \texttt{AA} + \texttt{AB}  \\
		\texttt{AA} &= -\texttt{AB} - \texttt{B} + \texttt{C} + \texttt{D}
	\end{aligned}\right. \quad\mbox{and}\quad 
	\Avet = \begin{bmatrix}[0.8]
	0 & 0 & 1 & 1 \\
	0 & -1 & 1 & 1 \\
	-1 & -1 & 1 & 1 \\
	\end{bmatrix},
\end{equation}
with $n_u=3$, $n_b=4$, $\uvet_t = \begin{bmatrix}[0.7] \texttt{X}_t & \texttt{A}_t & \texttt{AA}_t \end{bmatrix}^\top$ and $\bvet_t = \begin{bmatrix}[0.7] \texttt{AB}_t & \texttt{B}_t & \texttt{C}_t & \texttt{D}_t \end{bmatrix}^\top$.

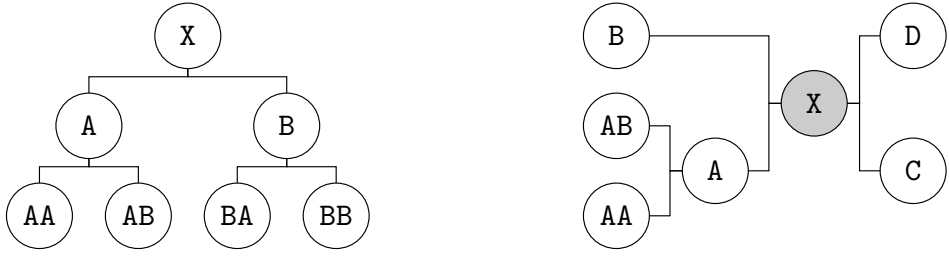
\begin{figure}[tb]\centering
\resizebox{0.75\linewidth}{!}{
\begin{tikzpicture}[baseline=(current  bounding  box.center),
			rel/.append style={shape=circle,
				draw=black, inner sep =0, outer sep =0,
			minimum width=0.8cm,
			minimum height=0.8cm},
			connection/.style ={inner sep =0, outer sep =0}]
				
			\node[rel] at (0, 0) (A1){\texttt{AA}};
			\node[rel] at (1.2, 0) (A2){\texttt{AB}};	
			\node[rel] at (0.6, 1.1) (A){\texttt{A}};
			
			\node[rel] at (2.4, 0) (B1){\texttt{BA}};
			\node[rel] at (3.6, 0) (B2){\texttt{BB}};
			\node[rel] at (3, 1.1) (B){\texttt{B}};	
			\node[rel] at (1.8, 2.2) (X){\texttt{X}};
			
			\relation{0.1}{A1}{A};
			\relation{0.1}{A2}{A};
			\relation{0.1}{B1}{B};
			\relation{0.1}{B2}{B};
			\relation{0.1}{A}{X};
			\relation{0.1}{B}{X};
			
			\node[rel] at (7, 0) (A1x){\texttt{AA}};
			\node[rel] at (7, 1.1) (A2x){\texttt{AB}};	
			\node[rel] at (8.2, 0.55) (Ax){\texttt{A}};
					
			\node[rel] at (7, 2.2) (Bx){\texttt{B}};		
			\node[rel] at (10.6, 0.55) (Cx){\texttt{C}};
			\node[rel] at (10.6, 2.2) (Dx){\texttt{D}};	
			\node[rel, fill = black!20] at (9.4, 1.375) (Xx){\texttt{X}};
			
			\relationD{0.15}{A1x}{Ax};
			\relationD{0.15}{A2x}{Ax};
			\relationD{0.15}{Ax}{Xx};
			\relationD{0.15}{Bx}{Xx};
			\relationDR{0.15}{Cx}{Xx};
			\relationDR{0.15}{Dx}{Xx};
		\end{tikzpicture}}
		\vskip0.25cm
		\caption{Two examples of linearly constrained time series. Left: a three-level genuine hierarchical structure. Right: a general linearly constrained multiple time series consisting of two simple hierarchies sharing the same top-level series.\label{fig:example}}
\end{figure}

Starting with \cite{Athanasopoulos2009} and \cite{Hyndman2011}, the \textit{structural form} (\ref{eq:structural-like-formulation}) is widely used in the literature on forecast reconciliation for genuine hierarchical/grouped time series \citep[for a recent review, see][]{Athanasopoulos2024}. In fact, expression (\ref{eq:structural-like-formulation}) has the merit of describing the relationships linking the free (bottom) and the constrained (upper) time series in a way that visually recalls a genuine hierarchical or grouped structure, by placing the aggregated series at higher levels than the bottom one, where the most disaggregated series take place. Moreover, this formulation allows the forecast reconciliation results to be developed through an unconstrained regression model, very simple to be dealt with. For this reason, although the zero-constrained form is more general\footnote{Whereas the structural representation of a genuine hierarchical/grouped time series can be transformed into a zero-constrained form in a straightforward way, for a general linearly constrained time series, with a complex and not genuine hierarchical or grouped structure, finding the structural-like representation may be a complex task. On this point, see \cite{Girolimetto2024}.}, as the two formulations are perfectly equivalent, i.e., both bring to the same final results, this paper will also present expressions derived from the structural representation to provide a comprehensive overview of the different but equivalent approaches to the problem.

\subsection{Base forecasts' organization: `by-expert' and `by-variable'}
\label{sec:be_vs_bv}
Suppose we have base forecasts of the individual $n$ variables of the target vector $\yvet_{t+h} \in \mathbb{R}^n$, where $h \ge 1$ is the forecast horizon, produced by $p \ge 2$ experts\footnote{We use the term `expert' as a synonym for `agent' or `model'.}. In general, we admit that the forecasts provided by each expert might refer to different sets of individual variables, and denote them by the vectors $\widehat{\yvet}_{t+h}^1 \in \mathbb{R}^{n_1}$, \dots, $\widehat{\yvet}_{t+h}^p \in \mathbb{R}^{n_p}$, with $1 \le n_j \le n$, $j=1,\ldots,p$. In analogy to the literature on panel data models \citep{Baltagi2021}, we will refer to this situation as the 'unbalanced case'. To simplify the notation, in the following we consider a single prediction horizon, say $h=1$, and omit the subscript $t+1$. In other terms, $y_i \equiv y_{i,t+1}$, $\widehat{y}_i^j \equiv \widehat{y}^j_{i,t+1}$, $\yvet \equiv \yvet_{t+1}$ and $\widehat{\yvet}^j \equiv \widehat{\yvet}^j_{t+1}$. Let $\widehat{\yvet}_i \in \mathbb{R}^{p_i}$ be the vector of the base forecasts available for the $i$-th variable, where $1 \le p_i \le p$, with $p = \displaystyle\max_{i=1,\ldots,n} p_i$, and denote $m = \displaystyle\sum_{j=1}^{p} n_j = \sum_{i=1}^{n}p_i$ the total number of available forecasts (i.e., produced by $p$ experts for $n$ individual variables). All the available base forecasts may be grouped into a single $(m \times 1)$ vector, obtained by concatenation of either $\widehat{\yvet}^j$, $j=1,\ldots,p$, or $\widehat{\yvet}_i$, $i=1,\ldots,n$, vectors. In the former case, the base forecasts are said to be organized \textit{by-expert}, in the latter \textit{by-variable} (bv):
\begin{equation}
\label{eq:yhat_be_bv}
\widehat{\yvet} \equiv \widehat{\yvet}_{\text{be}} = \begin{bmatrix}[0.7]
	\widehat{\yvet}^1 \\ \vdots \\ \widehat{\yvet}^j \\ \vdots \\ \widehat{\yvet}^p
\end{bmatrix} \in \mathbb{R}^m ,
\qquad
\widehat{\yvet}_{\text{bv}} = \begin{bmatrix}[0.7]
\widehat{\yvet}_1 \\ \vdots \\ \widehat{\yvet}_i \\ \vdots \\ \widehat{\yvet}_n
\end{bmatrix}  \in \mathbb{R}^m .
\end{equation}
As in the rest of the paper we mainly refer to the by-expert data organization, we omit the subscript `be' and use $\widehat{\yvet}$ instead of $\widehat{\yvet}_{\text{be}}$. Passing from a by-expert to a by-variable data organization can be achieved through a permutation matrix \citep{Magnus2019-lo}, i.e., a square matrix $\Pvet\in \{0,1\}^{m \times m}$ obtained from the same size identity matrix by a permutation of rows, such that $\Pvet\widehat{\yvet} = \widehat{\yvet}_{\text{bv}}$. In addition, as $\Pvet^{-1} = \Pvet^\top$, it is $\widehat{\yvet} = \Pvet^\top\widehat{\yvet}_{\text{bv}}$.

\vspace{.25cm}

\noindent \textbf{Balanced case}

\noindent It is worth mentioning that this notation encompasses the complete case where $n_j = n$ $\forall j$ (i.e., $p_i = p$ $\forall i$, $m=np$). In this case, that we call `balanced', the available base forecasts may be organized as a $(n \times p)$ matrix $\widehat{\Yvet}$, whose $j$-th column contains the $n$ forecasts provided by the $j$-th individual expert (by-expert organization):
\begin{equation}
	\label{Yhat}
	\widehat{\Yvet} = 
	\begin{bmatrix}
		\widehat{y}_{1}^1 & \ldots & \widehat{y}_{1}^j & \ldots & \widehat{y}_{1}^p\\
		\vdots & \ddots & \vdots & \ddots & \vdots \\
		\widehat{y}_{i}^1 & \ldots & \widehat{y}_{i}^j & \ldots & \widehat{y}_{i}^p\\
		\vdots & \ddots & \vdots & \ddots & \vdots \\
		\widehat{y}_{n}^1 & \ldots & \widehat{y}_{n}^j & \ldots & \widehat{y}_{n}^p\\
	\end{bmatrix} =
	\begin{bmatrix}
		\widehat{\yvet}^1 & \ldots & \widehat{\yvet}^j & \ldots & \widehat{\yvet}^p
	\end{bmatrix} .
\end{equation}
An equivalent by-variable organization of the available base forecasts is obtained by considering the transpose of matrix $\widehat{\Yvet}$:
\begin{equation}
	\label{Yhat_transpose}
	\widehat{\Yvet}^\top = 
	\begin{bmatrix}
		\widehat{\yvet}_{1} & \ldots & \widehat{\yvet}_{i} & \ldots & \widehat{\yvet}_{n}
	\end{bmatrix} ,
\end{equation}
where $\widehat{\yvet}_{i} \in \mathbb{R}^p$ contains the base forecasts of the $i$-th individual variable provided by the $p$ experts. In this case, moving from a by-expert to a by-variable data organization can be achieved through a commutation matrix $\Pvet\in \{0,1\}^{np \times np}$ \citep{Magnus2019-lo}, such that $\Pvet\text{vec}\left(\widehat{\Yvet}\right) = \text{vec}\left(\widehat{\Yvet}^\top\right)$.

\section{Method and theoretical properties}
\label{sec:Theory}
Coherent multi-task forecast combination for a linearly constrained multivariate time series, or specific cases thereof, like hierarchical or grouped time series, is the process of combining multiple experts' base forecasts with the information on the relationships linking the individual variables of the series.

If only the $i$-th scalar variable is considered (i.e, $n=1$ and $p \ge 2$), and a linear solution $\widehat{y}^c_i$ is looked for, we face a classical problem of single-task linear forecast combination \citep{Bates1969}, without any coherence issue. In this case, the combined forecasts $\widehat{y}^c_i$ is the result of a \textit{local} \citep{Thompson2024} linear combination across the $p$ experts of the same variable:
\begin{equation}
\label{eq:single task fc}
\widehat{y}^c_i = \omegavet_i^\top\widehat{\yvet}_i =
\displaystyle\sum_{j=1}^{p}\omega_{ij}\widehat{y}^j_i,
\end{equation}
where $\widehat{\yvet}_i \in \mathbb{R}^{p}$ and $\omegavet_i \in \mathbb{R}^p$ is a vector of combination weights\footnote{For a comprehensive review of the many weighting schemes proposed in the forecasting literature, see \cite{Timmermann2006} and \cite{Wang2024}.}.

On the other hand, if only the base forecasts of the $j$-th single expert are available (i.e., $p=1$) for all the $n > 1$ individual variables of $\yvet$, we face a classical problem of single-expert forecast reconciliation. When looking for a linear solution, the reconciled (i.e., coherent) forecast vector may be expressed as 
\begin{equation}
	\label{eq:single model fr}
	\widetilde{\yvet} = \Mvet\widehat{\yvet}^j 
	\quad \rightarrow \quad 
	\widetilde{y}_i = m_{ii} \widehat{y}_{i}^j + \sum_{\substack{k=1 \\ k \ne i}}
	^{n} m_{ik} \widehat{y}_{k}^j, \quad i=1, \ldots, n,
\end{equation}
where $\widehat{\yvet}^j \in \mathbb{R}^n$, and $\Mvet = \{m_{ik}\} \in \mathbb{R}^{n \times n}$  is a projection matrix depending on the base forecast errors' covariance matrix and the zero constraints matrix $\Cvet$ \citep{Panagiotelis2021, Difonzo2023}. The single-variable formulation on the rhs of expression (\ref{eq:single model fr}) makes it clear the \textit{global} nature \citep{Thompson2024} of the forecast combination resulting in the single-expert linear reconciliation, in the sense that the reconciled forecast of the $i$-th individual variable is given by the linear combination of the base forecasts of the same variable, $\widehat{y}_{i}^j$, \textbf{and} of all the remaining $n-1$ individual variables.

Continuing along this line of reasoning, we might consider the extension to the multi-task framework of the single-task linear forecast combination approach in expression (\ref{eq:single task fc}). Useful references are \cite{Sun2004}, who deal with data assimilation problems, \cite{Keller2004} and \cite{Lavancier2016}, extending the statistical theory on the combination of multiple estimators of the same vector of parameters.
The results found by \cite{Sun2004} and \cite{Lavancier2016} may be used to express a global multi-task combined forecast $\widehat{\yvet}^c$ which exploits all the base forecasts of all variables, and is more accurate than any individual or multiple base forecasts.
Surprisingly, these papers appear to have had a limited impact on research on forecast combination and we aim to address this gap.

However, it should be noted that, if the variable to be forecast is linearly constrained, there is no guarantee that the outcome obtained following the approaches so far is coherent. In other terms, in general it is $\Cvet\widehat{\yvet}^c \ne \Zerovet_{(n_u \times 1)}$. A possible, practical solution to obtain coherent forecasts is adopting a two-step procedure (\textit{sequential combination-first-then-reconciliation}, scr), consisting of performing the reconciliation of $\widehat{\yvet}^c$ in order to finalize the coherent combination forecasting process \citep[see][]{Rostami2024}. Another simple two-step procedure, \textit{sequential reconciliation-first-then-simple-average-combination}, src, may be developed as well. However, it is worth mentioning that the src 
approach is limited to the balanced case and does not apply to more general situations.

In this paper we address the problem of finding the optimal (in least squares sense) coherent-and-combined forecast vector $\widetilde{\yvet}^c \in \mathbb{R}^n$ of a linearly constrained time series, where the coherence condition is expressed as $\Cvet\widetilde{\yvet}^c = \Zerovet_{(n_u \times 1)}$. \autoref{tab:forcombin} schematically shows conceivable combination and reconciliation approaches in relation to the forecast coherence property and to the local/global nature of the final result. 

\begin{table}[tb]
	\centering
		\caption{Combination approaches for single- and multi-task forecast combination, single-expert forecast reconciliation, and coherent forecast combination. The term \textbf{local} denotes the combination of multiple base forecasts of one individual variable, while  \textbf{global} refers to the combination of either single or multiple experts base forecasts of all $n$ individual variables.}\label{tab:forcombin}
		\begingroup
	\small
\setlength{\tabcolsep}{4pt}
\setlength\extrarowheight{5pt}
		\begin{tabular}{lcc@{\hspace{-8pt}}c@{\hspace{5pt}}l}
			\toprule
			{\bf Combination approaches} & {\bf Forecast} & 
			{\bf Coherence} & & {\bf Nature of the combined forecast} \\
			\midrule
			Single-task forecast combination &  $\widehat{y}_i^c$ & {\color{red}NO} & $\longrightarrow$ & Local, multiple experts \\
			Forecast reconciliation & $\widetilde{\yvet}$ & {\color{blue}YES} & $\longrightarrow$ & Global, single expert \\
			Multi-task forecast combination & $\widehat{\yvet}^c$ & {\color{red}NO} & \multirow{4}{*}{$\left.\rule{0cm}{3.5em}\right\}\hspace{-5pt}\rightarrow$} & \multirow{4}{*}{Global, multiple experts}\\
			Sequential combination-reconciliation & $\widetilde{\yvet}^c_{\text{scr}}$ & {\color{blue}YES} &   \\
			Sequential reconciliation-combination & $\widetilde{\yvet}^c_{\text{src}}$ & {\color{blue}YES} &   \\
			Optimal coherent combination & $\widetilde{\yvet}^c$ & {\color{blue}YES} &  \\
			\bottomrule
		\end{tabular}
		\endgroup
	\end{table}

\subsection{Model and number of experts for each individual variable: The unbalanced case}
\label{sec:Model_and_number}
To discuss the optimal coherent forecast combination methodology, we slightly abstract the problem and consider a data-fitting issue with equality constraints. As in the classical frameworks of \cite{Stone1942-fa} and \cite{Bates1969}, we assume that each individual base forecast is unbiased, i.e., $E\big(\widehat{y}_i^j\big) = E\big(y_i\big)$ $i=1,\ldots,n_j$, $j=1,\ldots,p_i$. Following \cite{Stone1942-fa} \citep[see also ][]{Byron1978-ws, Byron1979-hv}, we state that the base forecast of the $i$-th variable produced by the $j$-th expert, is the sum of the `true', unknown target forecast and a zero-mean forecast error $\varepsilon_i^j$:
\begin{equation}
\label{eq:Stonemodel}
\widehat{y}_i^j = y_i + \varepsilon_i^j, \quad i=1,\ldots,n_j, \; i=1,\ldots,p_i.
\end{equation}

In the following, we discuss the unbalanced case\footnote{The `balanced' case (i.e., $n_j=n$ $\forall j$, $p_i=p$ $\forall i$) is presented in \ref{app:rect}.}, occurring when for at least one index $j$, it is $n_j < n$, which means that the $p$ experts do not produce base forecasts for each of the $n$ variables. Denote $\Lvet$ the $(m \times np)$ selection matrix $\Lvet = \text{Diag}\left(\Lvet_1, \ldots, \Lvet_j, \ldots, \Lvet_p\right)$, where $\Lvet_j \in \{0,1\}^{n_j \times n}$, $j=1, \ldots, p$, selects the $n_j \le n$ entries of $\yvet$ for which base forecasts of the $j$-th expert are available\footnote{Clearly, if $n_j =n \; \forall j$, it is $\Lvet_j=\Ivet_n \;\forall j$, and $\Lvet = \Ivet_{np}$.}. Model (\ref{eq:Stonemodel}) may now be grouped into the $p$ linear models
\begin{equation}
	\label{eq:hatyregselect}
	\widehat{\yvet}^j = \Lvet_j\yvet + \epsvet^j, \quad j=1,\ldots,p,
\end{equation}
where $\widehat{\yvet}^j \in \mathbb{R}^{n_j}$, and the base forecast errors $\epsvet^j \in \mathbb{R}^{n_j}$ are $(n_j \times 1)$ zero-mean random vectors, with $(n_j \times n_l)$ variance-covariance matrices $\Wvet_{jl} = E\left[\epsvet^j(\epsvet^l)^\top\right]  \in \mathbb{R}^{n_j\times n_l}$, $j,l = 1, \ldots p$. Denoting then 
\begin{equation}
	\label{eq:yhat_epshat}
\widehat{\yvet} = \begin{bmatrix}[0.9]
	\widehat{\yvet}^1\\ \vdots \\ \widehat{\yvet}^j \\ \vdots \\ \widehat{\yvet}^p
\end{bmatrix} \in \mathbb{R}^m, \quad
\epsvet = \begin{bmatrix}[0.9]
	\epsvet^1\\ \vdots \\ \epsvet^j \\ \vdots \\ \epsvet^p
\end{bmatrix} \in \mathbb{R}^m ,
\end{equation}
the linear relationship linking all the available base forecasts $\widehat{\yvet}$ and the unknown target forecast vector $\yvet$ can be expressed through the multiple regression model:
\begin{equation}\label{mod:stone_unconstrained_selection}
	\widehat{\yvet} = \Kvet\yvet + \epsvet ,
\end{equation}
where $\Kvet = \Lvet\left(\Unovet_p \otimes \Ivet_n\right) \in \{0,1\}^{m \times np}$, and $\Wvet =E(\epsvet\epsvet^\top) \in \mathbb{R}^{m \times m}$ is a p.d. block-matrix
\begin{equation}
\label{eq:Wmatrix}
\Wvet = \begin{bmatrix}[0.9]
	\Wvet_{1} & \cdots & \Wvet_{1j} & \cdots & \Wvet_{1p} \\
	\vdots & \ddots & \vdots & \ddots & \vdots \\
	\Wvet_{j1} & \cdots & \Wvet_{j} & \cdots & \Wvet_{jp} \\
	\vdots & \ddots & \vdots & \ddots & \vdots \\
	\Wvet_{p1} & \cdots & \Wvet_{pj} & \cdots & \Wvet_{p}
\end{bmatrix},
\end{equation}
where $\Wvet_j \equiv \Wvet_{jj}$, $j=1,\ldots,p$. A simple numerical example is shown in \ref{app:simpex}.

\section{Optimal coherent forecast combination}
\label{sec:OCC}

Let $\yvet = \left[y_1 \; \ldots \; y_i \; \ldots \; y_n\right]^\top \in \mathbb{R}^n$ be the target forecast vector of a linearly constrained time series $\yvet_t$ such that $\Cvet\yvet_t = \Zerovet_{(n_u \times 1)}$. We are looking for a coherent forecast vector $\widetilde{\yvet}^c$, i.e., $\Cvet\widetilde{\yvet}^c = \Zerovet_{(n_u \times 1)}$, which exploits all the available base forecasts and improves their accuracy. Forecast combination and reconciliation can be simultaneously dealt with through an optimization-based technique that combines the base forecasts of multiple individual experts $\widehat{\yvet}^j$. By extending the well known procedure of least squares adjustment for a single vector of preliminary incoherent estimates (i.e., base forecasts) \citep{Stone1942-fa, Byron1978-ws, Byron1979-hv, Difonzo2023} to the case of $p \ge 2$ vectors, the coherent combined forecast vector $\widetilde{\yvet}^c$ can be expressed as a weighted sum of base forecasts from individual experts. This interesting result is shown by \autoref{thm:mmse}.

\begin{thm}[Optimal linear coherent forecast combination $\widetilde{\yvet}^c$] 
\label{thm:mmse}
The minimum mean square error (MMSE) linear coherent combined forecast vector $\widetilde{\yvet}^c$, obtained as solution to the linearly constrained quadratic program
\begin{equation}
\label{eq:lcquadprog_be}
\widetilde{\yvet}^c = \argmin_{\yvet}\left(\widehat{\yvet} - \Kvet\yvet\right)^\top\Wvet^{-1}\left(\widehat{\yvet} - \Kvet\yvet\right) \qquad \text{s.t. } \Cvet\yvet=\Zerovet_{(n_u \times 1)} ,
\end{equation}
is given by
\begin{equation}
\label{eq:ytilde_occ}
\widetilde{\yvet}^c = \Psivet^\top\widehat{\yvet} = \Mvet\Omegavet^\top\widehat{\yvet},
\end{equation}
with weight matrix $\Psi^\top = \Mvet\Omegavet^\top \in \mathbb{R}^{n \times m}$, where
\begin{align}
		\Mvet & = \left[\Ivet_n - \Wvet_c\Cvet^\top\left(\Cvet\Wvet_c\Cvet^\top\right)^{-1}\Cvet\right],\label{eq:Mvet}\\
\Omegavet & = \Wvet^{-1}\Kvet\Wvet_c,\label{eq:Omegavet} \\
\Wvet_c & = \left(\Kvet^\top\Wvet^{-1}\Kvet\right)^{-1}.\label{eq:Wc}
\end{align}
\end{thm}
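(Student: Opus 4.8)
The plan is to solve the linearly constrained quadratic program \eqref{eq:lcquadprog_be} by the standard Lagrangian / KKT approach and then simplify the resulting expression into the factored form \eqref{eq:ytilde_occ}. First I would write the Lagrangian $\mathcal{L}(\yvet, \lambdavet) = \left(\widehat{\yvet} - \Kvet\yvet\right)^\top\Wvet^{-1}\left(\widehat{\yvet} - \Kvet\yvet\right) + 2\lambdavet^\top\Cvet\yvet$, differentiate with respect to $\yvet$ and $\lambdavet$, and obtain the first-order conditions $\Kvet^\top\Wvet^{-1}\Kvet\,\yvet = \Kvet^\top\Wvet^{-1}\widehat{\yvet} - \Cvet^\top\lambdavet$ together with $\Cvet\yvet = \Zerovet$. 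Since $\Wvet$ is positive definite and $\Kvet$ has full column rank $n$ (each variable is forecast by at least one expert, so every column of $\Kvet = \Lvet(\Unovet_p \otimes \Ivet_n)$ is nonzero and the block structure guarantees independence), the matrix $\Kvet^\top\Wvet^{-1}\Kvet$ is invertible; call its inverse $\Wvet_c$ as in \eqref{eq:Wc}. This lets me solve the stationarity equation for $\yvet$ in terms of $\lambdavet$.

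Next I would introduce the unconstrained GLS solution $\widehat{\yvet}^c := \Wvet_c \Kvet^\top \Wvet^{-1}\widehat{\yvet} = \Omegavet^\top\widehat{\yvet}$, using the definition \eqref{eq:Omegavet} of $\Omegavet$; with this notation the stationarity condition becomes $\yvet = \widehat{\yvet}^c - \Wvet_c\Cvet^\top\lambdavet$. Substituting into the constraint $\Cvet\yvet = \Zerovet$ gives $\Cvet\widehat{\yvet}^c = \Cvet\Wvet_c\Cvet^\top\lambdavet$, and since $\Cvet$ has full row rank $n_u$ and $\Wvet_c$ is positive definite, $\Cvet\Wvet_c\Cvet^\top$ is invertible, so $\lambdavet = \left(\Cvet\Wvet_c\Cvet^\top\right)^{-1}\Cvet\widehat{\yvet}^c$. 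Back-substituting yields
\[
\widetilde{\yvet}^c = \left[\Ivet_n - \Wvet_c\Cvet^\top\left(\Cvet\Wvet_c\Cvet^\top\right)^{-1}\Cvet\right]\widehat{\yvet}^c = \Mvet\,\Omegavet^\top\widehat{\yvet},
\]
which is exactly \eqref{eq:ytilde_occ} with $\Mvet$ as in \eqref{eq:Mvet}; setting $\Psivet^\top := \Mvet\Omegavet^\top$ completes the identification. I would also note that the objective is strictly convex in $\yvet$ on the feasible affine subspace (the Hessian $2\Kvet^\top\Wvet^{-1}\Kvet$ is positive definite), so this stationary point is the unique global minimizer, and that $\Mvet$ is the familiar oblique projection onto $\ker\Cvet$ along the row space of $\Cvet$ weighted by $\Wvet_c$, so $\Cvet\widetilde{\yvet}^c = \Zerovet$ holds by construction — confirming coherence.

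The routine parts here are the matrix differentiation and the algebraic back-substitution; the only genuine points requiring care — and what I would treat as the main obstacle — are the invertibility claims. Specifically I must justify that $\Kvet$ has full column rank (so $\Wvet_c$ in \eqref{eq:Wc} is well defined), which follows from $p_i \ge 1$ for every $i$ together with the block-diagonal selection structure of $\Lvet$, and that $\Cvet\Wvet_c\Cvet^\top$ is invertible, which follows from $\Cvet$ having full row rank $n_u$ (assumed) and $\Wvet_c$ being symmetric positive definite. These are exactly the structural hypotheses already recorded in \autoref{sec:Notation} and in the setup of model \eqref{mod:stone_unconstrained_selection}, so no new assumptions are needed; I would simply state them explicitly at the point of use. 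A final sentence would remark that the whole derivation is the $p \ge 2$ generalization of the classical single-expert least-squares adjustment of \cite{Stone1942-fa}, recovered by setting $p = 1$, $\Kvet = \Ivet_n$.
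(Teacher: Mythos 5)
Your proposal is correct and follows essentially the same route as the paper: both set up the Lagrangian for \eqref{eq:lcquadprog_be}, derive the first-order conditions, and recover $\widetilde{\yvet}^c = \Mvet\Omegavet^\top\widehat{\yvet}$ — the paper solves the bordered system \eqref{eq:derivate_new} by block inversion while you eliminate $\lambdavet$ sequentially via the unconstrained GLS solution $\widehat{\yvet}^c$, which is an equivalent computation. Your explicit justification of the full column rank of $\Kvet$ and of the invertibility of $\Cvet\Wvet_c\Cvet^\top$ is a welcome addition that the paper leaves implicit.
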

\begin{proof}\let\qed\relax
	See \ref{app:proof}
\end{proof}

\noindent The unbiasedness of $\widetilde{\yvet}^c$ and an important property of its error covariance matrix are discussed in the following Corollary.

\begin{crl}[Unbiasedness of $\widetilde{\yvet}^c$ and a property of its error covariance matrix]\label{crl:unb}
Denoting $\muvet = E\left(\yvet\right)$, the MMSE linear coherent combined forecast vector $\widetilde{\yvet}^c$ is unbiased, i.e., $E\left(\widetilde{\yvet}^c\right) = \muvet$, with error covariance matrix equal to:
\begin{equation}
	\label{eq:Wtilde}
	\widetilde{\Wvet}_c = E\left[\left(\widetilde{\yvet}^c - \yvet\right)\left(\widetilde{\yvet}^c - \yvet\right)^\top\right] = \Mvet\Wvet_c .
\end{equation}
In addition,
\begin{equation}
\label{eq:Wtildebest}
		\Lvet_j\widetilde{\Wvet}_c\Lvet_j^\top \preceq \Lvet_j\Wvet_c\Lvet_j^\top \preceq \Wvet_{j}, \quad j=1,\ldots,p.
\end{equation}
\end{crl}
\begin{proof}\let\qed\relax
	See \ref{app:proof}
\end{proof}

\begin{rml}
The coherency property of $\widetilde{\yvet}^c$ can be easily verified by observing that $\Cvet\Mvet = \Zerovet_{(n_u \times n)}$. Then it follows $\Cvet\widetilde{\yvet}^c = \Cvet\Mvet\Omegavet^\top\widehat{\yvet} = \Zerovet_{(n_u \times 1)}$.	
\end{rml}

\begin{rml}\label{remark:globalforecast}
Denoting $\Psivet = \left[\Psivet_1 \; \ldots \; \Psivet_j \; \ldots \; \Psivet_p\right]^\top$, with $\Psivet_j \in \mathbb{R}^{n \times n_j}$, $j=1, \ldots, p$, the MMSE linear coherent combined forecast vector can be expressed as
\begin{equation}
\label{eq:ytilde_matrx_weights}
\widetilde{\yvet}^c = \Psivet_1\widehat{\yvet}^1 + \ldots + \Psivet_j\widehat{\yvet}^j + \ldots + \Psivet_p\widehat{\yvet}^p =
\displaystyle\sum_{j=1}^{p}\Psivet_j\widehat{\yvet}^j .
\end{equation}
It is worth noting that, due to the unbiasedness of the base forecasts, the weight matrices $\Psivet_j$'s have the following interesting property\footnote{In fact, since $E\left(\widetilde{\yvet}^c\right) = \Psivet^\top E\left(\widehat{\yvet}\right) = \Psivet^\top\Kvet E\left(\yvet\right)=E\left(\yvet\right)$, it follows $\Psivet^\top\Kvet = \Ivet_n$, that corresponds to expression (\ref{eq:Psisumto1}).}:
\begin{equation}
	\label{eq:Psisumto1}
	\displaystyle\sum_{j=1}^{p}\Psivet_j\Lvet_j = \Ivet_n .
\end{equation}
In the balanced case, i.e. $n_j = n$ $\forall j$ and $p_i = p$ $\forall i$, expression (\ref{eq:Psisumto1}) simplifies to $\displaystyle\sum_{j=1}^{p}\Psivet_j = \Ivet_n$. Again in this case, looking at the entries of the square weight matrices $\Psivet_j$ as
$$
\Psivet_j = \begin{bmatrix}[0.9]
	\psi_{11,j} & \ldots & \psi_{1i,j} & \ldots & \psi_{1n,j}\\
	\vdots        & \ddots & \vdots        & \ddots & \vdots \\   
	\psi_{i1,j} & \ldots & \psi_{ii,j} & \ldots & \psi_{in,j}\\
	\vdots        & \ddots & \vdots        & \ddots & \vdots \\   
	\psi_{n1,j} & \ldots & \psi_{ni,j} & \ldots & \psi_{nn,j}
\end{bmatrix}, \quad j=1, \ldots, p,
$$
the reconciled forecast of the $i$-th variable may be expressed as
\begin{equation}\label{eq:ccf_split}
	\widetilde{y}_i^c = \displaystyle\sum_{j=1}^{p}\sum_{k=1}^{n}\psi_{ik,j}\widehat{y}_k^j =
	\displaystyle\sum_{j=1}^{p}\psi_{ii,j}\widehat{y}_i^j +
	\displaystyle\sum_{j=1}^{p}
	\sum_{\substack{k=1 \\ k \ne i}}
	^{n}\psi_{ik,j}\widehat{y}_k^j, \quad i=1, \ldots, n.
\end{equation}
According to expression (\ref{eq:ccf_split}), the multi-task coherent combined forecast $\widetilde{y}_i^c$ is the sum of a `local' forecast combination \citep{Thompson2024}, $\displaystyle\sum_{j=1}^{p}\psi_{ii,j}\widehat{y}_i^j$, computed by using only the base forecasts of the $i$-th variable, plus an adjustment term that takes into account the base forecasts of all the remaining variables. This suggests that $\widetilde{y}_i^c$ can be interpreted as the result of a simple `global' linear forecast combination method \citep{Thompson2024}.
\end{rml}

\begin{rml}\label{remark:twotransform}
Another interesting interpretation of $\widetilde{\yvet}^c$  is obtained noting that expression (\ref{eq:ytilde_occ}) may be re-stated as $\widetilde{\yvet}^c = \Mvet\widehat{\yvet}^c$, where
\begin{equation}
\label{eq:yhatc}
\widehat{\yvet}^c = \Omegavet^\top\widehat{\yvet} .
\end{equation}
In line with the results found by \cite{Sun2004} and \cite{Lavancier2016}, we deduce that $\widehat{\yvet}^c$ is the unbiased MMSE linear multi-task combination forecast of $\yvet$, with error covariance matrix given by expression (\ref{eq:Wc}), i.e.,
\begin{equation*}
\label{eq:Wc_bis}
\Wvet_c = E\left[\big(\widehat{\yvet}^c - \yvet\big)\big(\widehat{\yvet}^c - \yvet\big)^\top\right] = \left(\Kvet^\top\Wvet^{-1}\Kvet\right)^{-1} .
\end{equation*}
In addition, it should be noted that $\widehat{\yvet}^c$ is in general incoherent, i.e., $\Cvet\widehat{\yvet}^c \ne \Zerovet_{(n_u \times 1)}$, and matrix $\Mvet \in \mathbb{R}^{n \times n}$ is an oblique projector from $\mathbb{R}^{n}$ into the linear sub-space  $\mathcal{S} \subset \mathbb{R}^n$ spanned by the relationship $\Cvet\yvet=\Zerovet_{(n_u \times 1)}$. Vector $\widetilde{\yvet}^c$ can thus be seen as the projection of the incoherent forecast vector $\widehat{\yvet}^c$  into $\mathcal{S} = \{\yvet \in \mathbb{R}^n \; | \; \Cvet\yvet = \Zerovet_{(n_u \times 1)}\}$ \citep{Panagiotelis2021}. More precisely, $\widetilde{\yvet}^c$ is the result of two successive transformations, the first mapping $\widehat{\yvet}$ from the base forecasts space $\mathbb{R}^m$ to the combined forecasts subspace $\mathcal{S}_c = \{\widehat{\yvet}^c \in \mathbb{R}^n \; | \; \widehat{\yvet}^c = \Omegavet^\top\widehat{\yvet}\}$, the second projecting $\widehat{\yvet}^c$ into the coherent subspace $\mathcal{S}$. The keypoint is that the error covariance matrix used in the latter projection is related to the error covariance matrix used in the former transformation, i.e., $\Wvet_c = \Fvet\left(\Wvet\right)$, where for any non-singular matrix $\Xvet \in \mathbb{R}^{m \times m}$, $\Fvet : \mathbb{R}^{m \times m} \rightarrow \mathbb{R}^{n \times n}$ is the matrix function $\Fvet\left(\Xvet\right) = \left(\Kvet^\top\Xvet^{-1}\Kvet\right)^{-1}$.
\end{rml}

\begin{table}[!tb]
	\centering
	\small
	\caption{Equivalent formulations of Theorem \ref{thm:mmse} distinct by model representation and organization of base forecasts. The symbols are fully described in \autoref{tab:notation}. The proof for the zero-constrained representation with by-expert formulation is in \ref{app:proof}. The remaining proofs are in \ref{app:alternative_proofs}.}\label{tab:equiv_solutions}
		\begin{tabular}{cc|p{0.415\linewidth}|p{0.415\linewidth}}
		\toprule
		& & \multicolumn{2}{c}{\textbf{Model representation}}\\
		& & \multicolumn{1}{c|}{\textit{zero-constrained}} & \multicolumn{1}{c}{\textit{structural}} \\
		\midrule
		\parbox[t]{4mm}{\multirow{13}{*}{\rotatebox[origin=c]{90}{\textbf{Base forecasts' organization}}}} & \parbox[t]{4mm}{\rotatebox[origin=c]{90}{\textit{by-expert}}} &
		$\begin{aligned}
			&\widehat{\yvet} = \Kvet\yvet + \epsvet, \; \text{s.t. }\Cvet\yvet = \Zerovet \\
			&\begin{cases}
				\widetilde{\yvet}^c = \displaystyle\argmin_{\yvet} \left(\widehat{\yvet} - \Kvet\yvet\right)^\top\Wvet^{-1}\left(\widehat{\yvet} - \Kvet\yvet\right)\\
				\text{s.t. }\Cvet\yvet = \Zerovet
			\end{cases}\\
			&\widetilde{\yvet}^c =\Mvet\widehat{\yvet} = \Mvet\Omegavet^\top\widehat{\yvet}\\
			&\Mvet = \left[\Ivet_n - \Wvet_c\Cvet^\top\left(\Cvet\Wvet_c\Cvet^\top\right)^{-1}\Cvet\right] \\
			&\Omegavet = \Wvet^{-1}\Kvet\Wvet_c \\
			&\Wvet_c = \left(\Kvet^\top\Wvet^{-1}\Kvet\right)^{-1}, \\
			& \widetilde{\Wvet}_c = \Mvet\Wvet_c \\
			&\Lvet_j\widetilde{\Wvet}_c\Lvet_j^\top \preceq \Lvet_j\Wvet_c\Lvet_j^\top \preceq \Wvet_{j}, \quad j=1, \ldots, p
		\end{aligned}$
		& 
		$\begin{aligned}
			&\widehat{\yvet} = \Kvet\Svet\bvet + \epsvet \\
			&\begin{cases}
				\Gvet = \displaystyle\argmin_{\Gvet} \text{tr}\left(\Svet\Gvet\Wvet\Gvet^\top \Svet^\top\right)\\
				\text{s.t. }\Gvet\Kvet\Svet = \Ivet_{n_b}
			\end{cases}\\
			&\widetilde{\yvet}^c = \Svet\Gvet\widehat{\yvet} = \Svet\widetilde{\bvet}, \\ &\widetilde{\bvet} = \Gvet\widehat{\yvet}\\
			&\Gvet = \left(\Svet^\top\Wvet^{-1}_c\Svet\right)^{-1}\Svet^\top\Kvet^\top\Wvet^{-1} \\
			&\Wvet_c = \left(\Kvet^\top\Wvet^{-1}\Kvet\right)^{-1}\\
			&\widetilde{\Wvet}_c = \Svet\left(\Svet^\top\Wvet^{-1}_c\Svet\right)^{-1}\Svet^\top\\
			&\Lvet_j\widetilde{\Wvet}_c\Lvet_j^\top \preceq \Wvet_{j}, \quad j=1, \ldots, p
		\end{aligned}$ \\
		\cmidrule{2-4}
		& \parbox[t]{4mm}{\rotatebox[origin=c]{90}{\textit{by-variable}}} & 
		$\begin{aligned}
			&\widehat{\yvet}_{\text{bv}} = \Jvet\yvet + \epsvet_{\text{bv}}, \; \text{s.t. }\Cvet\yvet = \Zerovet \\
			&\begin{cases}
				\widetilde{\yvet}^c = \displaystyle\argmin_{\yvet} \left(\widehat{\yvet} - \Jvet\yvet\right)^\top\Sigmavet^{-1}\left(\widehat{\yvet} - \Jvet\yvet\right)\\
				\text{s.t. }\Cvet\yvet = \Zerovet
			\end{cases}\\
			&\widetilde{\yvet}^c = \Mvet_{\text{bv}}\widehat{\yvet}_{\text{bv}} = \Mvet\Gammavet^\top\widehat{\yvet}\\
			&\Mvet = \left[\Ivet_n - \Sigmavet_c\Cvet^\top\left(\Cvet\Sigmavet_c\Cvet^\top\right)^{-1}\Cvet\right] \\
			&\Gammavet = \Sigmavet^{-1}\Jvet\Sigmavet_c \\
			&\Sigmavet_c = \left(\Jvet^\top\Sigmavet^{-1}\Jvet\right)^{-1}, \\ &
			\widetilde{\Wvet}_c = \Mvet\Sigmavet_c \\
			&\Lvet_j\widetilde{\Wvet}_c\Lvet_j^\top \preceq \Lvet_j\Wvet_c\Lvet_j^\top \preceq \Wvet_{j}, \quad j=1, \ldots, p
		\end{aligned}$
		&
		$\begin{aligned}
			&\widehat{\yvet}_{\text{bv}} = \Jvet\Svet\bvet + \epsvet_{\text{bv}} \\
			&\begin{cases}
				\Gvet_{\text{bv}} = \displaystyle\argmin_{\Gvet} \text{tr}\left(\Svet\Gvet\Sigmavet\Gvet^\top \Svet^\top\right)\\
				\text{s.t. }\Gvet\Jvet\Svet = \Ivet_{n_b}
			\end{cases}\\
			&\widetilde{\yvet}^c = \Svet\Gvet_{\text{bv}}\widehat{\yvet}_{\text{bv}} = \Svet\widetilde{\bvet}, \\& \widetilde{\bvet}=\Gvet_{\text{bv}}\widehat{\yvet}_{\text{bv}}\\
			&\Gvet_{\text{bv}} = \left(\Svet^\top\Sigmavet_c^{-1}\Svet\right)^{-1}\Svet^\top\Jvet^\top\Sigmavet^{-1}\\
			&\Sigmavet_c = \left(\Jvet^\top\Sigmavet^{-1}\Jvet\right)^{-1}\\
			&\widetilde{\Wvet}_c = \Svet\left(\Svet^\top\Sigmavet_c^{-1}\Svet\right)^{-1}\Svet^\top\\
			&\Lvet_j\widetilde{\Wvet}_c\Lvet_j^\top \preceq \Wvet_{j}, \quad j=1, \ldots, p
		\end{aligned}$ \\
		\bottomrule
	\end{tabular}\\
\end{table}

\begin{rml}
\autoref{thm:mmse} was proved by adopting a constrained data fitting approach with base forecasts organized by-expert. Indeed, the same result may be obtained in other ways, depending on the base forecasts' organization (by-expert or by-variable), and the problem formulation (constrained data fitting problem or unconstrained structural form). In these equivalent proofs, summarized in \autoref{tab:equiv_solutions} and developed in \ref{app:alternative_proofs}, the choice of the forecast organization gives a different emphasis to the `objects' that are combined in the final formula, i.e., base forecasts either from the $j$-th expert, $\widehat{\yvet}^j$, collected into $\widehat{\yvet}$, or of the $i$-th variable $\widehat{\yvet}_i$, concatenated into vector $\widehat{\yvet}_{\text{bv}}$.

\noindent The difference between constrained projection and structural approaches is more interesting from an interpretative point of view. Each approach has, in fact, some distinctive interesting features.
\begin{itemize}[nosep, leftmargin=*]
\item The structural approach is based on a linear model that directly incorporates the constraints on the final forecasts, whose solution is found by minimizing the trace of the error covariance matrix $\widetilde{\Wvet}_c$. For a similar result in the cross-sectional forecast reconciliation, see \cite{Wickra2019} and \cite{Ando2024}. The coherent combined forecasts, either $\widetilde{\yvet}^c = \Svet\Gvet\widehat{\yvet}$ or $\widetilde{\yvet}^c = \Svet\Gvet_{\text{bv}}\widehat{\yvet}_{\text{bv}}$, depend on the organization of the adopted base forecasts. In both cases, first the combination forecast for a set of `free' variables is computed, i.e., $\widetilde{\bvet}^c = \Gvet\widehat{\yvet} = \Gvet_{\text{bv}}\widehat{\yvet}_{\text{bv}}$, and then the whole vector of coherent forecasts is obtained in a bottom-up fashion, through pre-multiplication by the structural matrix $\Svet$: $\widetilde{\yvet}^c = \Svet\widetilde{\bvet}^c$.
\item As shown in \autoref{remark:twotransform}, the zero-constrained approach “splits" the coherent forecast combination methodology into two strictly linked steps: in the former, the MMSE combination of multiple forecast vectors is computed, i.e., $\widehat{\yvet}^c$, in the latter this incoherent forecast vector is reconciled through an oblique projection onto a coherent subspace of $\mathbb{R}^n$, making use of $\Wvet_c$, the error covariance matrix of $\widehat{\yvet}^c$.
\end{itemize}
\end{rml}

\begin{rml} Expression (\ref{eq:Wtildebest}) states that the base forecasts produced by a single expert ($\widehat{\yvet}^j$) are always not better (i.e., the error covariance matrix is not `smaller') than the corresponding multi-task combined forecast vector $\Lvet_j\widehat{\yvet}^c$, which in turn is not better than the corresponding subset of $\widetilde{\yvet}^c$, i.e., $\Lvet_j\widetilde{\yvet}^c$. Moreover, since $\Lvet_j = \Ivet_n$ when $n_j=n$, then in the balanced case we have $\widetilde{\Wvet}_c \preceq \Wvet_c \preceq  \Wvet_{j}$. In other terms - assuming that the base forecasts are unbiased and the error covariance matrix $\Wvet$ is known - simultaneously considering multiple incoherent forecasts and the constraints operating on the component of a multivariate time series, does not worsen (and in fact, hopefully improves on) the precision of both original (base) and combined (incoherent) forecasts.
\end{rml}
		
In summary, according to \autoref{thm:mmse} and \autoref{crl:unb}, the problem of coherently combining the forecasts, produced by $p \ge 2$ experts, of the components of a linearly constrained multiple time series, is specified as that of determining the best estimator within the class of all unbiased estimators of the mean of $\yvet$ obtainable \textit{via} linearly combining all the available base forecasts, the term ``best'' being used in the usual sense of the nonnegative definite partial ordering between the dispersion matrices of the forecast errors. Furthermore, the results have been obtained under the commonly assumed hypotheses that the base forecasts are unbiased, and that the joint dispersion matrix of the $p$ forecast experts is known.

\section{On the covariance matrix used by the coherent combination method}
\label{sec:on the covmat}
The choice of the forecast errors' covariance matrix $\Wvet$ greatly influences the combination and reconciliation processes, as its properties determine how the base forecasts are combined, and then the nature of the final coherent forecasts. In the following, we first consider two notable patterns of this matrix, where the base forecasts' error are assumed uncorrelated either across experts or across variables. Then, we deal with practical estimation issues, following an approach widely used in the context of forecast reconciliation, which makes use of the in-sample base forecasts' errors, when available, to estimate matrix $\Wvet$.

\subsection{Uncorrelated forecast errors: block-diagonal $\mathbf{W}$ and $\mathbf{\Sigma}$}\label{sec:bdW}

If the forecast errors are uncorrelated across the $p$ experts, the covariance matrix $\Wvet$ has a block-diagonal form: $\Wvet = \text{Diag}\left(\Wvet_1,\ldots,\Wvet_j,\ldots,\Wvet_p\right)$. In this case, $\widetilde{\yvet}^c$ is found as the solution to the linearly constrained quadratic program
\begin{equation}
	\label{eq:lossWbd}
	\widetilde{\yvet}^c = \argmin_{\yvet} \sum_{j = 1}^p
	\left(\widehat{\yvet}^j - \yvet\right)^\top\Wvet_{j}^{-1}\left(\widehat{\yvet}^j - \yvet\right) \quad
	\text{s.t. } \Cvet\yvet = \Zerovet_{(n_u \times 1)},
\end{equation}
where the global loss function in expression (\ref{eq:lcquadprog_be}) simplifies to the sum of $p$ quadratic loss functions.

According to the traditional single-task forecast combination approach \citep{Bates1969,Thompson2024}, for each single variable $i$ the combination weights are obtained by exploiting the relationships between $\widehat{y}_i^{j}$ and $\widehat{y}_i^{l}$, with $\{j, l\} \in \big\{(j, l) \mid 1\leq j, l \leq p, \; j \neq l\big\}$. We will show that this is a particular case of our general framework. To this end, it is convenient to adopt an organization `by-variable' of the base forecasts (see \autoref{sec:be_vs_bv}), that is $\widehat{\yvet}_{\text{bv}} = \begin{bmatrix} \widehat{\yvet}_1^\top & \ldots & \widehat{\yvet}_i^\top & \ldots & \widehat{\yvet}_n^\top \end{bmatrix}^\top \in \mathbb{R}^m$, where $\widehat{\yvet}_i \in \mathbb{R}^{p_i}$, $i=1,\ldots,n$, is the vector of the base forecasts available for the $i$-th variable.

Denote $\Sigmavet = E\left[\epsvet_{\text{bv}}\epsvet_{\text{bv}}^\top\right]\in \mathbb{R}^{m \times m}$ the covariance matrix of the forecast errors organized by-variable $\widehat{\epsvet}_{\text{bv}} = \begin{bmatrix} \epsvet_1^\top & \ldots & \epsvet_i^\top & \ldots & \epsvet_n^\top \end{bmatrix}^\top \in \mathbb{R}^m$, and $\Jvet = \Pvet \Kvet \in \{0,1\}^{m \times n}$. Noting that $\Sigmavet = \Pvet\Wvet\Pvet^\top$, equivalent by-variable formulations of $\widetilde{\yvet}^c$ in expression (\ref{eq:ytilde_occ}) are presented in \autoref{tab:equiv_solutions}. If we assume uncorrelated errors across the variables, the covariance matrix has a block-diagonal structure, i.e., $\Sigmavet = \mbox{Diag}\left(\Sigmavet_{1}, \dots, \Sigmavet_{i}, \dots, \Sigmavet_{n}\right)$, where $\Sigmavet_{i}$, $i=1,\ldots,n$, is the $(p_i \times p_i)$ forecast error covariance matrix for the $i$-th variable. In this case, after some simple calculations, we obtain $\widehat{y}^c_i = \displaystyle\frac{\Unovet_{p_i}^\top\Sigmavet^{-1}_{i}}{\Unovet_{p_i}^\top\Sigmavet^{-1}_{i}\Unovet_{p_i}}\widehat{\yvet}_{i}$, $i=1,\ldots,n$. In summary, if the forecast errors are uncorrelated across variables, each entry of the global combined forecast vector $\widehat{\yvet}^c$ corresponds to the single-variable optimally combined forecast: $\widehat{y}_{i}^c = \gammavet_i^\top\widehat{\yvet}_{i}$, where $\gammavet_i = \displaystyle\frac{\Sigmavet^{-1}_{i}\Unovet_{p_i}}{\Unovet_{p_i}^\top\Sigmavet^{-1}_{i}\Unovet_{p_i}}$, $i=1, \ldots, n$, a well-known result dating back to \cite{Newbold1974}. However, when $\widetilde{\yvet}^c$ is computed through the projection of the incoherent combined forecast vector $\widehat{\yvet}^c$  onto the linear space spanned by the constraints, i.e., when $\widehat{\yvet}^c$ is pre-multiplied by the projection matrix $\Mvet$, the coherent combined forecast for each \textbf{single} variable is computed by combining the base forecasts of \textbf{all} variables, not only those of the variable in hand. Put in other terms, whereas $\widehat{y}_i^c$ is the result of a local forecast combination, its coherent counterpart $\widetilde{y}_i^c$ is obtained as a global forecast combination.
		
\subsection{Feasible estimates of the covariance matrix}
\label{sec:Wfeasible}
In the previous sections we have always considered the covariance matrix as known, but in practice this rarely happens, and this matrix must be estimated somehow. For a forecast horizon $h=1$, it seems sensible to exploit (if available) the in-sample forecast errors (residuals), given by\footnote{Equivalently, one may use the by-variable version of the in sample residuals, i.e., $\widehat{\epsvet}_{i,t} = y_{it}\Unovet_{p_j} - \widehat{\yvet}_{i,t} \in \mathbb{R}^{p_i}$, $i=1,\ldots,n$.}:
$$
\widehat{\epsvet}_{t}^j = \Lvet_j\yvet_{t} - \widehat{\yvet}_{t}^j \in \mathbb{R}^{n_j}, \quad j=1,\ldots,p,
$$
where $t = 1, \ldots T$, is the time index running on the training set used to estimate the base forecasts. This is a rather common practice in forecast reconciliation \citep{Hyndman2016, Wickra2019}, particularly when the available observations of the time series to be forecast are not long, and splitting the dataset into training, validation and test may result in a too short validation set. As for the classical combination, the use of an in-sample fit to determine the weights of the combination is not uncommon \citep{Kapetanios2008-lc, Banbura2010-es}.

Denoting $\widehat{\epsvet}_t = \begin{bmatrix} \widehat{\epsvet}_{t}^1{^{\top}} & \ldots & \widehat{\epsvet}_{t}^j{^{\top}} & \ldots & \widehat{\epsvet}_{t}^p{^{\top}} \end{bmatrix}^\top$, $t=1, \ldots, T$, the $(m \times 1)$ vector containing the in-sample forecast errors, a natural estimator of $\Wvet$ (and $\Sigmavet$) is the sample forecast MSE matrix: $\widehat{\Wvet} = \displaystyle\frac{1}{T}\displaystyle\sum_{t=1}^{T}\widehat{\epsvet}_t\widehat{\epsvet}_t^\top$ (and $\widehat{\Sigmavet} = \Pvet\widehat{\Wvet}\Pvet^\top$). When dealing with $h>1$, $\widehat{\Wvet}_h$ is challenging to estimate (e.g., multi-step-ahead errors will not be independent) and we assume that $\widehat{\Wvet}_h \propto \widehat{\Wvet}$ as proposed in \cite{Wickra2019} and \cite{Ben_Taieb2021}. Moreover, when $m >> T$, $\widehat{\Wvet}$ and $\widehat{\Sigmavet}$ are not well defined, and some regularization has to be adopted in order to recover a stable, non-singular covariance matrix. The natural choice is to consider the shrunk versions of the sample covariance matrices $\widehat{\Wvet}$ and $\widehat{\Sigmavet}$ towards their diagonal versions, that is:
\begin{equation}
	\label{eq:Wshr}
	\widehat{\Wvet}_{\text{shr}} = \widehat{\lambda}\left(\Ivet_n \odot \widehat{\Wvet}\right) + \left(1 - \widehat{\lambda}\right)\widehat{\Wvet} \; \rightarrow \;
	\widehat{\Sigmavet}_{\text{shr}} = \Pvet \widehat{\Wvet}_{\text{shr}} \Pvet^\top,
\end{equation}
where $\odot$ denotes the Hadamard product and $\widehat{\lambda}$ is an estimate of the coefficient of shrinkage intensity $\lambda$, $0\leq \lambda\leq 1$ \citep{Schafer2005}.

An alternative approach involves assuming uncorrelated errors across either experts or variables, which results in block-diagonal covariance matrices (see \autoref{sec:bdW}): in the former case $\widehat{\Wvet}_{\text{bd}} = \mbox{Diag}\left(\widehat{\Wvet}_{1}, \dots, \widehat{\Wvet}_{j}, \dots, \widehat{\Wvet}_{p}\right)$, with $\widehat{\Wvet}_{j} = \displaystyle\frac{1}{T}\sum_{t=1}^{T}\widehat{\epsvet}_t^j\widehat{\epsvet}_t^j{^{\top}} \in \mathbb{R}^{n_j \times n_j}$, $j=1,\ldots,p$. In the latter case, $\widehat{\Sigmavet}_{\text{bd}} = \mbox{Diag}\left(\widehat{\Sigmavet}_{1}, \dots, \widehat{\Sigmavet}_{i}, \dots, \widehat{\Sigmavet}_{n}\right)$, with $\widehat{\Sigmavet}_{i} =  \displaystyle\frac{1}{T}\sum_{t=1}^{T}\widehat{\epsvet}_{i,t}\widehat{\epsvet}_{i,t}^\top \in \mathbb{R}^{p_i \times p_i}$, $i=1,\ldots,n$. When $n_j > T$ for some expert $j$, or $p_i > T$ for some variable $y_i$, the estimates $\widehat{\Wvet}_j$ and $\widehat{\Sigmavet}_i$ are not well defined. In this case, as usual, we can resort to the shrunk versions of these two matrices given by, respectively:
\begin{align*}
	\widehat{\Wvet}_{j,\text{shr}} & = \widehat{\lambda}_j\left(\Ivet_n \odot \widehat{\Wvet}_j \right) + \left(1 - \widehat{\lambda}_j\right)\widehat{\Wvet}_j, & j=1,\ldots,p\\
	\widehat{\Sigmavet}_{i,\text{shr}} & = \widehat{\nu}_i\left(\Ivet_n \odot \widehat{\Sigmavet}_i \right) + \left(1 - \widehat{\nu}_i\right)\widehat{\Sigmavet}_i, & i=1,\ldots,n
\end{align*}
where $\widehat{\lambda}_j$ and $\widehat{\nu}_i$ are estimates of the coefficients of shrinkage intensity, respectively, $\lambda_j$ and $\nu_i$, $0\leq \lambda_j\leq 1$ and $0\leq \nu_i \leq 1$. Then, other two estimators can be considered: $\widehat{\Wvet}_{\text{bd-shr}} = \mbox{Diag}\left(\widehat{\Wvet}_{1,\text{shr}}, \dots, \widehat{\Wvet}_{p,\text{shr}}\right)$ and $\widehat{\Sigmavet}_{\text{bd-shr}} = \mbox{Diag}\left(\widehat{\Sigmavet}_{1,\text{shr}}, \dots, \widehat{\Sigmavet}_{n,\text{shr}}\right)$.

\section{A simulation experiment}
\label{sec:SimulationExperiment}
In order to assess the performance of the proposed approach, a simulation experiment is run. The simulation framework, designed for a hierarchical multivariate time series, extends the univariate simulation experiment of \cite{Capistran2009} to a multivariate setting that accommodates for cross-sectional dependencies and linear constraints across series. The hierarchical structure of the series is the one on the left panel of \autoref{fig:example}, including a total of  $n = 7$  variables, with  $n_b = 4$  bottom series, with aggregation and structural matrices $\Avet$ and $\Svet$, respectively, given by (\ref{eq:SimpleExampleMatrices}).

Denote  $t = 0, \dots, (T-1)$,  and $i = 1, \dots, n_b$, the indices running on, respectively, time periods and bottom series. The observed bottom time series are simulated from a simple two-factor model\footnote{This formulation is equivalent to expression (9) in \cite{Capistran2009}, where $\mu_y = 0$ and $\beta_{y1}=\beta_{y2}= 1$ in all settings. We have chosen to omit these parameters to simplify the notation.}: $b_{i,t+1} = F_{1, i, t+1} + F_{2, i, t+1} + \eta_{i, t+1}$, $i= 1, \ldots, n_b$, where factors $F_{1,i}$ and $F_{2,i}$ come from a Vector AutoRegressive model of order 1, $\Fvet_{i, t+1} = \Phivet_F\Fvet_{i, t} + \xivet_{i,t+1}$, characterized by a diagonal coefficients' matrix $\Phivet_F$ and independent innovations $ \xivet_{i,t+1} \sim \mathcal{N}\left(\Zerovet_{(2 \times 1)}, \Ivet_2\right)$. In addition, $\etavet_{t+1} = \begin{bmatrix} \eta_{1, t+1} & \dots & \eta_{n_b, t+1} \end{bmatrix}^\top \sim \mathcal{N}_{n_b}(\Zerovet_{n_b \times 1}, \sigma^2_{\eta}\Rvet)$, where $\sigma^2_{\eta} = 1$ and $\Rvet$ is the correlation matrix of the bottom series obtained as the closest positive definite matrix to $\check{\Rvet}$, with $\check{\rho}_{i,k} = U(-1, 1)$, $i\neq k$, with $U\left(-1,1\right)$ denoting a Uniform distribution in $(-1,1)$, and $\check{\rho}_{i,i} = 1$, $i,k = 1, \ldots, n_b$. The $n$-variate linearly constrained time series $\yvet_{t+1}$ is given by $\yvet_{t+1} = \Svet \bvet_{t+1}$, where $\bvet_{t+1} = \begin{bmatrix} b_{1,t+1} & \dots & b_{n_b,t+1} \end{bmatrix}^\top$.

To generate the base forecasts from different experts, the systematic part of the forecast from the $j$-th expert for the $i$-th bottom variable at time $t+1$ is obtained as $\check{b}^{j}_{i,t+1} = \mu_j + \beta_{j,1}F_{1,i,t+1} + \beta_{j,2} F_{2, i, t+1}$, $i=1, \ldots, n_b$, $j=1, \ldots, p$,where $\betavet_j = \begin{bmatrix}\beta_{j,1} & \beta_{j,2} \end{bmatrix}^\top$ is the vector of factor loadings for the series’ components. The base forecast vector $\widehat{\yvet}^{j}_{t+1}$ produced by the $j$-th expert for the complete set of $n$ variables (i.e., balanced case), is
\begin{equation}\label{eq:factory}
	\widehat{\yvet}^{j}_{t+1} = \Svet  \check{\bvet}^{j}_{t+1} + \bm{\varepsilon}^{j}_{t+1} ,
\end{equation}
where $\bm{\varepsilon}^{j}_{t+1} = \begin{bmatrix} \varepsilon_{1, t+1}^{j} & \dots & \varepsilon_{n, t+1}^{j} \end{bmatrix}^\top \sim \mathcal{N}_n(\Zerovet_{(n \times 1)}, \Gammavet_{\varepsilon})$ and $\Gammavet_{\varepsilon} = \Dvet^{-1/2}\Thetavet\Dvet^{-1/2}$, with variance proportional to the number of bottom variables involved in the aggregation defining the nodes at each  hierarchy level, i.e., $\Dvet = \text{Diag}(\sigma_j^2 \Svet\Unovet_{n_b})$, that in our setting is equal to 
$$
\Dvet = \sigma^2_j\begin{bmatrix}[0.9]
4 & 0 & 0 & 0 & 0 & 0 & 0 \\
0 & 2 & 0 & 0 & 0 & 0 & 0 \\
0 & 0 & 2 & 0 & 0 & 0 & 0 \\
0 & 0 & 0 & 1 & 0 & 0 & 0 \\
0 & 0 & 0 & 0 & 1 & 0 & 0 \\
0 & 0 & 0 & 0 & 0 & 1 & 0 \\
0 & 0 & 0 & 0 & 0 & 0 & 1
\end{bmatrix}, \quad j =1, \ldots, p,
$$
and two covariance structures: either uncorrelated errors\footnote{Due to space constraints, the results with uncorrelated errors are provided in \ref{app:sim}. The main conclusions drawn do not differ substantially from those valid for the correlated case.}, i.e., $\Thetavet = \Ivet_n$, or  $\Thetavet$ equal to the closest positive definite matrix $\check{\Thetavet}$ with entries $\check{\theta}_{i,k} = U(-1, 1)$, $i\neq k$, and $\check{\theta}_{i,i} = 1$, $i,k=1, \ldots, n_b$. 

\begin{table}[!tbp]
	\centering
		\caption{Each setting specifies values for the parameters in the multivariate forecast model, including mean terms $\mu_j$, factor loadings $\betavet_i$, errors' variance $\sigma_i^2$, autoregressive matrix $\Phivet_F$, and correlation $\rho_{i,k}$, $i \neq k$, for the bottom time series. Settings vary systematically to assess model performance under different assumptions, including uniform ($U$), beta ($B$), inverse gamma ($IG$), and normal ($\mathcal{N}$) distributions. Settings 1--5 replicate \cite{Capistran2009}, while Setting 6 includes a bias component $\mu_j \neq 0$ for each expert.}
	\label{tab:sett}
	\footnotesize
	\begin{tabular}{c|cccc}
		\toprule
		Setting & $\mu_j$ & $\betavet_j$& $\sigma_j^2$ & $\Phivet_F$ \\
		\midrule
		1 & 0 &$\Unovet_2$ & 1 & $\Zerovet_{2 \times 2}$ \\[0.5em]
		2 & 0 & $0.5 \cdot\Unovet_2$ & 1 & $\Zerovet_{2 \times 2}$ \\[0.5em]
		3 & 0 & $0.5 \cdot\Unovet_2$ & 1 & $0.9 \cdot\Ivet_2$ \\[0.2em]
		4 & 0 & $\begin{bmatrix} B(1, 1) \\ B(1, 1) \end{bmatrix}$ &  1 & $\Zerovet_{2 \times 2}$ \\[1em]
		5 & 0 & $0.5 \cdot\Unovet_2$ & $IG(5,5)$ & $\Zerovet_{2 \times 2}$ \\[0.5em]
		6 & $\mathcal{N}(0,1)$ & $0.5 \cdot \Unovet_2$ & 1 & $\Zerovet_{2 \times 2}$ \\
		\bottomrule
	\end{tabular}
\end{table}

Following \cite{Capistran2009}, we assume that $E[\varepsilon_{i, t+1}^{j}\varepsilon_{k, t+1}^{l}] = 0$ for $j \neq l$, and that $E[\varepsilon_{n_u + i, t+1}^{j}\eta_{i, t+1}] = 0$,  $j = 1, \dots, p$ and $i = 1, \dots, n_b$. This setup represents a scenario where, at time $t$, forecasters receive noisy signals that are imperfectly correlated with future realizations of the factors $F_{1,i, t+1}$ and $F_{2,i, t+1}$. For the calculation of covariance matrices in both reconciliation and combination processes, we utilize $N$ observations, with $T = N + 100$, where the last $100$ observations are reserved as test set. In each experimental configuration (see \autoref{tab:sett}), we examine different values for the number of observations, $N \in \{50, 100, 200\}$, and the number of experts, $p \in \{4, 10, 20\}$, running 500 replications per setting to ensure robustness in results. In addition, we consider two different frameworks as \cite{Capistran2009}:
\begin{itemize}[nosep, leftmargin=0.5cm]
	\item \textbf{Balanced panel of forecasts}: no missing values, allowing for consistent forecasts across all variables and experts.
	\item \textbf{Unbalanced panel of forecasts}: we classify experts according to their participation frequency, distinguishing between frequent (40\%) and infrequent participants. Using the transition probabilities proposed in \cite{Capistran2009}, we generate a binary participation matrix for each variable\footnote{In the optimal coherent combination approach utilizing a block-diagonal shrunk error covariance matrix, the different number of residuals across experts and variables is addressed through the covariance shrinkage parameters $\lambda_j$, $j = 1, \dots, p$, as detailed in Section 3.}, that is applied to the fully populated  matrix of forecasts simulated by expression (\ref{eq:factory}). This results in a realistic, unbalanced panel structure that allows us to examine the impact of varying participation frequencies on the accuracy and robustness of different coherent combination forecast approaches.
\end{itemize}

These frameworks provide a comprehensive environment for testing forecast performance under different data conditions, allowing us to assess robustness across scenarios. The forecast accuracy is evaluated using the Average Relative Mean Absolute ($AvgRelMAE$)\footnote{Detailed results, including a forecast evaluation using the Average Relative Mean Squared Error, are presented in \ref{app:sim}. Notably, the use of the squared error metric does not change the main conclusions.}, given by 
$$
AvgRelMAE^{app} = \left(\prod_{s = 1}^{500} \prod_{i = 1}^{n} \frac{MAE_{s,i}^{app}}{MAE_{s,i}^{\text{ew}}}\right)^{\frac{1}{500n}} \; \text{with} \;\; MAE_{s,i}^{app} = \displaystyle\frac{1}{100}\sum_{q = 1}^{100} \left|y_{i,s,q}-\overline{y}_{i,s,q}^{\,app}\right|,
$$
where $i=1,...,n$ denotes the variable, $app$ is the approach used, $y_{i,s,q}$ is the observed value and $\overline{y}_{i,s,q}^{\,app}$ is the forecast value using the $app$ approach (either coherent or incoherent, see \autoref{tab:approaches}). 

\begin{table}[!tbp]
	\centering
		\caption{Summary of forecasting approaches used in the simulation (\autoref{sec:SimulationExperiment}) and in the forecasting experiment on the Australian daily electricity generation time series (\autoref{sec:energy}).
		For single-model reconciliation and coherent combination sequential approaches, the shrunk in-sample MSE 
		matrix is used for the reconciliation.}
	\label{tab:approaches}
	\footnotesize
	\setlength{\tabcolsep}{3pt}
	\resizebox{\linewidth}{!}{\begin{tabular}{c|c}
		\toprule
		\begin{minipage}[t]{0.45\linewidth}
			\textbf{Approach \& description}
		\end{minipage} &  \begin{minipage}[t]{0.55\linewidth}
			\textbf{Approach \& description}
		\end{minipage} \\
		\midrule
		\begin{minipage}[t]{0.45\linewidth}
			\textit{Base (incoherent forecasts)}
			\begin{itemize}[leftmargin=!, nosep, align = left, labelwidth=1.1cm]
				\item[base$^\ast$] Best base forecasts (\autoref{sec:SimulationExperiment})
				\item[tbats] Exponential smoothing state space model with Box-Cox transformation, ARMA errors, Trend and Seasonal components (\autoref{sec:energy})
			\end{itemize}
			\vspace{1ex}
			\hrule
			\vspace{1ex}
			\textit{Single model reconciliation}
			\begin{itemize}[leftmargin=!, nosep, align = left, labelwidth=1.1cm]
				\item[base$^\ast_{\text{shr}}$] Cross-sectional reconciliation of the best base forecasts (\autoref{sec:SimulationExperiment})
				\item[base$_{\text{shr}}$] Best cross-sectional reconciliation of base forecasts (\autoref{sec:SimulationExperiment})
				\item[tbats$_{\text{shr}}$] Cross-sectional reconciliation of tbats base (\autoref{sec:energy})
			\end{itemize}
		\end{minipage} &  \begin{minipage}[t]{0.55\linewidth}
			\textit{Single-task combination} (\textit{incoherent forecasts}, Sections \ref{sec:SimulationExperiment} and \ref{sec:energy})
			\begin{itemize}[leftmargin=!, nosep, align = left, labelwidth=1.1cm]
				\item[ew] Equal-weighted average
				\item[ow$_{\text{var}}$] Weighted average, optimal weights inversely proportional to the MSE's
				\item[ow$_{\text{cov}}$] Weighted average, optimal weights in the unit simplex \citep{Conflitti2015} computed using the MSE matrix 
			\end{itemize}
			\vspace{1ex}
			\hrule
			\vspace{1ex}
			\textit{Coherent combination} (Sections \ref{sec:SimulationExperiment} and \ref{sec:energy})
			\begin{itemize}[leftmargin=!, nosep, align = left, labelwidth=1.1cm]
				\item[src] Sequential reconciliation-then-combination with ew
				\item[scr$_{\text{ew}}$] Sequential ew combination-then-reconciliation
				\item[scr$_{\text{var}}$] Sequential ow$_{\text{var}}$ combination-then- reconciliation
				\item[scr$_{\text{cov}}$] Sequential ow$_{\text{cov}}$ combination-then- reconciliation
				\item[occ$_{\text{be}}$] Optimal coherent combination using a \textit{by-expert} block-diagonal shrunk error covariance matrix
			\end{itemize}
		\end{minipage} \\
		\bottomrule
\end{tabular}}
\end{table}

\begin{table}[!tbp]
	\centering
	\footnotesize
	\def\arraystretch{1}
	\setlength{\tabcolsep}{2pt}
	\caption{AvgRelMAE for the simulation experiment. Benchmark approach: equal-weighted average (ew). Bold entries identify the best performing approaches, italic entries identify the second best and red color denotes forecasts worse than the benchmark.}\label{tab:sim}
\resizebox*{0.9\linewidth}{!}{
	
\begin{tabular}[t]{cc>{}c|cccccccccc>{}c|ccccccc}
\toprule
\multicolumn{3}{c}{\textbf{ }} & \multicolumn{11}{c}{\textbf{Balanced panel of forecasts}} & \multicolumn{7}{c}{\textbf{Unbalanced panel of forecasts}} \\
\textbf{Sett.} & \textbf{\textit{p}} & \textbf{\textit{N}} & \rotatebox{90}{base$^{\ast}$} & \rotatebox{90}{base$^{\ast}_{\text{shr}}$} & \rotatebox{90}{base$_{\text{shr}}$} & \rotatebox{90}{ew} & \rotatebox{90}{ow$_{\text{var}}$} & \rotatebox{90}{ow$_{\text{cov}}$} & \rotatebox{90}{src} & \rotatebox{90}{scr$_{\text{ew}}$} & \rotatebox{90}{scr$_{\text{var}}$} & \rotatebox{90}{scr$_{\text{cov}}$} & \rotatebox{90}{occ$_{\text{be}}$} & \rotatebox{90}{ew} & \rotatebox{90}{ow$_{\text{var}}$} & \rotatebox{90}{ow$_{\text{cov}}$} & \rotatebox{90}{scr$_{\text{ew}}$} & \rotatebox{90}{scr$_{\text{var}}$} & \rotatebox{90}{scr$_{\text{cov}}$} & \rotatebox{90}{occ$_{\text{be}}$}\\
\midrule
1 & 4 & 50 & \textcolor{red}{1.308} & \textcolor{red}{1.063} & \textcolor{red}{1.058} & 1.000 & \textcolor{red}{1.002} & \textcolor{red}{1.027} & \em{0.921} & 0.949 & 0.951 & 0.969 & \textbf{0.905} & 1.000 & \textcolor{red}{1.031} & \textcolor{red}{1.479} & \em{0.903} & \textcolor{red}{1.053} & \textcolor{red}{1.122} & \textbf{0.892}\\
 &  & 100 & \textcolor{red}{1.307} & \textcolor{red}{1.038} & \textcolor{red}{1.037} & 1.000 & \textcolor{red}{1.000} & \textcolor{red}{1.014} & \em{0.915} & 0.946 & 0.946 & 0.955 & \textbf{0.894} & 1.000 & \textcolor{red}{1.024} & \textcolor{red}{1.341} & \em{0.903} & \textcolor{red}{1.032} & \textcolor{red}{1.114} & \textbf{0.885}\\
 &  & 200 & \textcolor{red}{1.305} & \textcolor{red}{1.026} & \textcolor{red}{1.025} & 1.000 & 1.000 & \textcolor{red}{1.007} & \em{0.912} & 0.943 & 0.942 & 0.946 & \textbf{0.890} & 1.000 & \textcolor{red}{1.010} & \textcolor{red}{1.191} & \em{0.902} & \textcolor{red}{1.002} & \textcolor{red}{1.090} & \textbf{0.880}\\
\addlinespace[0.25em]
 & 10 & 50 & \textcolor{red}{1.417} & \textcolor{red}{1.152} & \textcolor{red}{1.146} & 1.000 & \textcolor{red}{1.001} & \textcolor{red}{1.053} & \em{0.962} & 0.980 & 0.981 & \textcolor{red}{1.020} & \textbf{0.953} & 1.000 & \textcolor{red}{1.220} & \textcolor{red}{2.999} & \em{0.943} & \textcolor{red}{1.276} & \textcolor{red}{1.492} & \textbf{0.941}\\
 &  & 100 & \textcolor{red}{1.417} & \textcolor{red}{1.125} & \textcolor{red}{1.123} & 1.000 & \textcolor{red}{1.000} & \textcolor{red}{1.033} & \em{0.959} & 0.979 & 0.979 & \textcolor{red}{1.003} & \textbf{0.947} & 1.000 & \textcolor{red}{1.216} & \textcolor{red}{2.924} & \em{0.940} & \textcolor{red}{1.250} & \textcolor{red}{1.488} & \textbf{0.928}\\
 &  & 200 & \textcolor{red}{1.412} & \textcolor{red}{1.111} & \textcolor{red}{1.110} & 1.000 & 1.000 & \textcolor{red}{1.019} & \em{0.958} & 0.976 & 0.976 & 0.989 & \textbf{0.946} & 1.000 & \textcolor{red}{1.183} & \textcolor{red}{2.688} & \em{0.940} & \textcolor{red}{1.190} & \textcolor{red}{1.464} & \textbf{0.923}\\
\addlinespace[0.25em]
 & 20 & 50 & \textcolor{red}{1.461} & \textcolor{red}{1.188} & \textcolor{red}{1.181} & 1.000 & \textcolor{red}{1.001} & \textcolor{red}{1.063} & \em{0.979} & 0.990 & 0.991 & \textcolor{red}{1.040} & \textbf{0.974} & 1.000 & \textcolor{red}{1.351} & \textcolor{red}{4.250} & \textbf{0.973} & \textcolor{red}{1.426} & \textcolor{red}{1.797} & \em{0.981}\\
 &  & 100 & \textcolor{red}{1.463} & \textcolor{red}{1.162} & \textcolor{red}{1.159} & 1.000 & \textcolor{red}{1.000} & \textcolor{red}{1.044} & \em{0.978} & 0.990 & 0.990 & \textcolor{red}{1.023} & \textbf{0.972} & 1.000 & \textcolor{red}{1.335} & \textcolor{red}{4.405} & \em{0.974} & \textcolor{red}{1.385} & \textcolor{red}{1.814} & \textbf{0.968}\\
 &  & 200 & \textcolor{red}{1.455} & \textcolor{red}{1.147} & \textcolor{red}{1.145} & 1.000 & 1.000 & \textcolor{red}{1.028} & \em{0.978} & 0.989 & 0.988 & \textcolor{red}{1.008} & \textbf{0.971} & 1.000 & \textcolor{red}{1.287} & \textcolor{red}{4.077} & \em{0.973} & \textcolor{red}{1.302} & \textcolor{red}{1.709} & \textbf{0.958}\\
\midrule
2 & 4 & 50 & \textcolor{red}{1.226} & \textcolor{red}{1.054} & \textcolor{red}{1.052} & 1.000 & \textcolor{red}{1.001} & \textcolor{red}{1.026} & \em{0.949} & 0.967 & 0.968 & 0.986 & \textbf{0.942} & 1.000 & \textcolor{red}{1.022} & \textcolor{red}{1.396} & \em{0.928} & \textcolor{red}{1.031} & \textcolor{red}{1.076} & \textbf{0.919}\\
 &  & 100 & \textcolor{red}{1.226} & \textcolor{red}{1.037} & \textcolor{red}{1.036} & 1.000 & \textcolor{red}{1.000} & \textcolor{red}{1.014} & \em{0.945} & 0.965 & 0.966 & 0.975 & \textbf{0.936} & 1.000 & \textcolor{red}{1.020} & \textcolor{red}{1.314} & \em{0.929} & \textcolor{red}{1.022} & \textcolor{red}{1.077} & \textbf{0.914}\\
 &  & 200 & \textcolor{red}{1.225} & \textcolor{red}{1.026} & \textcolor{red}{1.026} & 1.000 & 1.000 & \textcolor{red}{1.007} & \em{0.943} & 0.963 & 0.963 & 0.967 & \textbf{0.932} & 1.000 & \textcolor{red}{1.013} & \textcolor{red}{1.191} & \em{0.927} & \textcolor{red}{1.008} & \textcolor{red}{1.067} & \textbf{0.910}\\
\addlinespace[0.25em]
 & 10 & 50 & \textcolor{red}{1.294} & \textcolor{red}{1.113} & \textcolor{red}{1.110} & 1.000 & \textcolor{red}{1.001} & \textcolor{red}{1.045} & \em{0.977} & 0.988 & 0.989 & \textcolor{red}{1.023} & \textbf{0.974} & 1.000 & \textcolor{red}{1.147} & \textcolor{red}{2.536} & \em{0.963} & \textcolor{red}{1.173} & \textcolor{red}{1.316} & \textbf{0.960}\\
 &  & 100 & \textcolor{red}{1.295} & \textcolor{red}{1.095} & \textcolor{red}{1.093} & 1.000 & \textcolor{red}{1.000} & \textcolor{red}{1.030} & \em{0.976} & 0.988 & 0.988 & \textcolor{red}{1.009} & \textbf{0.972} & 1.000 & \textcolor{red}{1.145} & \textcolor{red}{2.470} & \em{0.962} & \textcolor{red}{1.161} & \textcolor{red}{1.317} & \textbf{0.952}\\
 &  & 200 & \textcolor{red}{1.293} & \textcolor{red}{1.083} & \textcolor{red}{1.083} & 1.000 & 1.000 & \textcolor{red}{1.018} & \em{0.975} & 0.987 & 0.987 & 0.999 & \textbf{0.970} & 1.000 & \textcolor{red}{1.129} & \textcolor{red}{2.088} & \em{0.961} & \textcolor{red}{1.133} & \textcolor{red}{1.260} & \textbf{0.947}\\
\addlinespace[0.25em]
 & 20 & 50 & \textcolor{red}{1.319} & \textcolor{red}{1.135} & \textcolor{red}{1.132} & 1.000 & \textcolor{red}{1.000} & \textcolor{red}{1.050} & \em{0.988} & 0.994 & 0.994 & \textcolor{red}{1.032} & \textbf{0.986} & 1.000 & \textcolor{red}{1.226} & \textcolor{red}{3.433} & \textbf{0.985} & \textcolor{red}{1.259} & \textcolor{red}{1.518} & \em{0.989}\\
 &  & 100 & \textcolor{red}{1.319} & \textcolor{red}{1.117} & \textcolor{red}{1.114} & 1.000 & \textcolor{red}{1.000} & \textcolor{red}{1.036} & \em{0.988} & 0.994 & 0.994 & \textcolor{red}{1.021} & \textbf{0.986} & 1.000 & \textcolor{red}{1.221} & \textcolor{red}{3.561} & \em{0.986} & \textcolor{red}{1.245} & \textcolor{red}{1.537} & \textbf{0.981}\\
 &  & 200 & \textcolor{red}{1.318} & \textcolor{red}{1.105} & \textcolor{red}{1.103} & 1.000 & 1.000 & \textcolor{red}{1.023} & \em{0.987} & 0.994 & 0.994 & \textcolor{red}{1.009} & \textbf{0.984} & 1.000 & \textcolor{red}{1.199} & \textcolor{red}{3.475} & \em{0.986} & \textcolor{red}{1.206} & \textcolor{red}{1.512} & \textbf{0.974}\\
\midrule
3 & 4 & 50 & \textcolor{red}{1.110} & \textcolor{red}{1.034} & \textcolor{red}{1.034} & 1.000 & \textcolor{red}{1.000} & \textcolor{red}{1.020} & \em{0.979} & 0.988 & 0.988 & \textcolor{red}{1.003} & \textbf{0.978} & 1.000 & \textcolor{red}{1.321} & \textcolor{red}{1.824} & \em{0.969} & \textcolor{red}{1.377} & \textcolor{red}{1.459} & \textbf{0.961}\\
 &  & 100 & \textcolor{red}{1.112} & \textcolor{red}{1.025} & \textcolor{red}{1.025} & 1.000 & \textcolor{red}{1.000} & \textcolor{red}{1.013} & \em{0.977} & 0.986 & 0.987 & 0.995 & \textbf{0.975} & 1.000 & \textcolor{red}{1.325} & \textcolor{red}{1.862} & \em{0.969} & \textcolor{red}{1.360} & \textcolor{red}{1.479} & \textbf{0.958}\\
 &  & 200 & \textcolor{red}{1.111} & \textcolor{red}{1.016} & \textcolor{red}{1.016} & 1.000 & 1.000 & \textcolor{red}{1.007} & \em{0.974} & 0.984 & 0.984 & 0.989 & \textbf{0.971} & 1.000 & \textcolor{red}{1.313} & \textcolor{red}{1.852} & \em{0.966} & \textcolor{red}{1.328} & \textcolor{red}{1.498} & \textbf{0.957}\\
\addlinespace[0.25em]
 & 10 & 50 & \textcolor{red}{1.137} & \textcolor{red}{1.059} & \textcolor{red}{1.059} & 1.000 & \textcolor{red}{1.000} & \textcolor{red}{1.029} & \em{0.992} & 0.995 & 0.995 & \textcolor{red}{1.017} & \textbf{0.991} & 1.000 & \textcolor{red}{1.462} & \textcolor{red}{2.819} & \em{0.988} & \textcolor{red}{1.523} & \textcolor{red}{1.647} & \textbf{0.984}\\
 &  & 100 & \textcolor{red}{1.139} & \textcolor{red}{1.050} & \textcolor{red}{1.049} & 1.000 & \textcolor{red}{1.000} & \textcolor{red}{1.021} & \em{0.990} & 0.995 & 0.995 & \textcolor{red}{1.011} & \textbf{0.989} & 1.000 & \textcolor{red}{1.462} & \textcolor{red}{2.712} & \em{0.986} & \textcolor{red}{1.503} & \textcolor{red}{1.619} & \textbf{0.979}\\
 &  & 200 & \textcolor{red}{1.138} & \textcolor{red}{1.044} & \textcolor{red}{1.041} & 1.000 & 1.000 & \textcolor{red}{1.013} & \em{0.989} & 0.995 & 0.995 & \textcolor{red}{1.004} & \textbf{0.988} & 1.000 & \textcolor{red}{1.437} & \textcolor{red}{2.256} & \em{0.988} & \textcolor{red}{1.455} & \textcolor{red}{1.506} & \textbf{0.977}\\
\addlinespace[0.25em]
 & 20 & 50 & \textcolor{red}{1.147} & \textcolor{red}{1.069} & \textcolor{red}{1.067} & 1.000 & \textcolor{red}{1.000} & \textcolor{red}{1.031} & \em{0.996} & 0.997 & 0.998 & \textcolor{red}{1.021} & \textbf{0.995} & 1.000 & \textcolor{red}{1.527} & \textcolor{red}{3.547} & \em{0.997} & \textcolor{red}{1.594} & \textcolor{red}{1.801} & \textbf{0.997}\\
 &  & 100 & \textcolor{red}{1.148} & \textcolor{red}{1.058} & \textcolor{red}{1.057} & 1.000 & \textcolor{red}{1.000} & \textcolor{red}{1.023} & \em{0.995} & 0.997 & 0.997 & \textcolor{red}{1.015} & \textbf{0.994} & 1.000 & \textcolor{red}{1.514} & \textcolor{red}{3.682} & \em{0.998} & \textcolor{red}{1.563} & \textcolor{red}{1.798} & \textbf{0.992}\\
 &  & 200 & \textcolor{red}{1.148} & \textcolor{red}{1.053} & \textcolor{red}{1.051} & 1.000 & 1.000 & \textcolor{red}{1.016} & \em{0.995} & 0.998 & 0.998 & \textcolor{red}{1.010} & \textbf{0.994} & 1.000 & \textcolor{red}{1.482} & \textcolor{red}{3.599} & \em{0.999} & \textcolor{red}{1.504} & \textcolor{red}{1.755} & \textbf{0.990}\\
\midrule
4 & 4 & 50 & \textcolor{red}{1.243} & \textcolor{red}{1.074} & \textcolor{red}{1.074} & 1.000 & 0.991 & \textcolor{red}{1.001} & \em{0.950} & 0.968 & 0.957 & 0.950 & \textbf{0.922} & 1.000 & 0.999 & \textcolor{red}{1.343} & \em{0.922} & \textcolor{red}{1.008} & \textcolor{red}{1.049} & \textbf{0.909}\\
 &  & 100 & \textcolor{red}{1.241} & \textcolor{red}{1.056} & \textcolor{red}{1.056} & 1.000 & 0.989 & 0.988 & 0.946 & 0.966 & 0.953 & \em{0.936} & \textbf{0.915} & 1.000 & \textcolor{red}{1.001} & \textcolor{red}{1.277} & \em{0.922} & \textcolor{red}{1.004} & \textcolor{red}{1.049} & \textbf{0.904}\\
 &  & 200 & \textcolor{red}{1.242} & \textcolor{red}{1.042} & \textcolor{red}{1.042} & 1.000 & 0.989 & 0.982 & 0.942 & 0.963 & 0.950 & \em{0.929} & \textbf{0.909} & 1.000 & 0.993 & \textcolor{red}{1.130} & \em{0.921} & 0.988 & \textcolor{red}{1.032} & \textbf{0.900}\\
\addlinespace[0.25em]
 & 10 & 50 & \textcolor{red}{1.317} & \textcolor{red}{1.134} & \textcolor{red}{1.134} & 1.000 & 0.986 & 0.987 & 0.978 & 0.988 & 0.973 & \em{0.955} & \textbf{0.948} & 1.000 & \textcolor{red}{1.129} & \textcolor{red}{2.507} & \em{0.956} & \textcolor{red}{1.155} & \textcolor{red}{1.290} & \textbf{0.947}\\
 &  & 100 & \textcolor{red}{1.313} & \textcolor{red}{1.117} & \textcolor{red}{1.117} & 1.000 & 0.985 & 0.968 & 0.976 & 0.987 & 0.972 & \textbf{0.936} & \em{0.942} & 1.000 & \textcolor{red}{1.126} & \textcolor{red}{2.388} & \em{0.956} & \textcolor{red}{1.141} & \textcolor{red}{1.283} & \textbf{0.938}\\
 &  & 200 & \textcolor{red}{1.317} & \textcolor{red}{1.107} & \textcolor{red}{1.107} & 1.000 & 0.985 & 0.959 & 0.975 & 0.986 & 0.971 & \textbf{0.926} & \em{0.939} & 1.000 & \textcolor{red}{1.114} & \textcolor{red}{2.072} & \em{0.955} & \textcolor{red}{1.119} & \textcolor{red}{1.240} & \textbf{0.933}\\
\addlinespace[0.25em]
 & 20 & 50 & \textcolor{red}{1.336} & \textcolor{red}{1.150} & \textcolor{red}{1.150} & 1.000 & 0.983 & 0.973 & 0.989 & 0.994 & 0.977 & \textbf{0.948} & \em{0.958} & 1.000 & \textcolor{red}{1.214} & \textcolor{red}{3.456} & \em{0.982} & \textcolor{red}{1.248} & \textcolor{red}{1.500} & \textbf{0.975}\\
 &  & 100 & \textcolor{red}{1.339} & \textcolor{red}{1.138} & \textcolor{red}{1.138} & 1.000 & 0.983 & \em{0.952} & 0.988 & 0.994 & 0.977 & \textbf{0.928} & 0.954 & 1.000 & \textcolor{red}{1.206} & \textcolor{red}{3.482} & \em{0.981} & \textcolor{red}{1.230} & \textcolor{red}{1.519} & \textbf{0.964}\\
 &  & 200 & \textcolor{red}{1.334} & \textcolor{red}{1.123} & \textcolor{red}{1.123} & 1.000 & 0.983 & \em{0.937} & 0.987 & 0.994 & 0.976 & \textbf{0.913} & 0.950 & 1.000 & \textcolor{red}{1.186} & \textcolor{red}{3.329} & \em{0.981} & \textcolor{red}{1.193} & \textcolor{red}{1.452} & \textbf{0.957}\\
\midrule
5 & 4 & 50 & \textcolor{red}{1.241} & \textcolor{red}{1.049} & \textcolor{red}{1.049} & 1.000 & 0.989 & \textcolor{red}{1.010} & \em{0.939} & 0.959 & 0.955 & 0.969 & \textbf{0.926} & 1.000 & 0.989 & \textcolor{red}{1.346} & \em{0.916} & 0.998 & \textcolor{red}{1.040} & \textbf{0.901}\\
 &  & 100 & \textcolor{red}{1.245} & \textcolor{red}{1.038} & \textcolor{red}{1.034} & 1.000 & 0.989 & 0.998 & \em{0.934} & 0.957 & 0.952 & 0.958 & \textbf{0.919} & 1.000 & 0.986 & \textcolor{red}{1.251} & \em{0.915} & 0.988 & \textcolor{red}{1.038} & \textbf{0.895}\\
 &  & 200 & \textcolor{red}{1.243} & \textcolor{red}{1.023} & \textcolor{red}{1.023} & 1.000 & 0.988 & 0.991 & \em{0.930} & 0.954 & 0.949 & 0.951 & \textbf{0.915} & 1.000 & 0.982 & \textcolor{red}{1.136} & \em{0.915} & 0.977 & \textcolor{red}{1.030} & \textbf{0.892}\\
\addlinespace[0.25em]
 & 10 & 50 & \textcolor{red}{1.322} & \textcolor{red}{1.122} & \textcolor{red}{1.122} & 1.000 & 0.994 & \textcolor{red}{1.036} & \em{0.972} & 0.985 & 0.981 & \textcolor{red}{1.012} & \textbf{0.966} & 1.000 & \textcolor{red}{1.120} & \textcolor{red}{2.421} & \em{0.953} & \textcolor{red}{1.145} & \textcolor{red}{1.292} & \textbf{0.945}\\
 &  & 100 & \textcolor{red}{1.320} & \textcolor{red}{1.103} & \textcolor{red}{1.103} & 1.000 & 0.994 & \textcolor{red}{1.021} & \em{0.971} & 0.985 & 0.981 & 1.000 & \textbf{0.964} & 1.000 & \textcolor{red}{1.118} & \textcolor{red}{2.349} & \em{0.953} & \textcolor{red}{1.134} & \textcolor{red}{1.280} & \textbf{0.936}\\
 &  & 200 & \textcolor{red}{1.320} & \textcolor{red}{1.087} & \textcolor{red}{1.087} & 1.000 & 0.993 & \textcolor{red}{1.009} & \em{0.968} & 0.983 & 0.979 & 0.990 & \textbf{0.960} & 1.000 & \textcolor{red}{1.105} & \textcolor{red}{2.083} & \em{0.950} & \textcolor{red}{1.109} & \textcolor{red}{1.248} & \textbf{0.930}\\
\addlinespace[0.25em]
 & 20 & 50 & \textcolor{red}{1.353} & \textcolor{red}{1.151} & \textcolor{red}{1.149} & 1.000 & 0.997 & \textcolor{red}{1.049} & \em{0.986} & 0.993 & 0.991 & \textcolor{red}{1.030} & \textbf{0.983} & 1.000 & \textcolor{red}{1.212} & \textcolor{red}{3.480} & \textbf{0.979} & \textcolor{red}{1.246} & \textcolor{red}{1.514} & \em{0.980}\\
 &  & 100 & \textcolor{red}{1.352} & \textcolor{red}{1.132} & \textcolor{red}{1.128} & 1.000 & 0.996 & \textcolor{red}{1.031} & \em{0.984} & 0.992 & 0.990 & \textcolor{red}{1.015} & \textbf{0.980} & 1.000 & \textcolor{red}{1.204} & \textcolor{red}{3.568} & \em{0.980} & \textcolor{red}{1.228} & \textcolor{red}{1.533} & \textbf{0.969}\\
 &  & 200 & \textcolor{red}{1.356} & \textcolor{red}{1.116} & \textcolor{red}{1.113} & 1.000 & 0.996 & \textcolor{red}{1.020} & \em{0.984} & 0.992 & 0.990 & \textcolor{red}{1.007} & \textbf{0.979} & 1.000 & \textcolor{red}{1.186} & \textcolor{red}{3.461} & \em{0.980} & \textcolor{red}{1.193} & \textcolor{red}{1.496} & \textbf{0.963}\\
\midrule
6 & 4 & 50 & \textcolor{red}{1.370} & \textcolor{red}{1.240} & \textcolor{red}{1.240} & 1.000 & 0.966 & 0.962 & 0.961 & 0.975 & 0.928 & \em{0.919} & \textbf{0.898} & 1.000 & 0.871 & \textcolor{red}{1.212} & \em{0.848} & 0.870 & 0.904 & \textbf{0.840}\\
 &  & 100 & \textcolor{red}{1.367} & \textcolor{red}{1.224} & \textcolor{red}{1.224} & 1.000 & 0.964 & 0.950 & 0.956 & 0.972 & 0.923 & \em{0.906} & \textbf{0.891} & 1.000 & 0.869 & \textcolor{red}{1.099} & \em{0.847} & 0.864 & 0.899 & \textbf{0.830}\\
 &  & 200 & \textcolor{red}{1.366} & \textcolor{red}{1.212} & \textcolor{red}{1.212} & 1.000 & 0.964 & 0.944 & 0.952 & 0.970 & 0.921 & \em{0.899} & \textbf{0.886} & 1.000 & 0.863 & 0.997 & \em{0.844} & 0.853 & 0.884 & \textbf{0.826}\\
\addlinespace[0.25em]
 & 10 & 50 & \textcolor{red}{1.508} & \textcolor{red}{1.363} & \textcolor{red}{1.363} & 1.000 & 0.980 & \textcolor{red}{1.003} & 0.981 & 0.990 & \em{0.964} & 0.980 & \textbf{0.946} & 1.000 & \textcolor{red}{1.016} & \textcolor{red}{2.396} & \em{0.893} & \textcolor{red}{1.036} & \textcolor{red}{1.180} & \textbf{0.891}\\
 &  & 100 & \textcolor{red}{1.507} & \textcolor{red}{1.349} & \textcolor{red}{1.349} & 1.000 & 0.978 & 0.986 & 0.979 & 0.990 & \em{0.962} & 0.965 & \textbf{0.942} & 1.000 & \textcolor{red}{1.013} & \textcolor{red}{2.297} & \em{0.891} & \textcolor{red}{1.025} & \textcolor{red}{1.176} & \textbf{0.878}\\
 &  & 200 & \textcolor{red}{1.499} & \textcolor{red}{1.331} & \textcolor{red}{1.331} & 1.000 & 0.979 & 0.975 & 0.977 & 0.988 & 0.961 & \em{0.956} & \textbf{0.940} & 1.000 & \textcolor{red}{1.005} & \textcolor{red}{2.055} & \em{0.893} & \textcolor{red}{1.007} & \textcolor{red}{1.163} & \textbf{0.875}\\
\addlinespace[0.25em]
 & 20 & 50 & \textcolor{red}{1.554} & \textcolor{red}{1.410} & \textcolor{red}{1.408} & 1.000 & 0.990 & \textcolor{red}{1.031} & 0.989 & 0.994 & \em{0.982} & \textcolor{red}{1.014} & \textbf{0.973} & 1.000 & \textcolor{red}{1.143} & \textcolor{red}{3.443} & \textbf{0.945} & \textcolor{red}{1.175} & \textcolor{red}{1.459} & \em{0.952}\\
 &  & 100 & \textcolor{red}{1.552} & \textcolor{red}{1.389} & \textcolor{red}{1.389} & 1.000 & 0.988 & \textcolor{red}{1.012} & 0.989 & 0.994 & \em{0.980} & 0.998 & \textbf{0.969} & 1.000 & \textcolor{red}{1.135} & \textcolor{red}{3.527} & \em{0.942} & \textcolor{red}{1.155} & \textcolor{red}{1.469} & \textbf{0.936}\\
 &  & 200 & \textcolor{red}{1.553} & \textcolor{red}{1.379} & \textcolor{red}{1.379} & 1.000 & 0.987 & 0.998 & 0.988 & 0.995 & \em{0.979} & 0.986 & \textbf{0.966} & 1.000 & \textcolor{red}{1.118} & \textcolor{red}{3.310} & \em{0.941} & \textcolor{red}{1.125} & \textcolor{red}{1.410} & \textbf{0.927}\\
\bottomrule
\end{tabular}
}
	\label{tab:MAE_simresults_corr}
\end{table}

\autoref{tab:MAE_simresults_corr} presents several critical aspects that influence the performance of forecasting methods under different conditions. It clearly appears that as the number of residuals used to estimate the error covariance matrices grows, the quality of forecast improves in terms of accuracy. This finding highlighted the critical role of robust statistical techniques to estimate the covariances, particularly in high-dimensional settings where estimation errors can propagate and affect the results negatively. In relation to this problem, we observe that the AvgRelMAE decreases as the number of experts $p$ increases. Developing criteria for identifying the most relevant experts for combining and reconciling forecasts is an important aspect which lies beyond the scope of this paper.

Looking at the different forecasting procedures, the coherent approaches src (for balanced cases), scr$_{\text{ew}}$, $scr_{\text{var}}$, and occ$_{\text{be}}$ consistently outperform the benchmark equal-weighted (ew) average, as indicated by indices uniformly below 1. Among these, occ$_{\text{be}}$ is the most effective overall, ranking first in the majority of cases for both balanced panels (51 out of 54 cases) and unbalanced panels (50 out of 54 cases) for any setting. 
The scr$_{\text{ew}}$ method also demonstrates strong performance, particularly in unbalanced panel, where it frequently ranks among the top-performing methods (first 4 times and second 50 times). Similarly, the src method shows excellent results in balanced cases, ranking second in 37 out of 54 cases. However, a major drawback of this approach is that it cannot be used in the unbalanced case, which limits its applicability in real-world scenarios where data availability is often irregular.
In conclusion, among the evaluated approaches, occ$_{\text{be}}$ stands out as the most effective and accurate, highlighting the advantages of simultaneously combining and reconciling forecasts for different variables and from multiple experts. 

\section{Forecasting Australian daily electricity generation} \label{sec:energy}

To illustrate the effectiveness of the coherent forecast combination methodology for real-life data, we perform a forecasting experiment on the daily electricity generation from various energy sources in Australia \citep{Panagiotelis2023}. Daily time series data were obtained from \url{opennem.org.au}, which compiles publicly available data from the Australian Energy Market Operator (AEMO). Accurate day-ahead forecasts are crucial for operational planning and ensuring the efficiency and stability of the power network, particularly as the growth of intermittent renewable energy sources, such as wind and solar, introduces significant variability and complexity into the power system.

\begin{figure}[tb]
	\centering
	\resizebox{\linewidth}{!}{
		\begin{tikzpicture}[baseline=(current  bounding  box.center),
			rel/.append style={
				font = \scriptsize,
				draw=black, inner sep =0, outer sep =0,
				text width = 1.75cm,
				align = center,
				minimum width=0.8cm,
				minimum height=0.6cm,
				execute at begin node=\setlength{\baselineskip}{0ex}},
			rel2/.append style={
				font = \scriptsize,
				draw=black, inner sep =0, outer sep =0,
				text width = 1.5cm,
				align = center,
				minimum width=0.8cm,
				minimum height=0.5cm,
				execute at begin node=\setlength{\baselineskip}{0ex}},
			connection/.style ={inner sep =0, outer sep =0}]
			
			0.00  0.75  1.50  2.25  3.00  3.75  4.50  5.25  6.00  6.75  7.50  8.25  9.00  9.75 10.50 11.25
			\node[rel] at (0, 0) (A1){\texttt{Wind}};
			\node[rel] at (0, 0.75) (A2){\texttt{Biomass}};	
			\node[rel] at (0, 1.5) (A3){\texttt{Battery {\tiny(Discharg.)}}};	
			\node[rel] at (0, 2.25) (A4){\texttt{Battery {\tiny(Charg.)}}};	
			\node[rel] at (0, 3) (A5){\texttt{Hydro}};	
			\node[rel] at (0, 3.75) (A6){\texttt{Pumps}};	
			\node[rel] at (0, 4.5) (A7){\texttt{Solar {\tiny(Rooftop)}}};	
			\node[rel] at (0, 5.25) (A8){\texttt{Solar {\tiny(Utility)}}};	
			\node[rel] at (0, 6) (A9){\texttt{Distillate}};	
			\node[rel] at (0, 6.75) (A10){\texttt{Black Coal}};	
			\node[rel] at (0, 7.5) (A11){\texttt{Brown Coal}};	
			\node[rel] at (0, 8.25) (A12){\texttt{Gas {\tiny(Recip.)}}};	
			\node[rel] at (0, 9) (A13){\texttt{Gas {\tiny(OCGT)}}};	
			\node[rel] at (0, 9.75) (A14){\texttt{Gas {\tiny(CCGT)}}};	
			\node[rel] at (0, 10.5) (A15){\texttt{Gas {\tiny(Steam)}}};	
			
			\node[rel] at (2.15, 9.375) (B1){\texttt{Gas}};
			\node[rel] at (2.15, 7.125) (B2){\texttt{Coal}};
			\node[rel] at (2.15, 4.875) (B3){\texttt{Solar}};		
			\node[rel] at (2.15, 3.375) (B4){\texttt{Hydro\\ {\tiny(-Pumps)}}};	
			\node[rel] at (2.15, 1.875) (B5){\texttt{Batteries}};	
			\node[rel2] at (4.3, 2.4375) (C1){\texttt{Renew.}};	
			\node[rel2] at (4.3, 7.6875) (C2){\texttt{non-Renew.}};	
			\node[rel2] at (6.45, 5.0625) (D){\texttt{Total}};	
			\node[align=left, anchor=west] at (7.5, 5.25) (B){\includegraphics[width =0.6\linewidth]{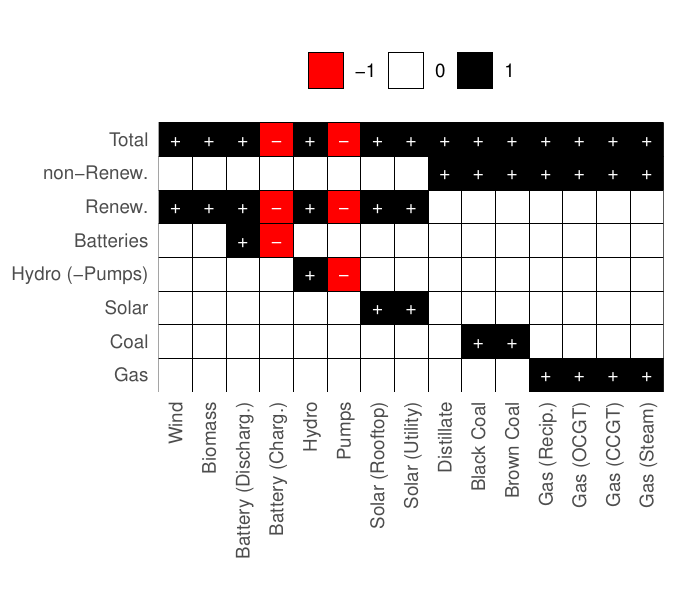}};	
			
			\relationDred{0.15}{A4}{B5};
			\relationD{0.15}{A3}{B5};
			\relationDred{0.15}{A6}{B4};
			\relationD{0.15}{A5}{B4};
			\relationD{0.15}{A7}{B3};
			\relationD{0.15}{A8}{B3};
			\relationD{0.15}{A10}{B2};
			\relationD{0.15}{A11}{B2};
			\relationD{0.15}{A12}{B1};
			\relationD{0.15}{A13}{B1};
			\relationD{0.15}{A14}{B1};
			\relationD{0.15}{A15}{B1};
			\relationD{0.15}{A1}{C1};
			\relationD{0.15}{A2}{C1};
			\relationD{0.15}{B5}{C1};
			\relationD{0.15}{B4}{C1};
			\relationD{0.15}{B3}{C1};
			\relationD{0.15}{B2}{C2};
			\relationD{0.15}{B1}{C2};
			\relationD{0.15}{A9}{C2};
			\relationD{0.15}{C1}{D};
			\relationD{0.15}{C2}{D};
	\end{tikzpicture}}
	\vskip0.25cm
	\caption{Australian daily electricity generation hierarchy (left) and the corresponding linear combination matrix (right), mapping 15 bottom variables into 8 upper variables. Red color denotes bottom variables that - when aggregated - are subtracted instead of added.\label{fig:energy}}
\end{figure}

\autoref{fig:energy} shows the source generation hierarchy and the corresponding linear combination matrix $\Avet \in \{-1, 0, 1\}^{8 \times 15}$ for the whole dataset of 23 time series (\autoref{fig:plot_ts_eg}), with 15 of these being bottom-level series, representing the specific sources of generation. Detailed descriptions of the aggregation levels, including the components of each source, are available in \cite{Panagiotelis2023}\footnote{The original dataset, available at \url{https://github.com/PuwasalaG/Probabilistic-Forecast-Reconciliation}, underwent a cleaning process. Specifically, 2 out of the 15 original bottom time series (Distillate and Biomass) contained some negative values (from -0.06 to -0.01), which affected 26\% and 0.9\% of their data, respectively. To address this issue, these negative values were replaced by zero}.

\begin{figure}[tb]
	\centering
	\includegraphics[width = 0.9\linewidth]{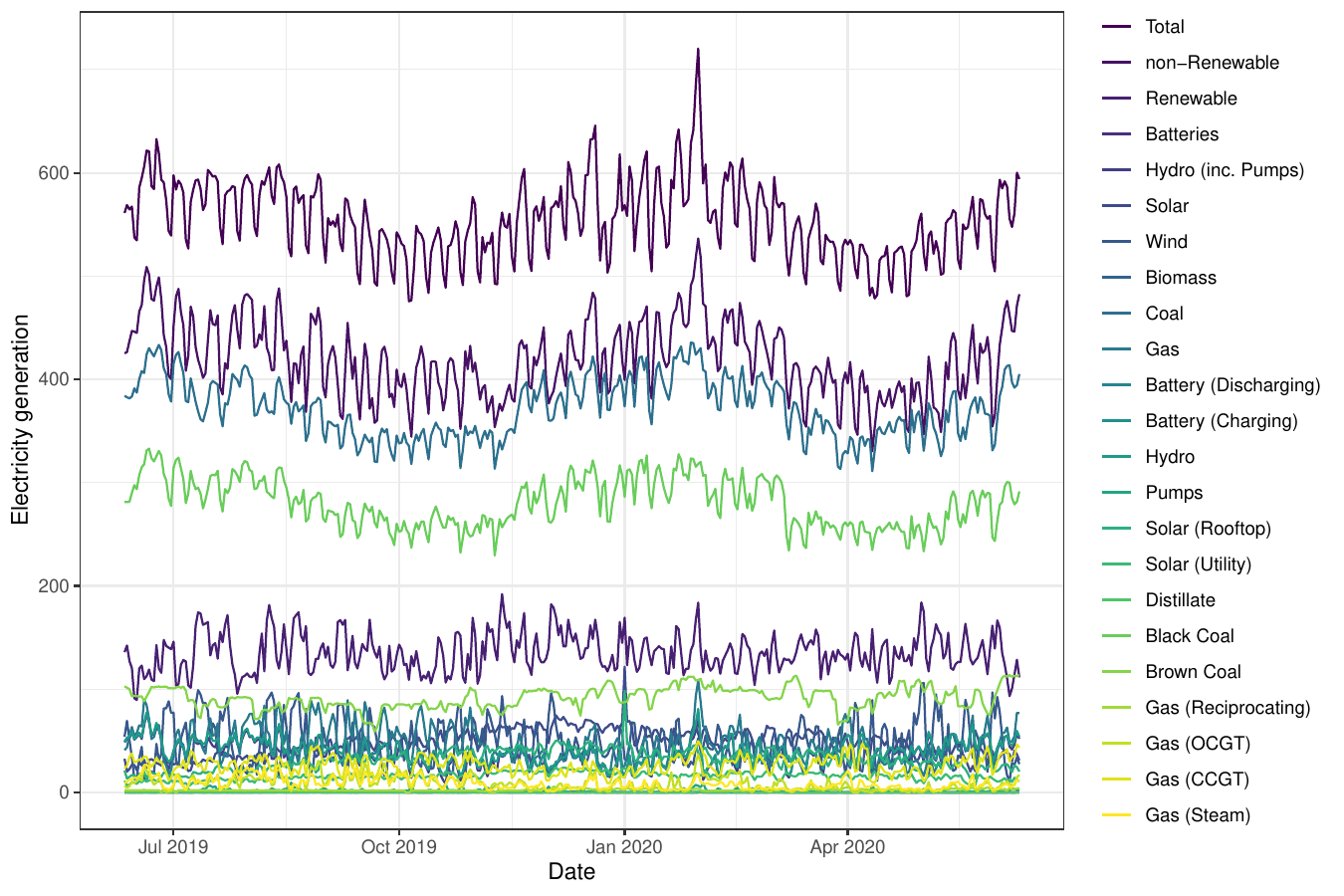}
	\caption{Australian daily electricity generation time series.}
	\label{fig:plot_ts_eg}
\end{figure}

In order to assess the forecast accuracy of the various approaches, we perform a rolling forecast experiment with an expanding window. The first training window consists of 140 days (20 weeks) of data. One$\,$- to seven$\,$-$\,$step$\,$-$\,$ahead forecasts were generated leading to $Q_1 = 226$ one$\,$-, ..., and $Q_7 = 220$ seven$\,$-$\,$step$\,$-$\,$ahead daily forecasts for evaluation. Each series was independently modeled using three different approaches: stlf \citep[Seasonal and Trend decomposition using Loess,][]{Cleveland1990-ux}, arima \citep[AutoRegressive Integrated Moving Average,][]{Box1976-ud}, tbats \citep[Exponential smoothing state space model with Box-Cox transformation, ARMA errors, Trend and Seasonal components,][]{De-Livera2011-hx}. Starting from the base forecasts produced by these models, we consider three single-task combination procedures, resulting in incoherent forecasts, three single model reconciliation approaches and five coherent combination procedures (see \autoref{tab:approaches})\footnote{The base forecasting models were implemented using the automatic forecasting procedure of the \textsf{R} package \texttt{forecast} \citep{Hyndman2008-qw, Hyndman2023-gd}. The combined and reconciled forecasts were computed using the \Rpackage. A complete set of results is available at the \textsf{GitHub} repository \githuburl.}:
\begin{enumerate}[nosep]
	\item \textit{Single-task combination}: equal weights (ew) and two classical optimal weighting schemes proposed by \cite{Bates1969} and \cite{Newbold1974}, called respectively ow$_{\text{var}}$, based on the diagonal elements of the in-sample MSE matrix, and ow$_{\text{cov}}$, based on the whole MSE matrix, with optimized weights constrained to be non-negative and sum to unity \citep{Conflitti2015}.
	\item \textit{Single model reconciliation}: for each base forecasting model, the reconciled forecasts are obtained through the MinT approach by \cite{Wickra2019}, with a shrunk in-sample error covariance matrix  (stlf$_{\text{shr}}$, arima$_{\text{shr}}$, tbats$_{\text{shr}}$).
	\item \textit{Coherent combination}: we consider four sequential approaches, namely (i) sequential reconciliation-then-equal-weight-combination (src), (ii)-(iv) sequential combination-then-reconciliation: (scr$_{\text{ew}}$, scr$_{\text{var}}$, scr$_{\text{cov}}$), and finally occ, the optimal coherent combination approach with \textit{by-expert} block-diagonal shrunk in-sample covariance matrix (see section \ref{sec:Wfeasible}).
\end{enumerate}
The forecast accuracy is evaluated using the Average Relative Mean Absolute ($AvgRelMAE$) and Squared ($AvgRelMSE$) Error \citep{Fleming1986, Davydenko2013} computed as, respectively,
$$
AvgRelMAE^{app} = \left(\prod_{h = 1}^H AvgRelMAE_h^{app}\right)^{\frac{1}{H}} \; \text{and} \;\; AvgRelMSE^{app} = \left(\prod_{h = 1}^H AvgRelMSE_h^{app}\right)^{\frac{1}{H}}.
$$
The $AvgRelMAE$ and $AvgRelMSE$ for a fixed forecast horizon $h$ are defined as
$$
AvgRelMAE_h^{app} = \left(\prod_{i = 1}^n \frac{MAE_{h,i}^{app}}{MAE_{h,i}^{\text{ew}}}\right)^{\frac{1}{n}} \; \text{and} \;\; AvgRelMSE_h^{app} = \left(\prod_{i = 1}^n \frac{MSE_{h,i}^{app}}{MSE_{h,i}^{\text{ew}}}\right)^{\frac{1}{n}},
$$
with the Mean Squared Error ($MSE$) and Mean Absolute Error ($MAE$) for each forecast horizon and variable are computed as
$$
MSE_{h,i}^{app} = \displaystyle\frac{1}{Q_h}\sum_{q = 1}^{Q_h} \left(y_{i,h,q}-\overline{y}_{i,h,q}^{\,app}\right)^2 \; \text{and} \;\; MAE_{h,i}^{app} = \displaystyle\frac{1}{Q_h}\sum_{q = 1}^{Q_h} \left|y_{i,h,q}-\overline{y}_{i,h,q}^{\,app}\right|,
$$
where $h = 1,...,H$ is the forecast horizon, $i=1,...,n$ denotes the variable, $Q_h$ is the dimension of the test set, $app$ is the approach used, $y_{i,h,q}$ is the observed value and $\overline{y}_{i,h,q}^{\,app}$ is the forecast value using the $app$ approach (coherent or incoherent). 

Finally, we evaluate the forecasting performances of different approaches by first applying the pairwise \cite{Diebold1995-xs} test to investigate the null hypothesis of Equal Predictive Accuracy (EPA) across models and, then, we use the Model Confidence Set (MCS) approach developed by \cite{Hansen2011}, which identifies approaches with statistically superior performance. In addition, in \ref{app:mcb}, we present the results for the non-parametric Friedman test along with the post hoc multiple-comparison-with-the-best (MCB) Nemenyi test \citep{Koning2005-up, Kourentzes2019-dj, Makridakis2022-we}, both of which are well-established tools for evaluating multiple forecast approaches.

\subsection{Results}

Forecast accuracy indices $AvgRelMAE$ and $AvgRelMSE$ for all 23 time series\footnote{A detailed analysis, distinct between upper and bottom time series, also using a wider set of coherent combination procedures, is available in \ref{app:alternative_plots}.} are reported in \autoref{tab:Energy_accuracy}, while \autoref{fig:energy_dm} and \autoref{tab:energy_mcs} show the pairwise DM-test test and the MCS approach, respectively. The forecasting approaches are evaluated for each forecast horizon (1 through 7 days), and across all horizons (denoted 1:7).

\begin{table}[!tb]
	\centering
	\small
	\def\arraystretch{1}
	\caption{AvgRelMAE (top panel) and AvgRelMSE (bottom panel) of daily forecasts for the Australian electricity generation dataset. Benchmark approach: equal-weighted average (ew). Bold entries identify the best performing approaches, italic entries identify the second best, and red color denotes forecasts worse than the benchmark.}\label{tab:Energy_accuracy}
	
\begin{tabular}[t]{>{}l|cccccccc}
\toprule
\multicolumn{1}{c}{\textbf{}} & \multicolumn{8}{c}{\textbf{Forecast horizon}} \\
\cmidrule(l{0pt}r{0pt}){2-9}
\multicolumn{1}{l|}{\textbf{Approach}} & 1 & 2 & 3 & 4 & 5 & 6 & 7 & 1:7\\
\midrule
\addlinespace[0.3em]
\multicolumn{9}{c}{\textbf{AvgRelMAE} - All 23 time series}\\
\addlinespace[0em]
\multicolumn{9}{l}{\textit{Base (incoherent forecasts) and single model reconciliation}}\\
tbats & \textcolor{red}{1.0447} & \textcolor{red}{1.0515} & \textcolor{red}{1.0348} & \textcolor{red}{1.0266} & \textcolor{red}{1.0305} & \textcolor{red}{1.0288} & \textcolor{red}{1.0201} & \textcolor{red}{1.0331}\\
tbats$_{\text{shr}}$ & \textcolor{red}{1.0320} & \textcolor{red}{1.0413} & \textcolor{red}{1.0231} & \textcolor{red}{1.0134} & \textcolor{red}{1.0212} & \textcolor{red}{1.0208} & \textcolor{red}{1.0188} & \textcolor{red}{1.0235}\\
\addlinespace[0em]
\multicolumn{9}{l}{\textit{Combination (incoherent forecasts)}}\\
ew & 1.0000 & 1.0000 & 1.0000 & 1.0000 & 1.0000 & 1.0000 & 1.0000 & \vphantom{1} 1.0000\\
ow$_{\text{var}}$ & 0.9927 & 0.9921 & 0.9982 & 0.9983 & 0.9967 & 0.9967 & 0.9990 & 0.9965\\
ow$_{\text{cov}}$ & \textcolor{red}{1.0216} & \textcolor{red}{1.0208} & \textcolor{red}{1.0390} & \textcolor{red}{1.0423} & \textcolor{red}{1.0307} & \textcolor{red}{1.0250} & \textcolor{red}{1.0325} & \textcolor{red}{1.0309}\\
\addlinespace[0em]
\multicolumn{9}{l}{\textit{Coherent combination}}\\
src & 0.9939 & 0.9941 & 0.9919 & 0.9895 & 0.9887 & \em{0.9908} & \em{0.9933} & 0.9915\\
scr$_{\text{ew}}$ & 0.9952 & 0.9959 & 0.9911 & 0.9908 & 0.9908 & 0.9932 & 0.9961 & 0.9930\\
scr$_{\text{var}}$ & \em{0.9819} & \em{0.9803} & \em{0.9869} & \em{0.9895} & \em{0.9887} & 0.9913 & 0.9972 & \em{0.9882}\\
scr$_{\text{cov}}$ & \textcolor{red}{1.0081} & \textcolor{red}{1.0081} & \textcolor{red}{1.0270} & \textcolor{red}{1.0327} & \textcolor{red}{1.0245} & \textcolor{red}{1.0197} & \textcolor{red}{1.0250} & \textcolor{red}{1.0215}\\
occ$_{\text{be}}$ & \textbf{0.9779} & \textbf{0.9745} & \textbf{0.9843} & \textbf{0.9852} & \textbf{0.9851} & \textbf{0.9880} & \textbf{0.9926} & \textbf{0.9843}\\
\midrule
\addlinespace[0.3em]
\multicolumn{9}{c}{\textbf{AvgRelMSE} - All 23 time series}\\
\addlinespace[0em]
\multicolumn{9}{l}{\textit{Base (incoherent forecasts) and single model reconciliation}}\\
tbats & \textcolor{red}{1.0796} & \textcolor{red}{1.0780} & \textcolor{red}{1.0445} & \textcolor{red}{1.0270} & \textcolor{red}{1.0322} & \textcolor{red}{1.0288} & \textcolor{red}{1.0142} & \textcolor{red}{1.0393}\\
tbats$_{\text{shr}}$ & \textcolor{red}{1.0478} & \textcolor{red}{1.0577} & \textcolor{red}{1.0304} & \textcolor{red}{1.0108} & \textcolor{red}{1.0219} & \textcolor{red}{1.0213} & \textcolor{red}{1.0116} & \textcolor{red}{1.0257}\\
\addlinespace[0em]
\multicolumn{9}{l}{\textit{Combination (incoherent forecasts)}}\\
ew & 1.0000 & 1.0000 & 1.0000 & 1.0000 & 1.0000 & 1.0000 & 1.0000 & 1.0000\\
ow$_{\text{var}}$ & 0.9840 & 0.9881 & 0.9995 & \textcolor{red}{1.0032} & \textcolor{red}{1.0020} & \textcolor{red}{1.0028} & \textcolor{red}{1.0054} & 0.9995\\
ow$_{\text{cov}}$ & \textcolor{red}{1.0279} & \textcolor{red}{1.0494} & \textcolor{red}{1.0972} & \textcolor{red}{1.1103} & \textcolor{red}{1.1009} & \textcolor{red}{1.0993} & \textcolor{red}{1.1055} & \textcolor{red}{1.0908}\\
\addlinespace[0em]
\multicolumn{9}{l}{\textit{Coherent combination}}\\
src & 0.9827 & 0.9855 & 0.9863 & \em{0.9833} & \textbf{0.9852} & \textbf{0.9873} & \textbf{0.9911} & \em{0.9859}\\
scr$_{\text{ew}}$ & 0.9875 & 0.9898 & 0.9859 & 0.9859 & \em{0.9885} & \em{0.9905} & \em{0.9962} & 0.9890\\
scr$_{\text{var}}$ & \em{0.9586} & \em{0.9683} & \em{0.9838} & 0.9942 & 0.9982 & \textcolor{red}{1.0017} & \textcolor{red}{1.0114} & 0.9910\\
scr$_{\text{cov}}$ & \textcolor{red}{1.0026} & \textcolor{red}{1.0287} & \textcolor{red}{1.0795} & \textcolor{red}{1.0972} & \textcolor{red}{1.0942} & \textcolor{red}{1.0913} & \textcolor{red}{1.0981} & \textcolor{red}{1.0773}\\
occ$_{\text{be}}$ & \textbf{0.9481} & \textbf{0.9560} & \textbf{0.9754} & \textbf{0.9831} & 0.9891 & 0.9939 & 0.9993 & \textbf{0.9808}\\
\bottomrule
\end{tabular}

\end{table}

\begin{figure}[!tb]
	\centering
	\includegraphics[width = 0.9\linewidth]{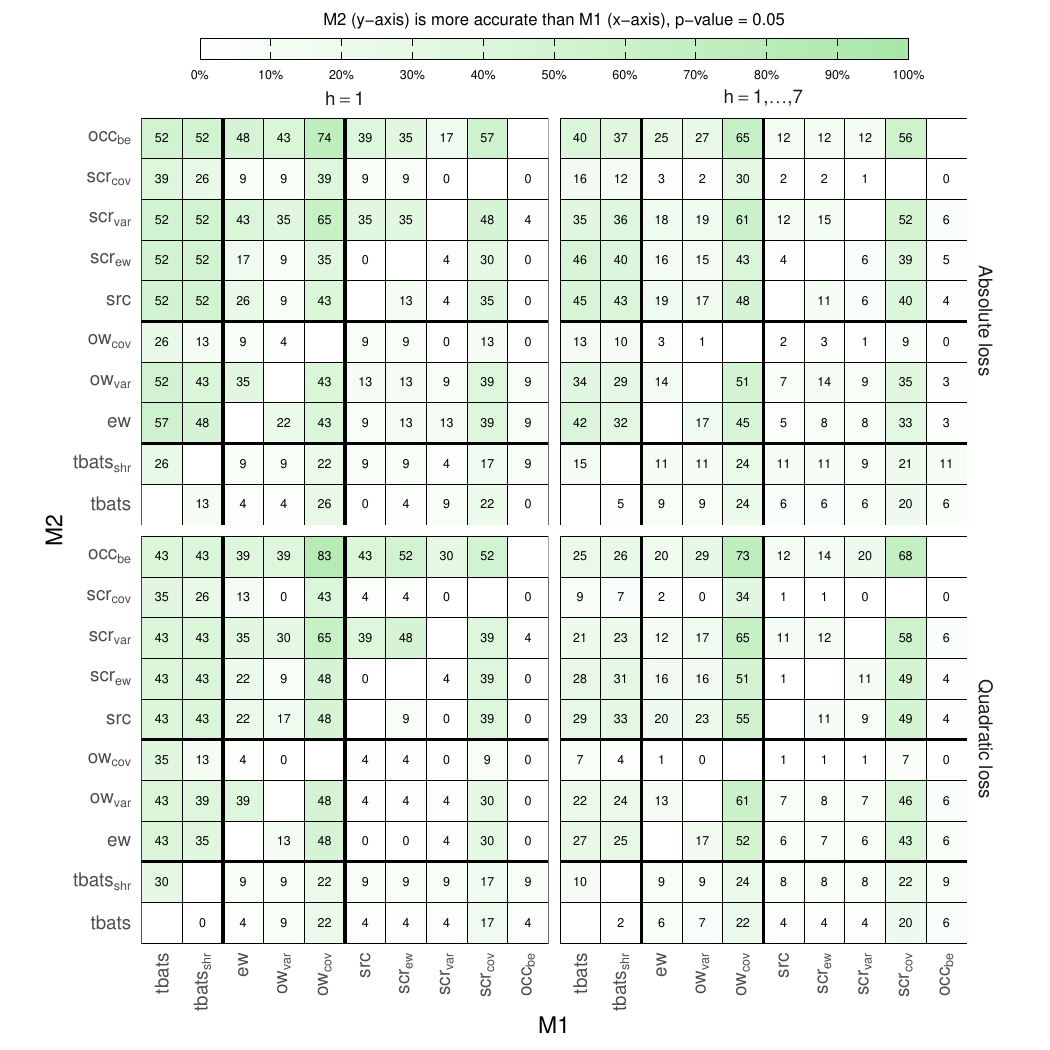}
	\caption{Pairwise DM-test results for the Australian electricity generation dataset, evaluated using absolute loss (top panels) and quadratic loss (bottom panel) across different forecast horizons. The left panel corresponds to forecast horizon $h = 1$, while the right panel is for $h = 1, \dots, 7$. Each cell reports the percentage of series for which the $p$-value of the DM-test is below $0.05$: e.g., the value 83 in the top-left cell means that occ resulted more accurate ($p$-value $<0.05$) than stlf for $h = 1$ and in terms of absolute error in the 83\% of the series.}\label{fig:energy_dm}
\end{figure}

\begin{table}[!tb]
	\centering
	\small
	\def\arraystretch{1}
	\caption{Model Confidence Set results ($10^4$ bootstrap sample) for the Australian electricity generation dataset, evaluated using absolute loss (top panels) and quadratic loss (bottom panel) across different forecast horizons ($h = 1$ and $h = 1, \dots, 7$). Each cell reports the percentage of series for which that approach is in the Model Confidence Set across different thresholds ($\delta \in \{95\%, 90\%, 80\%\}$).}\label{tab:energy_mcs}
	
\begin{tabular}[t]{>{}l|cc>{}c|ccc}
\toprule
\multicolumn{1}{c}{\textbf{ }} & \multicolumn{3}{c}{\textbf{$h =1$}} & \multicolumn{3}{c}{\textbf{$h =1:7$}} \\
\cmidrule(l{0pt}r{0pt}){2-4} \cmidrule(l{0pt}r{0pt}){5-7}
\multicolumn{1}{l|}{\textbf{Approach}} & $\delta = 95\%$ & $\delta = 90\%$ & $\delta = 80\%$ & $\delta = 95\%$ & $\delta = 90\%$ & $\delta = 80\%$\\
\midrule
\addlinespace[0.3em]
\multicolumn{7}{c}{Absolute loss - All 23 time series}\\
\addlinespace[0em]
\multicolumn{7}{l}{\textit{Base (incoherent forecasts) and single model reconciliation}}\\
tbats & 56.5 & 56.5 & 52.2 & 78.3 & 69.6 & 56.5\\
tbats$_{\text{shr}}$ & 78.3 & 73.9 & 60.9 & 87.0 & 82.6 & 69.6\\
\addlinespace[0em]
\multicolumn{7}{l}{\textit{Combination (incoherent forecasts)}}\\
ew & 87.0 & 87.0 & 78.3 & \textbf{95.7} & \em{91.3} & 78.3\\
ow$_{\text{var}}$ & \em{95.7} & \em{95.7} & 82.6 & \textbf{95.7} & \em{91.3} & \em{82.6}\\
ow$_{\text{cov}}$ & 73.9 & 69.6 & 60.9 & 73.9 & 65.2 & 43.5\\
\addlinespace[0em]
\multicolumn{7}{l}{\textit{Coherent combination}}\\
src & 91.3 & 91.3 & 87.0 & \textbf{95.7} & \textbf{95.7} & \textbf{87.0}\\
scr$_{\text{ew}}$ & 91.3 & 91.3 & 87.0 & \textbf{95.7} & \em{91.3} & 78.3\\
scr$_{\text{var}}$ & \textbf{100.0} & \textbf{100.0} & \em{91.3} & \em{91.3} & \em{91.3} & \textbf{87.0}\\
scr$_{\text{cov}}$ & 82.6 & 78.3 & 73.9 & 78.3 & 69.6 & 65.2\\
occ$_{\text{be}}$ & \textbf{100.0} & \textbf{100.0} & \textbf{95.7} & \textbf{95.7} & \textbf{95.7} & \textbf{87.0}\\
\midrule
\addlinespace[0.3em]
\multicolumn{7}{c}{Quadratic loss - All 23 time series}\\
\addlinespace[0em]
\multicolumn{7}{l}{\textit{Base (incoherent forecasts) and single model reconciliation}}\\
tbats & 65.2 & 65.2 & 60.9 & \em{91.3} & 73.9 & 73.9\\
tbats$_{\text{shr}}$ & 73.9 & 69.6 & 65.2 & \textbf{95.7} & 82.6 & 69.6\\
\addlinespace[0em]
\multicolumn{7}{l}{\textit{Combination (incoherent forecasts)}}\\
ew & 87.0 & 78.3 & 60.9 & \textbf{95.7} & \em{87.0} & \em{82.6}\\
ow$_{\text{var}}$ & \textbf{100.0} & 82.6 & 78.3 & \em{91.3} & \em{87.0} & \em{82.6}\\
ow$_{\text{cov}}$ & 78.3 & 69.6 & 52.2 & 78.3 & 60.9 & 34.8\\
\addlinespace[0em]
\multicolumn{7}{l}{\textit{Coherent combination}}\\
src & \em{95.7} & \em{91.3} & \em{82.6} & \em{91.3} & \textbf{91.3} & \textbf{91.3}\\
scr$_{\text{ew}}$ & \em{95.7} & 87.0 & 73.9 & \em{91.3} & \textbf{91.3} & \textbf{91.3}\\
scr$_{\text{var}}$ & \em{95.7} & \textbf{95.7} & \textbf{95.7} & \em{91.3} & \textbf{91.3} & \textbf{91.3}\\
scr$_{\text{cov}}$ & 82.6 & 82.6 & 73.9 & 73.9 & 60.9 & 56.5\\
occ$_{\text{be}}$ & \textbf{100.0} & \textbf{95.7} & \textbf{95.7} & \em{91.3} & \textbf{91.3} & \textbf{91.3}\\
\bottomrule
\end{tabular}

\end{table}

We begin by analyzing incoherent forecasting approaches, focusing on both base and single-task combinations. Among the three base forecasting models, tbats demonstrates the highest performance (\ref{app:alternative_plots}) and single-task combination strategies consistently lead to improvements in forecasting accuracy over the base models. Furthermore, we observe that equal weight (ew) and variance-weighted (ow$_{\text{var}}$) combinations outperform the covariance-weighted (ow$_{\text{cov}}$) approach in terms of both MSE and MAE. Among these, ow$_{\text{var}}$ performs better than ew in all cases when considering MAE, and in three out of seven forecast horizons when evaluated by MSE. 

However, further analysis in \autoref{fig:energy_dm} and \autoref{tab:energy_mcs}, using the Diebold-Mariano and MCS tests (along with the MCB test in \ref{app:mcb}), suggests that these differences are not statistically significant. This result is not surprising, and somehow confirms the robustness of the equal weight combination scheme, an issue well-known in the forecasting literature as `forecast combination puzzle' \citep{Smith2009, Claeskens2016, Qian2019, Frazier2023, Liu2024}.

When we focus on coherent forecasting methods, single-expert reconciliation approaches generally improve on the forecasting accuracy of the base forecasts, but perform worse when compared to the single-task combination approaches. Coherent combination approaches, on the other hand, consistently outperform base models, single-task combinations, and single-expert reconciliation. Moreover, these coherent methods produce forecasts that satisfy all the necessary constraints, which is a critical advantage in many applied contexts. In detail, while the differences between the various coherent methods may not always be statistically significant, overall \autoref{fig:energy_dm} and \autoref{tab:energy_mcs} suggest that  the optimal coherent combination approach occ$_{be}$ delivers the best results in terms of both MSE and MAE, and looks particularly effective in balancing forecast accuracy and coherence.
In addition, scr$_{\text{var}}$, the variance-weighted combination followed by MinT reconciliation, is the most effective sequential coherent combination procedure.

\section{Conclusions, limitations and future work}\label{sec:conclusion}

In this paper, we have introduced several coherent multi-task forecast combination procedures designed for linearly constrained time series, with a focus on an approach involving an optimization-based technique to combine multiple unbiased, possibly incoherent base forecasts, while ensuring coherence across the variables.
Our solution has nice theoretical properties, including unbiasedness and minimum forecast error variance, making it a powerful tool balancing forecast accuracy and coherence in different scenarios.

However, the theoretical solution relies on two key assumptions: the base forecasts are unbiased, and the error covariance matrix is known. As for the former issue, it is left to the user the task of removing any bias from the base forecasts. Addressing the second challenge, we have outlined practical approaches for the estimation of the covariance matrix, such as using in-sample residuals, which consider techniques from both forecast combination and reconciliation literature. These approaches provide a feasible path forward for practitioners dealing with multiple base forecasts in real-world settings, where there is not perfect knowledge of the covariance matrix.

From a practical perspective, the new methodology has been empirically validated through a simulation experiment and in a real-world application, demonstrating its effectiveness and robustness. The results show that the coherent multi-task forecast combination consistently outperforms individual base forecasts, single-task combinations, forecasts reconciled by single experts and sequential coherent combination approaches. This performance highlights the method’s ability to integrate multiple sources of information while maintaining coherence.

Future research offers several directions for further developing of the coherent multi-task forecast combination methodology. One key area involves exploring more advanced and flexible methods for estimating the error covariance matrix. While we have proposed practical solutions using in-sample residuals, more sophisticated techniques could be devised to improve the covariance estimates, particularly in high-dimensional settings where data limitations are common. Another promising direction is the extension of the framework from point to probabilistic forecasting, which is crucial for decision-making in fields like energy, finance, and economics. We aim to pursue these topics in our ongoing research, with the goal of further improving the accuracy and flexibility of this methodology. 

\appendix
\renewcommand{\thesubsection}{\Alph{section}.\arabic{subsection}}

\section{Proofs of Theorem \ref{thm:mmse} and Corollary \ref{crl:unb}}\label{app:proof}

\begin{proof}[Proof of \autoref{thm:mmse}: Optimal linear coherent forecast combination $\widetilde{\yvet}^c$] \;\\
\noindent Consider the augmented (lagrangean) objective function 
	\begin{align*}
		\mathcal{L}\left(\yvet,\lambdavet\right) & = \left(\widehat{\yvet} - \Kvet\yvet\right)^\top\Wvet^{-1}\left(\widehat{\yvet} - \Kvet\yvet\right) + 2\lambdavet^\top \Cvet\yvet \\
		& = \yvet^\top\Kvet^\top\Wvet^{-1}\Kvet\yvet - 2\widehat{\yvet}^\top\Wvet^{-1}\Kvet\yvet + \widehat{\yvet}^\top\Wvet^{-1}\widehat{\yvet} + 2\lambdavet^\top \Cvet\yvet .
	\end{align*}
	The first order condition, obtained by equating to zero the partial derivatives of $\mathcal{L}\left(\yvet,\lambdavet\right)$ wrt $\yvet$ and $\lambdavet$, respectively, results in:
	\begin{equation}
		\label{eq:derivate_new}
		\begin{bmatrix}
			\Kvet^\top\Wvet^{-1}\Kvet & \Cvet^\top \\
			\Cvet & \Zerovet_{(n_u \times 1)}
		\end{bmatrix}
		\begin{bmatrix}
			\;\yvet\;\; \\ \lambdavet
		\end{bmatrix} = 
		\begin{bmatrix}
			\Kvet^\top\Wvet^{-1}\widehat{\yvet} \\ \Zerovet_{(n_u \times 1)}
		\end{bmatrix}.
	\end{equation}
The optimal coherent combination forecast vector $\widetilde{\yvet}^c$ is thus obtained by solving equation (\ref{eq:derivate_new}):
	\begin{equation*}
		\label{eq:occ}
		\begin{bmatrix}
			\;\widetilde{\yvet}^c\;\; \\ \widetilde{\lambdavet}
		\end{bmatrix} = 
		\begin{bmatrix}
			\Kvet^\top\Wvet^{-1}\Kvet & \Cvet^\top \\[.15cm]
			\Cvet & \Zerovet_{(n_u \times 1)}
		\end{bmatrix}^{-1}
		\begin{bmatrix}
			\Kvet^\top\Wvet^{-1}\widehat{\yvet} \\[.15cm] \Zerovet_{(n_u \times 1)}
		\end{bmatrix}.
	\end{equation*}
	As $\Kvet^\top\Wvet^{-1}\Kvet$ has full rank, define $\Wvet_c = \left(\Kvet^\top\Wvet^{-1}\Kvet\right)^{-1}$. Then
	\begingroup 
\addtolength\jot{-2pt}
	\begin{align*}
			\widetilde{\yvet}^c &= \left[\Wvet_c-\Wvet_c\Cvet^\top\left(\Cvet\Wvet_c\Cvet^\top\right)^{-1}\Cvet \Wvet_c\right]\Kvet^\top\Wvet^{-1}\widehat{\yvet}\\
			&= \left[\Ivet_n-\Wvet_c\Cvet^\top\left(\Cvet\Wvet_c\Cvet^\top\right)^{-1}\Cvet\right]\Wvet_c\Kvet^\top\Wvet^{-1}\widehat{\yvet}\\
			&= \left[\Ivet_n-\Wvet_c\Cvet^\top\left(\Cvet\Wvet_c\Cvet^\top\right)^{-1}\Cvet\right]\Omegavet^\top\widehat{\yvet}\\
			&= \Mvet\Omegavet^\top\widehat{\yvet} = \Psivet^\top\widehat{\yvet},
		\end{align*}
		\endgroup
	where $\Psivet^\top = \Mvet\Omegavet^\top$, $\Omegavet = \Wvet^{-1}\Kvet\Wvet_c$, and 
	$\Mvet = \left[\Ivet_n-\Wvet_c\Cvet^\top\left(\Cvet\Wvet_c\Cvet^\top\right)^{-1}\Cvet\right]$.
\end{proof}

\begin{proof}[Proof of \autoref{crl:unb}: Unbiasedness of $\widetilde{\yvet}^c$ and an important property of its error covariance matrix]
The unbiasedness of the $p$ base forecasts $\widehat{\yvet}^j$ can be expressed in compact form as
\begin{equation*}
\label{unbiasedness_yhat}
E\left(\widehat{\yvet}\right) = \Kvet E\left(\yvet\right)= \Kvet\muvet .
\end{equation*}
Then, noting that
$\Omegavet^\top\Kvet = \Wvet_c\Kvet^\top\Wvet^{-1}\Kvet = \Ivet_n$ and $\Cvet\muvet = \Zerovet_{(n_u \times 1)}$,
it is
\begingroup 
\addtolength\jot{-5pt}
\begin{align*}
	E\left(\widetilde{\yvet}^c\right) &= \Mvet\Omegavet^\top E\left(\widehat{\yvet}\right) 
	 = \Mvet\Omegavet^\top\Kvet\muvet
	 = \Mvet\muvet \\ 
	&= \muvet - \Wvet_c\Cvet^\top\left(\Cvet\Wvet_c\Cvet\right)^{-1}\Cvet\muvet = \muvet .
\end{align*}
\endgroup

\noindent To compute $\widetilde{\Wvet}_c$, we start by noting that $\widetilde{\yvet}^c - \yvet = \Mvet\Omegavet^\top\widehat{\yvet} - \yvet = \Mvet\Omegavet^\top\Kvet\yvet + \Mvet\Omegavet^\top\epsvet - \yvet = \Mvet\Omegavet^\top\epsvet$. The error covariance matrix of $\widetilde{\yvet}^c$ is thus given by
\begin{equation*}
\label{eq:Wc_long}
\widetilde{\Wvet}_c = E\left(\Mvet\Omegavet^\top\epsvet\epsvet^\top\Omegavet\Mvet^\top\right) =
\Mvet\Omegavet^\top\Wvet\Omegavet\Mvet^\top = \Mvet\Wvet_c\Mvet^\top ,
\end{equation*}
where the last expression is obtained by noting that
$$
\Omegavet^\top\Wvet\Omegavet = \Wvet_c\underbrace{\Kvet^\top\Wvet^{-1}\Wvet\Wvet^{-1}\Kvet}_{\Wvet_c^{-1}}\Wvet_c = \Wvet_c .
$$
Proceeding further in the calculations gives:
\begingroup 
\addtolength\jot{-5pt}
\begin{align*}
\widetilde{\Wvet}_c & =	\Mvet \Wvet_c \Mvet^\top\\
					& = \Wvet_c - \Wvet_c\Cvet^\top\left(\Cvet\Wvet_c\Cvet^\top\right)^{-1}\Cvet\Wvet_c - \Wvet_c\Cvet^\top\left(\Cvet\Wvet_c\Cvet^\top\right)^{-1}\Cvet\Wvet_c \\
	& \phantom{=} \qquad\qquad\qquad\qquad\qquad\qquad\qquad +\Wvet_c\Cvet^\top\left(\Cvet\Wvet_c\Cvet^\top\right)^{-1}\overbrace{\Cvet\Wvet_c\Cvet^\top\left(\Cvet\Wvet_c\Cvet^\top\right)^{-1}}^{\Ivet_{n_u}}\Cvet\Wvet_c\\
	& = \Wvet_c - 
	\Wvet_c\Cvet^\top\left(\Cvet\Wvet_c\Cvet^\top\right)^{-1}\Cvet\Wvet_c = \Mvet\Wvet_c .
\end{align*}
\endgroup

\noindent Finally, in order to derive (\ref{eq:Wtildebest}), we prove\footnote{This proof extends a result by \cite{Sun2004} about the covariance matrix of the multi-task combined forecast vector $\widehat{\yvet}^c$ in the balanced case.} first the inequality
\begin{equation}
\label{eq:Wc_vs_W}
\Lvet_j\Wvet_c\Lvet_j^\top \preceq \Wvet_{j} .
\end{equation}

\noindent Let $\Kvet_j = \left[\Zerovet_{(n_1 \times n)}\, \ldots\, \Lvet_j\, \ldots\, \Zerovet_{(n_p \times n)}\right]^\top \in \{0,1\}^{m \times n}$ be a matrix whose $j$-th block place is $\Lvet_{j}$ and other blocks are $(n_l \times n)$ zero matrices, $l=1,\ldots, p$, with $l \ne j$. Observing that $\Kvet_j\Lvet_j^\top = \left[\Zerovet_{(n_1 \times n_j)}\, \ldots\, \Ivet_{n_j}\, \ldots\, \Zerovet_{(n_p \times n_j)}\right]^\top$ is a $(m \times n_j)$ matrix such that $(\Lvet_j\Kvet_j^\top)(\Kvet_j\Lvet_j^\top) = \Ivet_{n_j}$, and $\Kvet^\top \Kvet_j\Lvet_j^\top = \Lvet_j^\top$, it easy to check that $\Lvet_j\Kvet_j^\top\Wvet\Kvet_j\Lvet_j^\top = \Wvet_j$. Therefore, applying Schwarz matrix inequality\footnote{Corollary (2.4) in \cite{Rao2000} (p. 311) says that, for matrices $\Avet$ and $\Bvet$ having the same number of columns, it is:
\begin{equation}
	\label{eq:Rao_corollary}
	\Avet\Avet^\top \succeq \Avet\Bvet^\top\left(\Bvet\Bvet^\top\right)^{-}\Bvet\Avet^\top,
\end{equation}
where the symbol $^{-}$ denotes the generalized inverse. In this case, it is: \label{footnote_Rao2000}
$$
\overbrace{\underbrace{\Lvet_j\Kvet_j^\top\Wvet^{\frac{1}{2}}}_{\Avet}
	\underbrace{\Wvet^{-\frac{1}{2}}\Kvet}_{\Bvet^\top}}^{\Lvet_j}
\overbrace{\left[\underbrace{\Kvet^\top\Wvet^{-\frac{1}{2}}}_{\Bvet}
	\underbrace{\Wvet^{-\frac{1}{2}}\Kvet}_{\Bvet^\top}
	\right]^{-1}}^{\Wvet_c}
\overbrace{\underbrace{\Kvet^\top\Wvet^{-\frac{1}{2}}}_{\Bvet}
	\underbrace{\Wvet^{\frac{1}{2}}\Kvet_j\Lvet_j^\top}_{\Avet^\top}}^{\Lvet_j^\top}.
$$}, we have:
\begingroup 
\vspace*{-0.5\baselineskip}
\addtolength\jot{-2.5pt}
\begin{align}
	\Lvet_j\Wvet_c\Lvet_j^\top & = \Lvet_j\left(\Kvet^\top\Wvet^{-1}\Kvet\right)^{-1}\Lvet_j^\top \nonumber\\
			& = 
			\left[\left(\Wvet^{-\frac{1}{2}}\Kvet\right)^\top\left(\Wvet^{\frac{1}{2}}\Kvet_j\Lvet_j^\top \right)\right]^\top
			\left[\left(\Wvet^{-\frac{1}{2}}\Kvet\right)^\top\left(\Wvet^{-\frac{1}{2}}\Kvet\right)\right]^{-1} \nonumber 
			 \left[\left(\Wvet^{-\frac{1}{2}}\Kvet\right)^\top\left(\Wvet^{\frac{1}{2}}\Kvet_j\Lvet_j^\top \right)\right] \nonumber\\
			& \preceq  \left(\Wvet^{\frac{1}{2}}\Kvet_j\Lvet_j^\top \right)^\top\left(\Wvet^{\frac{1}{2}}\Kvet_j\Lvet_j^\top \right) = \Lvet_j\Kvet_j^\top\Wvet\Kvet_j\Lvet_j^\top = \Wvet_{j} ,\label{eq:ineqL}
\end{align}
\endgroup
which proves $\Lvet_j\Wvet_c\Lvet_j^\top \preceq \Wvet_{j}$.

\noindent Consider now the inequality $\Lvet_j\widetilde{\Wvet}_c\Lvet_j^\top \preceq \Lvet_j\Wvet_c\Lvet_j^\top$. In Remark 3 we have shown that $\widetilde{\yvet}^c$ may be interpreted as a single-expert optimal combination reconciled forecast of the multi-task forecast combination vector $\widehat{\yvet}^c$. It follows that $\widetilde{\Wvet}_c \preceq \Wvet_c$ \citep{Panagiotelis2021}. In addition, since $\widetilde{\Wvet}_c = \Mvet\Wvet_c$, we can write 
\begin{equation}
	\label{eq:WtildevsWc}
\widetilde{\Wvet}_c = 
\Wvet_c - \Deltavet,
\end{equation}
where $\Deltavet = \Wvet_c\Cvet^\top\left(\Cvet\Wvet_c\Cvet^\top\right)^{-1}\Cvet\Wvet_c$ is a symmetric positive semidefinite matrix. Since pre-multiplying and post-multiplying a positive semidefinite matrix by the same matrix always gives a positive semidefinite matrix, pre- and post- multiplying expression (\ref{eq:WtildevsWc}) by $\Lvet_j$ gives:
\begin{equation}
	\Lvet_j\widetilde{\Wvet}_c\Lvet_j^\top = 
	\Lvet_j\Wvet_c\Lvet_j^\top - \Deltavet_{j}  \preceq \Lvet_j\Wvet_c\Lvet_j^\top ,\label{eq:Wtc}
\end{equation}
where $\Deltavet_{j} = \Lvet_j\Deltavet\Lvet_j^\top$ is a positive semidefinite matrix. Finally, from (\ref{eq:Wc_vs_W}) and (\ref{eq:Wtc}) it results:
\begin{equation*}
		\Lvet_j\widetilde{\Wvet}_c\Lvet_j^\top \preceq \Lvet_j\Wvet_c\Lvet_j^\top \preceq  \Wvet_{j} . 
		\vspace*{-1\baselineskip}
\end{equation*}
\vspace*{-1\baselineskip}
\end{proof}


\section{Balanced case: $n_j=n$ $\forall j$ and $p_i = p$ $\forall i$}
\label{app:rect}
\noindent In the balanced case, expression (\ref{eq:Stonemodel}) may be grouped as $p$ linear models relating the (unbiased) base forecast $\widehat{\yvet}^j$ to the target vector $\yvet$:
\begin{equation}
	\label{eq:hatyreg}
	\widehat{\yvet}^j = \yvet + \epsvet^j, \quad j=1,\ldots,p,
\end{equation}
where the base forecast errors $\epsvet^j$, $j=1,\ldots,p$, are $(n \times 1)$ zero-mean random vectors, with $(n \times n)$ covariance matrices $\Wvet_{jl} = E\left[\epsvet^j(\epsvet^l)^\top\right]$, $j,l = 1, \ldots p$. The $p$ relationships in expression (\ref{eq:hatyreg}) can be written in compact form as a simplified multivariate multiple linear model \citep{Seber1984}:
\begin{equation}
	\label{mod:multivariate}
	\begin{aligned}
		\widehat{\Yvet} & = \Big[
			\;\overbrace{\yvet^{\phantom{A}} \, \ldots^{\phantom{A}} \, \yvet}^{p \text{ times}} \;
		\Big] + \Evet \\
		& = \yvet\Unovet_p^\top + \Evet,
	\end{aligned}
\end{equation}
where $\widehat{\Yvet}$ is the ($n \times p$) matrix defined in (\ref{Yhat}), containing the base forecasts of the target vector $\yvet$ produced by $p$ different experts, and the ($n \times p$) matrix $\Evet = \left[\epsvet^1 \; \ldots \; \epsvet^j \; \ldots \; \epsvet^p\right]$ contains the base forecast errors $\epsvet^j = \widehat{\yvet}^j - \yvet$, $j=1, \ldots, p$, across all models and variables. In both matrices $\widehat{\Yvet}$ and $\Evet$, each column denotes a single expert $j = 1,\dots,p$, and each row denotes an individual variable $i = 1, \dots, n$. Applying the $\text{vec}\left(\cdot\right)$ operator to expression (\ref{mod:multivariate}), noting that $\text{vec}\left(\yvet\Unovet_p^\top\right) = \left(\Unovet_p \otimes \Ivet_n\right)\yvet$, yields the multiple regression model
\begin{equation}\label{mod:stone_unconstrained_be}
	\widehat{\yvet} = \Kvet\yvet + \epsvet ,
\end{equation}
where $\widehat{\yvet} = \text{vec}\big(\,\widehat{\Yvet}\,\big)$ is a $(np \times 1)$ vector containing the base forecasts stacked by-expert, $\Kvet = \Unovet_p \otimes \Ivet_n$ is a $(np \times n)$ matrix, and $\epsvet = \text{vec}\left(\Evet\right)$ is a ($np \times 1$) vector of errors with zero average and $(np\times np)$, known and positive definite  covariance matrix $\Wvet$:
\begin{equation}
\Wvet = \begin{bmatrix}[0.9]
	\Wvet_{1} & \cdots & \Wvet_{1j} & \cdots & \Wvet_{1p} \\
	\vdots & \ddots & \vdots & \ddots & \vdots \\
	\Wvet_{j1} & \cdots & \Wvet_{j} & \cdots & \Wvet_{jp} \\
	\vdots & \ddots & \vdots & \ddots & \vdots \\
	\Wvet_{p1} & \cdots & \Wvet_{pj} & \cdots & \Wvet_{p}
\end{bmatrix}
\end{equation}
where $\Wvet_j \equiv \Wvet_{jj}$, $j=1,\ldots,p$.				

\section{A simple numerical example}\label{app:simpex}


\noindent Consider the case $n=3$, i.e., $\yvet = \left[y_1 \; y_2 \; y_3\right]^\top$, $p_1 = 2, p_2 = 1$ and $p_3 = 4$, with
$$
\widehat{\Yvet} = \begin{bmatrix}[0.9]
	\widehat{y}_1^1 & \bullet & \widehat{y}_1^3 & \bullet\\[.2cm]
	\bullet         & \widehat{y}_2^2 & \bullet & \bullet\\[.2cm]
	\widehat{y}_3^1 & \widehat{y}_3^2 & \widehat{y}_3^3 & \widehat{y}_3^4\\
\end{bmatrix},
$$
where the symbol $\bullet$ denotes an empty entry of $\widehat{\Yvet}$. In this case, it is $p=4$ and $m = 7$. The vector containing all the available base forecasts is given by
$$
\widehat{\yvet} = \begin{bmatrix}[0.9]
	\widehat{\yvet}^1 \\ \widehat{\yvet}^2 \\ \widehat{\yvet}^3 \\ \widehat{\yvet}^4
\end{bmatrix} =
\begin{bmatrix}
	\widehat{y}_{1}^1 &
	\widehat{y}_{3}^1 &
	\widehat{y}_{2}^2 &
	\widehat{y}_{3}^2 &
	\widehat{y}_{1}^3 &
	\widehat{y}_{3}^3 &
	\widehat{y}_{3}^4
\end{bmatrix}^\top,
$$
and the selection matrices 
$$
\Lvet_1 = \Lvet_3 =
\begin{bmatrix}[0.9] 1 & 0 & 0 \\ 0 & 0 & 1 \end{bmatrix}, \quad
\Lvet_2 =
\begin{bmatrix}[0.9] 0 & 1 & 0 \\ 0 & 0 & 1 \end{bmatrix}, \quad
\Lvet_4 = \begin{bmatrix} 0 & 0 & 1 \end{bmatrix} ,
$$
are such that
$$
\Lvet_1\yvet = \Lvet_3\yvet = \begin{bmatrix}[0.9]
	y_1 \\ y_3
\end{bmatrix}, \quad \Lvet_2\yvet = \begin{bmatrix}[0.9]
	y_2 \\ y_3
\end{bmatrix}, \quad \Lvet_4\yvet = y_3 .
$$
Finally, it is
$$
\Lvet = \text{Diag}\left(\Lvet_1,\Lvet_2,\Lvet_3,\Lvet_4\right) = \begin{bmatrix}[0.9]
	1 & 0 & 0 & 0 & 0 & 0 & 0 & 0 & 0 & 0 & 0 & 0\\
	0 & 0 & 1 & 0 & 0 & 0 & 0 & 0 & 0 & 0 & 0 & 0\\
	0 & 0 & 0 & 0 & 0 & 1 & 0 & 0 & 0 & 0 & 0 & 0\\
	0 & 0 & 0 & 0 & 0 & 0 & 0 & 0 & 1 & 0 & 0 & 0\\
	0 & 0 & 0 & 0 & 0 & 0 & 0 & 0 & 0 & 1 & 0 & 0\\
	0 & 0 & 0 & 0 & 0 & 0 & 0 & 0 & 0 & 0 & 1 & 0\\
	0 & 0 & 0 & 0 & 0 & 0 & 0 & 0 & 0 & 0 & 0 & 1\\
\end{bmatrix}.
$$


\section{Equivalent derivations of the optimal coherent linear forecast combination}
\label{app:alternative_proofs}

\noindent Let us first prove that the solution found with the zero-constrained approach is the same as the one obtained with the structural approach. At this end, consider the model and the linear constraints of the zero-constrained approach according to the `by-expert' representation\footnote{An analogous reasoning may be followed if the `by-variable' representation is adopted.}:
\begin{equation}
\label{eq:zc_be}
\widehat{\yvet} = \Kvet\yvet + \epsvet, \text{ s.t. } \Cvet\yvet=\Zerovet.
\end{equation}
The constraint $\Cvet\yvet = \Zerovet$ is equivalent to\footnote{$\Cvet\yvet = \Zerovet$ means $\uvet - \Avet\bvet = \Zerovet$, i.e., $\uvet = \Avet\bvet$, while $\yvet = \Svet\bvet$ corresponds to $\uvet = \Avet \bvet$ and $\bvet = \bvet$, this last expression being trivially true.} $\yvet = \Svet\bvet$, that can be explicitly incorporated into expression (\ref{eq:zc_be}). Since it is $\Cvet\Svet = \Zerovet_{(n_u \times n_b)}$, this operation results in an unrestricted linear model:
\begin{equation}
\label{eq:st_be}
\widehat{\yvet} = \Kvet\Svet\bvet + \epsvet,  \text{ s.t. } \Cvet\Svet\bvet=\Zerovet \quad \Rightarrow \quad \widehat{\yvet} = \Kvet\Svet\bvet + \epsvet ,
\end{equation}
and the rhs of expression (\ref{eq:st_be}) corresponds to the linear model of the structural approach according to the `by-expert' base forecasts' representation. Then, computing the MMSE coherent forecast combination of $\yvet$ through either model (\ref{eq:zc_be}) or (\ref{eq:st_be}) gives the same result, the only difference being the closed-form expressions of $\widehat{\yvet}^c$ and its error covariance matrix $\widetilde{\Wvet}_c$ according to either approaches (see Table \ref{tab:equiv_solutions}). This establishes the equivalence between zero-constrained and structural approaches in the optimal coherent linear forecast combination method\footnote{In passing, this way of operating also establishes the equivalence between zero-constrained and structural approaches when $p=1$, which corresponds to the optimal linear cross-sectional forecast reconciliation, complementing in a sense the result shown by \cite{Wickra2019} in their online appendix, where this simple step is missing.}.

\subsection{Zero-constrained approach and by-expert base forecasts' organization}
\noindent See the proofs of \autoref{thm:mmse} and \autoref{crl:unb} in \ref{app:proof}.

\subsection{Zero-constrained approach and by-variable base forecasts' organization}

\noindent Pre-multiplication of model (\ref{mod:stone_unconstrained_selection}) by the permutation matrix $\Pvet \in \{0,1\}^{m \times m}$ (see section \ref{sec:be_vs_bv}) gives the equivalent, re-parameterized model according to a by-variable data organization:
\begin{equation}
\label{mod:stone_unconstrained_bv}
\widehat{\yvet}_{\text{bv}} = \Jvet\yvet + \epsvet_{\text{bv}} ,
\end{equation}
where $\widehat{\yvet}_{\text{bv}} = \Pvet\widehat{\yvet}$, $\epsvet_{\text{bv}} = \Pvet\epsvet$,
\begin{equation}
\label{eq:Jmatrix}
\Jvet = \Pvet\Kvet =
\begin{bmatrix}[0.7]
\Unovet_{p_1} & \cdots & \Zerovet_{(p_1 \times 1)} & \cdots & \Zerovet_{(p_1 \times 1)} \\
\vdots & \ddots & \vdots & \ddots & \vdots \\
\Zerovet_{(p_i \times 1)} & \cdots & \Unovet_{p_i} & \cdots & \Zerovet_{(p_i \times 1)} \\
\vdots & \ddots & \vdots & \ddots & \vdots \\
\Zerovet_{(p_n \times 1)} & \cdots & \Zerovet_{p_n \times 1} & \cdots & \Unovet_{p_n}
\end{bmatrix}
\in \{0,1\}^{m \times n},
\end{equation}
and $\epsvet_{\text{bv}} \in \mathbb{R}^m$ is a zero-mean random vector, with covariance matrix $\Sigmavet = E\left(\epsvet_{\text{bv}}\epsvet_{\text{bv}}^\top\right) = \Pvet\Wvet\Pvet^\top$. In this case, the solution to the linearly constrained quadratic program
\begin{equation}\label{mod:stone_new_bv}
	\widetilde{\yvet}^c = \argmin_{\yvet} 
	\left(\widehat{\yvet}_{\text{bv}} - \Jvet\yvet\right)^\top\Sigmavet^{-1}\left(\widehat{\yvet}_{\text{bv}} - \Jvet\yvet\right) \quad
	\text{s.t. } \Cvet\yvet = \Zerovet_{(n_u \times 1)}
\end{equation}
can be obtained as for the case of by-expert data organization (see \ref{app:proof}):
\begin{equation}
	\label{eq:ytilde_occ_bv}
	\widetilde{\yvet}^c = \Phivet^\top\widehat{\yvet}_{\text{bv}} = \Mvet\Gammavet^\top\widehat{\yvet}_{\text{bv}},
\end{equation}
with weight matrix $\Phi^\top = \Mvet\Gammavet^\top \in \mathbb{R}^{n \times m}$, where
\begingroup 
\begin{align}
	\Mvet & = \left[\Ivet_n - \Sigmavet_c\Cvet^\top\left(\Cvet\Sigmavet_c\Cvet^\top\right)^{-1}\Cvet\right],\label{eq:Mvet_bv}\\
	\Gammavet & = \Sigmavet^{-1}\Jvet\Sigmavet_c,\label{eq:Gammavet} \\
	\Sigmavet_c & = \left(\Jvet^\top\Sigmavet^{-1}\Jvet\right)^{-1}.\label{eq:Wc_bv}
\end{align}
\endgroup
In order to establish the equivalence between expressions (\ref{eq:ytilde_occ_bv}) and (\ref{eq:ytilde_occ}), it will suffice to show that $\Wvet_c = \Sigmavet_c$. For, noting that $\Kvet = \Pvet^\top \Jvet$ and $\Wvet = \Pvet^\top\Sigmavet\Pvet$, we have:
$$
\Wvet_c = \left(\Kvet^\top\Wvet^{-1}\Kvet\right)^{-1} =  \left(\Jvet^\top\Pvet\Pvet^\top\Sigmavet^{-1}\Pvet\Pvet^\top\Jvet\right)^{-1} =
\left(\Jvet^\top\Sigmavet^{-1}\Jvet\right)^{-1} =
\Sigmavet_c .
$$
It follows that:
\begin{itemize}[nosep]
	\item $ \left[\Ivet_n - \Wvet_c\Cvet^\top\left(\Cvet\Wvet_c\Cvet^\top\right)^{-1}\Cvet\right] = \left[\Ivet_n - \Sigmavet_c\Cvet^\top\left(\Cvet\Sigmavet_c\Cvet^\top\right)^{-1}\Cvet\right] = \Mvet$;
	\item $\Omegavet = \Pvet^\top\Gammavet$ and $\Gammavet = \Pvet\Omegavet$;
	\item $\widetilde{\yvet}^c = \Mvet\Omegavet^\top\widehat{\yvet} = \Mvet\Gammavet^\top\widehat{\yvet}_{\text{bv}}$.
\end{itemize}
In addition, denoting $\Gammavet = \begin{bmatrix} \Gammavet_1 \; \cdots \; \Gammavet_i \; \cdots \; \Gammavet_n \end{bmatrix}^\top$ and $\Phivet = \begin{bmatrix} \Phivet_1 \; \cdots \; \Phivet_i \; \cdots \; \Phivet_n \end{bmatrix}^\top$, we can express the MMSE multi-task combination forecast $\widehat{\yvet}^c$ and the MMSE multi-task coherent combination forecast  $\widetilde{\yvet}^c$ as, respectively,
$$
\widehat{\yvet}^c = \Gammavet^\top\widehat{\yvet}_{\text{bv}} =
\displaystyle\sum_{i=1}^{n}\Gammavet_i\widehat{\yvet}_i
\quad\text{ and }
\quad
\widetilde{\yvet}^c = \Phivet^\top\widehat{\yvet}_{\text{bv}} =
\displaystyle\sum_{i=1}^{n}\Phivet_i\widehat{\yvet}_i ,
$$
with $\Phivet_i = \Mvet\Gammavet_i$, $i=1, \ldots, n$.

\subsection{Structural approach and by-expert base forecasts' organization}

\noindent Deriving the MMSE multi-task coherent combination forecast according to the structural approach can be seen as a multi-task extension of the optimal combination forecast reconciliation approach proposed by \cite{Athanasopoulos2009} and \cite{Hyndman2011}. This approach, originally developed to reconcile the incoherent base forecasts of a genuine hierarchical/grouped time series, has been extended by \cite{Girolimetto2024} to deal with also general linearly constrained multiple series not having a natural hierarchical/grouped structure. In particular, \cite{Girolimetto2024} have shown how a general linearly constrained time series can be represented in a structural-like form, as in expression (\ref{eq:structural-like-formulation}), i.e.,  $\yvet = \Svet\bvet$. Building on this, in the following we derive the same results as \autoref{thm:mmse} and \autoref{crl:unb}.
 
Let us consider the $p$ linear models
\begin{equation}
\label{eq:structural_be}
\widehat{\yvet}^j = \Lvet_j\Svet\bvet + \epsvet^j, \quad j=1, \ldots, p,
\end{equation}
where $\Lvet_j$ is the selection matrix defined in \autoref{sec:Model_and_number}. Expression (\ref{eq:structural_be}) can be written in compact form as
\begin{equation}
	\label{eq:matrix_structural_be}
	\widehat{\yvet} = \Kvet\Svet\bvet + \epsvet,
\end{equation}
with $E\left(\epsvet\epsvet^\top\right) = \Wvet$. Coherent multi-task forecast combination can be achieved through model (\ref{eq:matrix_structural_be}), by extending to the case of $p \ge 2$ vectors of base forecasts the minimum trace (MinT) solution proposed by \cite{Wickra2019} as for cross-sectional forecast reconciliation. This result is shown by the following Theorem.

\begin{thm}[Optimal linear coherent forecast combination $\widetilde{\yvet}^c$ (structural approach)] 
\label{thm:mmse_structural}
Let $\widehat{y}_i^j$, $1 \le n_j \le n$, $1 \le j \le p_i$, be the unbiased forecasts of the scalar variable $y_i$, which is part of an $n$-dimensional target forecast vector $\yvet$, and denote $\widehat{\yvet}^j \in \mathbb{R}^{n_j}$ the vector of the base forecasts produced by the $j$-th expert. Let the zero-mean base forecast errors be $\epsvet^j = \widehat{\yvet}^j - \yvet$, $j=1, \ldots, p$, and assume that $\epsvet^{j}$ and $\epsvet^{l}$ ($j \ne l$) are correlated, with variance and cross-covariance matrices denoted by $\Wvet_{j} \in \mathbb{R}^{n_j \times n_j}$ and $\Wvet_{jl} \in \mathbb{R}^{n_j \times n_l}$, respectively. Denoting $\Wvet \in \mathbb{R}^{m \times m}$ the p.d. error covariance matrix as in (\ref{eq:Wmatrix}), the MMSE coherent linear forecast combination is given by:
\begin{equation}
	\label{eq:optcoherent_struc_be}
	\widetilde{\yvet}^c =
 \Svet\Gvet\widehat{\yvet},
\end{equation}
where 
\begin{equation}
	\label{eq:Gmatrix_be}
	\Gvet = \left(\Svet^\top\Wvet^{-1}_c\Svet\right)^{-1}\Svet^\top\Kvet^\top \Wvet^{-1}.
\end{equation}
\end{thm}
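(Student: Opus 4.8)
The plan is to run the minimum-trace (MinT) argument of \cite{Wickra2019}, but now for the multi-expert regression model \eqref{eq:matrix_structural_be}. Since a forecast vector is coherent precisely when it lies in the column space of $\Svet$, I would write every candidate coherent linear combination as $\widetilde{\yvet}^c = \Svet\widetilde{\bvet}$ with $\widetilde{\bvet} = \Gvet\widehat{\yvet}$ for an unknown $\Gvet \in \mathbb{R}^{n_b \times m}$. From \eqref{eq:matrix_structural_be} and $\yvet = \Svet\bvet$ we have $E(\widehat{\yvet}) = \Kvet\Svet\,E(\bvet)$, so imposing unbiasedness $E(\widetilde{\yvet}^c) = \Svet\Gvet\Kvet\Svet\,E(\bvet) = \Svet\,E(\bvet) = E(\yvet)$ yields $\Svet\Gvet\Kvet\Svet = \Svet$ and hence, by the full column rank of $\Svet$, the linear restriction $\Gvet\Kvet\Svet = \Ivet_{n_b}$. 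Under this restriction the estimation error reduces to $\widetilde{\yvet}^c - \yvet = \Svet\Gvet(\Kvet\Svet\bvet + \epsvet) - \Svet\bvet = \Svet\Gvet\epsvet$, so $\widetilde{\Wvet}_c = \Svet\Gvet\Wvet\Gvet^\top\Svet^\top$ and the MMSE criterion becomes the constrained quadratic program $\min_{\Gvet}\text{tr}(\Svet\Gvet\Wvet\Gvet^\top\Svet^\top)$ subject to $\Gvet\Kvet\Svet = \Ivet_{n_b}$, exactly the program displayed for this cell in \autoref{tab:equiv_solutions}.

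\noindent I would then solve this program with a matrix Lagrange multiplier $\Lambdavet \in \mathbb{R}^{n_b \times n_b}$, i.e.\ set to zero the derivative of $\text{tr}(\Svet\Gvet\Wvet\Gvet^\top\Svet^\top) - 2\,\text{tr}(\Lambdavet^\top(\Gvet\Kvet\Svet - \Ivet_{n_b}))$ in $\Gvet$. The stationarity condition is $\Svet^\top\Svet\,\Gvet\,\Wvet = \Lambdavet\,\Svet^\top\Kvet^\top$, giving $\Gvet = (\Svet^\top\Svet)^{-1}\Lambdavet\,\Svet^\top\Kvet^\top\Wvet^{-1}$; substituting into $\Gvet\Kvet\Svet = \Ivet_{n_b}$ determines $\Lambdavet = \Svet^\top\Svet\,(\Svet^\top\Kvet^\top\Wvet^{-1}\Kvet\Svet)^{-1}$, and the factor $\Svet^\top\Svet$ cancels, leaving $\Gvet = (\Svet^\top\Kvet^\top\Wvet^{-1}\Kvet\Svet)^{-1}\Svet^\top\Kvet^\top\Wvet^{-1}$. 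Since $\Kvet^\top\Wvet^{-1}\Kvet = \Wvet_c^{-1}$ by \eqref{eq:Wc}, this is precisely \eqref{eq:Gmatrix_be}; and because the objective is a positive semidefinite quadratic form and the constraint is affine, this stationary point is the global minimizer. The analogues of \autoref{crl:unb} then drop out by substitution: the restriction already gives unbiasedness, while plugging the optimal $\Gvet$ into $\widetilde{\Wvet}_c = \Svet\Gvet\Wvet\Gvet^\top\Svet^\top$ and using $\Svet^\top\Kvet^\top\Wvet^{-1}\Kvet\Svet = \Svet^\top\Wvet_c^{-1}\Svet$ collapses it to $\widetilde{\Wvet}_c = \Svet(\Svet^\top\Wvet_c^{-1}\Svet)^{-1}\Svet^\top$.

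\noindent A cleaner alternative I would keep in reserve avoids the Lagrangian entirely: the program above is generalized least squares estimation of $\bvet$ in the linear model $\widehat{\yvet} = (\Kvet\Svet)\bvet + \epsvet$ with $E(\epsvet\epsvet^\top) = \Wvet$, so the Aitken/Gauss--Markov theorem identifies $\widetilde{\bvet} = (\Svet^\top\Kvet^\top\Wvet^{-1}\Kvet\Svet)^{-1}\Svet^\top\Kvet^\top\Wvet^{-1}\widehat{\yvet}$ as the linear unbiased estimator minimizing the whole error covariance in the L\"owner order, hence a fortiori its trace, and then $\widetilde{\yvet}^c = \Svet\widetilde{\bvet}$. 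One could equally invoke the equivalence established at the opening of this appendix, where \eqref{eq:matrix_structural_be} is obtained from the zero-constrained model of \autoref{thm:mmse} via the substitution $\yvet = \Svet\bvet$: the two constrained problems then share the same optimum, so the statement transfers at once.

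\noindent The bookkeeping above is routine; the only points that genuinely need care are the invertibility claims, which I regard as the main (minor) obstacle. I would record that $\Kvet$ has full column rank because $\Kvet^\top\Kvet = \text{Diag}(p_1,\dots,p_n) \succ \Zerovet$ (every variable is forecast by at least one expert, i.e.\ $p_i \ge 1$), so $\Wvet_c$ in \eqref{eq:Wc} is well defined; combining this with the full column rank of $\Svet$ makes $\Kvet\Svet$ full column rank, and since $\Wvet \succ \Zerovet$ this makes $\Svet^\top\Kvet^\top\Wvet^{-1}\Kvet\Svet = \Svet^\top\Wvet_c^{-1}\Svet \succ \Zerovet$, which is exactly the inverse demanded by the closed form. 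Once this is spelled out, the proof is a one-line substitution.
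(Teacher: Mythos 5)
Your proof follows essentially the same route as the paper's: derive the unbiasedness restriction $\Gvet\Kvet\Svet = \Ivet_{n_b}$, minimize $\text{tr}\left(\Svet\Gvet\Wvet\Gvet^\top\Svet^\top\right)$ with a matrix Lagrange multiplier, and simplify using $\Kvet^\top\Wvet^{-1}\Kvet = \Wvet_c^{-1}$ to arrive exactly at (\ref{eq:Gmatrix_be}), with the error covariance collapsing to $\Svet\left(\Svet^\top\Wvet_c^{-1}\Svet\right)^{-1}\Svet^\top$ as in the paper's corollary. Your additional touches — the convexity argument for global optimality, the GLS/Aitken reading, and the rank check $\Kvet^\top\Kvet = \text{Diag}(p_1,\dots,p_n)$ guaranteeing the required inverses — are correct refinements of points the paper leaves implicit.
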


\begin{proof}
From the proof of Corollary \ref{crl:unb}, we know that the unbiasedness of the base forecasts may be expressed in compact form as $E\left(\widehat{\yvet}\right) = \Kvet E(\yvet)$. In addition, it is worth noting that $E(\yvet) = \Svet E(\bvet)$, and thus $E\left(\widetilde{\yvet}^c\right) = \Svet\Gvet E\left(\widehat{\yvet}\right) = \Svet\Gvet\Kvet E(\yvet) = \Svet\Gvet\Kvet \Svet E(\bvet)$. Then, for $\widetilde{\yvet}^c$ to be unbiased, i.e., $E\left(\widetilde{\yvet}^c\right) = E(\yvet) = \Svet E(\bvet)$, it must be $\Svet\Gvet\Kvet\Svet = \Svet$, that is:
\begin{equation}
	\label{eq:unbiasedness_condition}
	\Gvet\Kvet\Svet = \Ivet_{n_b}.
\end{equation}
The $(n \times 1)$ vector of the coherent combined forecast error $\widetilde{\epsvet} = \widetilde{\yvet}^c - \yvet$, may be expressed as
$$
\widetilde{\epsvet} = \Svet\Gvet\widehat{\yvet} - \yvet =
\Svet\Gvet\left(\widehat{\yvet} - \Kvet\yvet\right).
$$
Noting that $\Wvet = E\big[\left(\widehat{\yvet} - \Kvet\yvet\right)\;\left(\widehat{\yvet} - \Kvet\yvet\right)^\top\big]$, the covariance matrix of $\widetilde{\yvet}^c$ is given by $\widetilde{\Wvet}_c = E\big[\widetilde{\epsvet}\;\widetilde{\epsvet}^\top\big] = \Svet\Gvet\Wvet\Gvet^\top \Svet^\top$. We want to find the optimal combination weight matrix $\Gvet$ under restriction (\ref{eq:unbiasedness_condition}) that minimizes the trace of $\widetilde{\Wvet}_c$, i.e., $\text{tr}\left(\Svet\Gvet\Wvet\Gvet^\top \Svet^\top\right)$. At this end, consider the lagrangean function
$$
\mathcal{L}\left(\Gvet,\Lambdavet\right) = \text{tr}\left(\Svet\Gvet\Wvet\Gvet^\top \Svet^\top\right) - 2\text{tr}\left[\Lambdavet^\top \left(\Gvet\Kvet\Svet - \Ivet_{n_b}\right) \right] ,
$$
where $\Lambdavet \in \mathbb{R}^{n_b \times n_b}$ is a matrix of Lagrange multipliers. Solving the first-order conditions system
\begin{align*}
\Wvet\Gvet^\top\Svet^\top\Svet - \Kvet\Svet \Lambdavet & = \Zerovet_{m \times n_b} \\[-0.25cm]
\Gvet\Svet - \Ivet_{n_b} & = \Zerovet_{(n_b \times n_b)}
\end{align*}
results in the following expressions:
\begingroup 
\begin{align*}
	\Kvet\Svet \Lambdavet  &= \Wvet\Gvet^\top\Svet^\top\Svet \\
	\Wvet^{-1}\Kvet\Svet \Lambdavet &= \Gvet^\top\Svet^\top\Svet &  \mbox{{\footnotesize pre-multiplying by} }\Wvet^{-1}\mbox{{\footnotesize on both sides}}\\
	\Svet^\top\Kvet^\top\Wvet^{-1}\Kvet\Svet \Lambdavet &= \Svet^\top\Kvet^\top\Gvet^\top\Svet^\top\Svet &  \mbox{{\footnotesize pre-multiplying by} }\Svet^\top\Kvet^\top \mbox{ {\footnotesize on both sides}}\\
	\Svet^\top\Wvet^{-1}_c\Svet \Lambdavet &= \Svet^\top\Kvet^\top\Gvet^\top\Svet^\top\Svet &  \Wvet^{-1}_c = \Kvet^\top\Wvet^{-1}\Kvet\\
	\Svet^\top\Wvet^{-1}_c\Svet \Lambdavet &= \Svet^\top\Svet & \mbox{{\footnotesize according to (\ref{eq:unbiasedness_condition}),} } \Svet^\top\Kvet^\top\Gvet^\top = \Ivet_{n_b}\\
	\Lambdavet &= \left(\Svet^\top\Wvet^{-1}_c \Svet\right)^{-1}\Svet^\top\Svet.
\end{align*}
\endgroup
Then,
\begingroup 
\begin{align*}
	\Wvet\Gvet^\top\Svet^\top\Svet - \Kvet\Svet \Lambdavet &= \Zerovet \\
	\Wvet\Gvet^\top\Svet^\top\Svet &=  \Kvet\Svet \Lambdavet\\
	\Wvet\Gvet^\top\Svet^\top\Svet &= \Kvet\Svet \left(\Svet^\top\Wvet^{-1}_c\Svet\right)^{-1}\Svet^\top\Svet \\
	\Wvet\Gvet^\top &= \Kvet\Svet \left(\Svet^\top\Wvet^{-1}_c\Svet\right)^{-1} & \quad \mbox{{\footnotesize post-multiplying by}} \left(\Svet^\top\Svet\right)^{-1} \mbox{{\footnotesize on both sides}}\\
	\Gvet^\top &=  \Wvet^{-1}\Kvet\Svet\left(\Svet^\top\Wvet^{-1}_c\Svet\right)^{-1} & \quad \mbox{{\footnotesize pre-multiplying by} } \Wvet^{-1} \mbox{{\footnotesize on both sides}}\\
	\Gvet &= \left(\Svet^\top\Wvet^{-1}_c\Svet\right)^{-1}\Svet^\top\Kvet^\top \Wvet^{-1} .
\end{align*}
\endgroup
\vspace*{-2\baselineskip} \;\\
\end{proof}

\begin{rml}
Since $\Cvet\Svet = \Zerovet_{(n_u \times n_b)}$, the combined forecast vector (\ref{eq:optcoherent_struc_be}) is also coherent, i.e., $\Cvet\widehat{\yvet}^c = \Zerovet_{(n_u \times 1)}$. The unbiasedness property may be checked by noting that $E\left(\Svet\Gvet\widehat{\yvet}\right) = \Svet\Gvet E\left(\widehat{\yvet}\right) = \Svet\Gvet\Kvet E(\yvet)$. Then,
$$
	E\left(\Svet\Gvet\widehat{\yvet}\right) = \Svet\overbrace{\Gvet\Kvet\Svet}^{\Ivet_{n_b}} E(\bvet) = \Svet E(\bvet) = E(\yvet).
$$
\end{rml}

\begin{rml}
A straightforward interpretation of expressions (\ref{eq:optcoherent_struc_be}) and (\ref{eq:Gmatrix_be}) is obtained by considering the GLS estimator of $\bvet$ in model (\ref{eq:matrix_structural_be}):
$$
	\widetilde{\bvet}^c = \left(\Svet^\top\Wvet^{-1}_c\Svet\right)^{-1}\Svet^\top\Kvet^\top\Wvet^{-1}\widehat{\yvet} = \Gvet\widehat{\yvet},
$$
where $\Gvet$ is the $(n_b \times m)$ optimal combination weight matrix mapping all the $m$ available forecasts into the $n_b$ reconciled forecasts of the free (bottom) variables of series $\yvet_t$. This result is then used to compute the complete vector of coherent combined forecasts according to the structural representation linking the target vectors $\bvet$ and $\yvet$, that is, $\widetilde{\yvet}^c = \Svet\widetilde{\bvet}^c = \Svet\Gvet\widehat{\yvet}$. In addition, since $E\left[\left(\widetilde{\bvet}^c - \bvet\right)\left(\widetilde{\bvet}^c - \bvet\right)^\top\right] = \left(\Svet^\top\Wvet^{-1}_c\Svet\right)^{-1}$, it follows $E\left[\left(\widetilde{\yvet}^c - \yvet\right)\left(\widetilde{\yvet}^c - \yvet\right)^\top\right] = \Svet\left(\Svet^\top\Wvet^{-1}_c\Svet\right)^{-1}\Svet^\top = \widetilde{\Wvet}_c$ (see Corollary \ref{crl:Wtilde_structural}).
\end{rml}

\begin{crl}[Property of the error covariance matrix (structural approach)]\label{crl:Wtilde_structural}
The error covariance matrix of $\widetilde{\yvet}^c$ in expression (\ref{eq:optcoherent_struc_be}) is given by 
	\begin{equation}
		\label{eq:coherentWc_be}
		\widetilde{\Wvet}_c = \Svet\left(\Svet^\top\Wvet^{-1}_c\Svet\right)^{-1}\Svet^\top.
	\end{equation}
Furthermore, it is
$$
\Lvet_j\widetilde{\Wvet}_c\Lvet_j^\top \preceq \Wvet_{j}, \quad j=1, \ldots, p.
$$
\end{crl}

\begin{proof}
In the proof of Theorem \ref{thm:mmse_structural} it was noted that the error covariance matrix of $\widetilde{\yvet}^c$ is equal to $\widetilde{\Wvet}_c = \Svet\Gvet\Wvet\Gvet^\top \Svet^\top$. Replacing $\Gvet$ with expression (\ref{eq:Gmatrix_be}) gives:
\begingroup 
\begin{align}
\widetilde{\Wvet}_c & = \Svet\left(\Svet^\top\Wvet^{-1}_c\Svet\right)^{-1} \underbrace{\Svet^\top\Kvet^\top\Wvet^{-1} \Wvet\Wvet^{-1} \Kvet\Svet\left(\Svet^\top\Wvet^{-1}_c\Svet\right)^{-1}}_{\Ivet_{n_b}} \Svet^\top \nonumber \\
 & = \Svet\left(\Svet^\top\Wvet^{-1}_c\Svet\right)^{-1} \Svet^\top . \label{eq:Wtilde_struc_final}
\end{align}
\endgroup

\noindent As for the derivation of $\Lvet_j\widetilde{\Wvet}_c\Lvet_j^\top \preceq \Wvet_{j}$, $j=1, \ldots, p$, by applying Schwarz matrix inequality, we have\footnote{It is worth noting that $\left(\Wvet^{-1/2}\Kvet\Svet\right)^\top\left(\Wvet^{1/2}\Kvet_j\right) = \Svet^\top\Kvet^\top\Kvet_j =  \Svet^\top\Lvet_j^\top$.}$^,$\footnote{According to footnote \ref{footnote_Rao2000}, in this case it is:
	$$
	\overbrace{\underbrace{\Kvet_j^\top\Wvet^{1/2}}_{\Avet}
		\underbrace{\Wvet^{-1/2}\Kvet\Svet}_{\Bvet^\top}}^{\Svet}
	\overbrace{\Bigg[\underbrace{\Svet^\top\Kvet^\top\Wvet^{-1/2}}_{\Bvet}
		\underbrace{\Wvet^{-1/2}\Kvet\Svet}_{\Bvet^\top}
		\Bigg]^{-1}}^{\left[\Svet^\top\Kvet^\top\Wvet^{-1}\Kvet\Svet\right]^{-1}}
	\overbrace{\underbrace{\Svet^\top\Kvet^\top\Wvet^{-1/2}}_{\Bvet}
		\underbrace{\Wvet^{1/2}\Kvet_j}_{\Avet^\top}}^{\Svet^\top} .
	$$
}
\begingroup 
	\begin{align} 
		\Lvet_j\widetilde{\Wvet}_c\Lvet_j^\top & = \Lvet_j\Svet\left(\Svet^\top\Wvet^{-1}_c\Svet\right)^{-1}\Svet^\top\Lvet_j^\top  = \Lvet_j\Svet\left(\Svet^\top\Kvet^\top\Wvet^{-1}\Kvet\Svet\right)^{-1}\Svet^\top\Lvet_j^\top \nonumber \\
		& = \left[\left(\Wvet^{-1/2}\Kvet\Svet\right)^\top\left(\Wvet^{1/2}\Kvet_j\Lvet_j^\top\right)\right]^\top \left[\left(\Wvet^{-1/2}\Kvet\Svet\right)^\top\left(\Wvet^{-1/2}\Kvet\Svet\right)\right]^{-1} \nonumber\\
		& \qquad\qquad\qquad\qquad\qquad\qquad\qquad\qquad\qquad\qquad \left[\left(\Wvet^{-1/2}\Kvet\Svet\right)^\top\left(\Wvet^{1/2}\Kvet_j\Lvet_j^\top\right)\right] \nonumber \\
		& \preceq  \left(\Wvet^{1/2}\Kvet_j\Lvet_j^\top\right)^\top\left(\Wvet^{1/2}\Kvet_j\Lvet_j^\top\right) = \Wvet_{j} \label{eq:SchwarzCoheren},
	\end{align}
\endgroup
where $\Kvet_j = \left[\Zerovet \ldots \Lvet_j \ldots \Zerovet\right]^\top$ is a $(m \times n)$ matrix whose $j$-th block place is the selection matrix $\Lvet_j$, and other blocks are $(n_l \times n)$ zero matrices, $l=1,\ldots,p, \; l \ne j$.
\end{proof}

\subsection{Structural approach and by-variable base forecasts' organization}

\noindent Pre-multiplication of model (\ref{eq:matrix_structural_be}) by the permutation matrix $\Pvet$ gives the equivalent, re-parameterized structural model according to a by-variable data organization:
\begin{equation}
\label{eq:matrix_structural_bb}
\widehat{\yvet}_{\text{bv}} = \Jvet\Svet\bvet + \epsvet_{\text{bv}},
\end{equation}
where $\Jvet = \Pvet\Kvet \in \{0,1\}^{m \times n}$ is defined by (\ref{eq:Jmatrix}). In this case, the MMSE coherent linear forecast combination using the by-variable base forecasts' organization is given by:
\begin{equation}
	\label{eq:optcoherent_struc_bv}
	\widetilde{\yvet}^c = \Svet\Gvet_{\text{bv}}\widehat{\yvet}_{\text{bv}},
\end{equation}
where $\Gvet_{\text{bv}} = \left(\Svet^\top\Jvet^\top\Sigmavet^{-1}\Jvet\Svet\right)^{-1}\Svet^\top\Jvet^\top \Sigmavet^{-1}$.
Since $\Jvet^\top\Sigmavet^{-1}\Jvet = \Kvet^\top\Pvet^\top\Sigmavet^{-1}\Pvet = \Kvet^\top\Wvet^{-1}\Kvet = \Wvet_c^{-1}$, it is $\Gvet_{\text{bv}} = \left(\Svet^\top\Wvet^{-1}_c\Svet\right)^{-1}\Svet^\top\Kvet^\top \Wvet^{-1} = \Gvet$, and  $\left(\Svet^\top\Wvet^{-1}_c\Svet\right) = \left(\Svet^\top\Jvet^\top\Pvet\Pvet^\top\Sigmavet^{-1}\Pvet\Pvet^\top\Jvet\Svet\right)= \left(\Svet^\top\Sigmavet^{-1}_c\Svet\right)$. In addition, $\Gvet_{\text{bv}} = \Gvet\Pvet^\top$, $\Gvet = \Gvet_{\text{bv}}\Pvet$, and $\widetilde{\yvet}_c = \Svet\Gvet\widehat{\yvet} = \Svet\Gvet_{\text{bv}}\widehat{\yvet}_{\text{bv}}$.

\clearpage

\section{Simulation: extended figures and tables}\label{app:sim}

\noindent The forecast accuracy is evaluated also using the Average Relative Mean Squared ($AvgRelMSE$) as
$$
AvgRelMSE^{app} = \left(\prod_{s = 1}^{500} \prod_{i = 1}^{n} \frac{MSE_{s,i}^{app}}{MSE_{s,i}^{\text{ew}}}\right)^{\frac{1}{500n}} \; \text{with} \;\; MSE_{s,i}^{app} = \displaystyle\frac{1}{100}\sum_{q = 1}^{100} \left(y_{i,s,q}-\overline{y}_{i,s,q}^{\,app}\right)^2,
$$
where $i=1,...,n$ denotes the variable, $app$ is the approach used, $y_{i,s,q}$ is the observed value and $\overline{y}_{i,s,q}^{\,app}$ is the forecast value using the $app$ approach (coherent or incoherent, \autoref{tab:approaches_app}). 

\vskip2cm

\begin{table}[!htb]
	\centering
		\caption{Summary of forecasting approaches used in the simulation and the forecasting experiment on the Australian daily electricity generation time series.
		For single-model reconciliation and coherent combination sequential approaches, the shrunk in-sample MSE 
		matrix is used for the reconciliation.}
	\label{tab:approaches_app}
	\footnotesize
	\setlength{\tabcolsep}{3pt}
	\resizebox{\linewidth}{!}{


\end{table}				

\clearpage
\section{MCB Nemenyi test}
\label{app:mcb}

\begin{figure}[!htb]
	\centering
	\includegraphics[width = 0.9\linewidth]{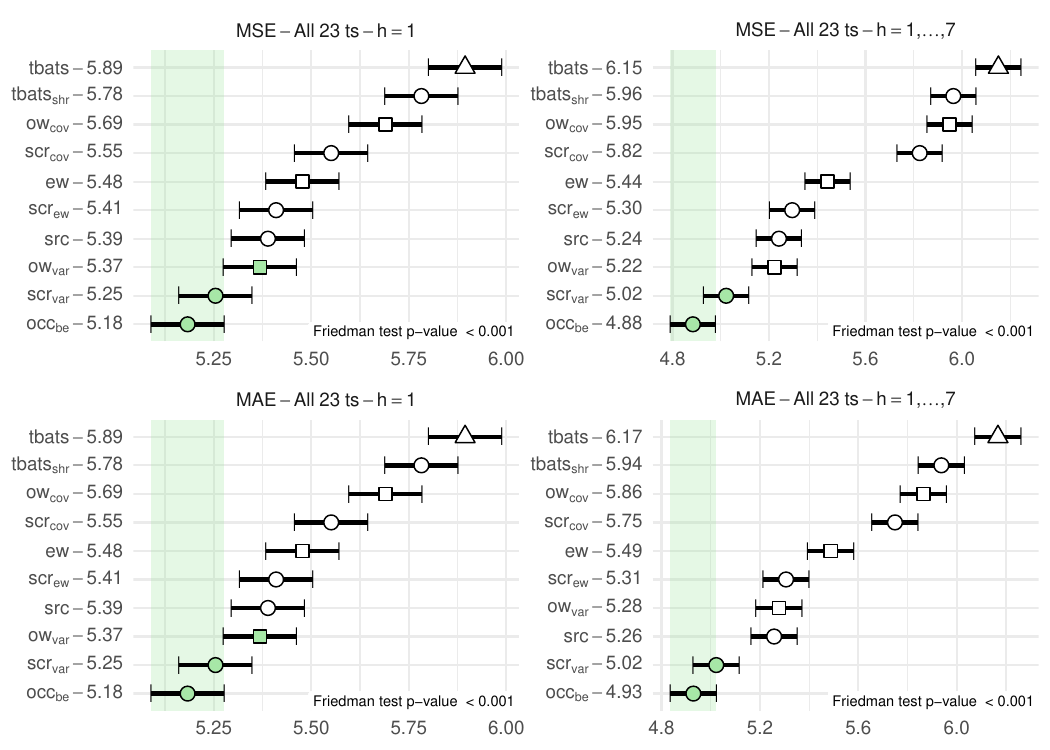}
	\caption{MCB Nemenyi test for the electricity generation dataset using the MSE (first row) and MAE (second row) at different forecast horizon ($h = 1,...,7$ for the first column and $h = 1$ for the second column). In each panel, the Friedman test p-value is reported in the lower-right corner. The mean rank of each approach is shown to the right of its name. Statistically significant differences in performance are indicated if the intervals of two forecast reconciliation procedures do not overlap. Thus, approaches that do not overlap with the green interval are considered significantly worse than the best, and vice versa.}\label{fig:energy_mcb_oa}
\end{figure}

\clearpage
\section{Extended figures and tables}
\label{app:alternative_plots}

\begin{table}[!htb]
	\centering
	\scriptsize
	\def\arraystretch{0.9}
	\caption{AvgRelMAE of daily forecasts for the Australian electricity generation dataset for all, upper and bottom time series. Benchmark approach: ew. Bold entries identify the best performing approaches, italic entries identify the second best and red color denotes forecasts worse than the benchmark.}\label{tab:oa_Energy_accuracy_mae}
	
\begin{tabular}[t]{>{}l|cccccccc}
\toprule
\multicolumn{1}{c}{\textbf{}} & \multicolumn{8}{c}{\textbf{Forecast horizon}} \\
\cmidrule(l{0pt}r{0pt}){2-9}
\multicolumn{1}{l|}{\textbf{Approach}} & 1 & 2 & 3 & 4 & 5 & 6 & 7 & 1:7\\
\midrule
\addlinespace[0.3em]
\multicolumn{9}{c}{\textbf{All 23 time series}}\\
\addlinespace[0em]
\multicolumn{9}{l}{\textit{Base (incoherent forecasts)}}\\
stlf & \textcolor{red}{1.0697} & \textcolor{red}{1.0737} & \textcolor{red}{1.0996} & \textcolor{red}{1.1029} & \textcolor{red}{1.0806} & \textcolor{red}{1.0672} & \textcolor{red}{1.0801} & \textcolor{red}{1.0830}\\
arima & \textcolor{red}{1.0561} & \textcolor{red}{1.0650} & \textcolor{red}{1.0501} & \textcolor{red}{1.0426} & \textcolor{red}{1.0459} & \textcolor{red}{1.0414} & \textcolor{red}{1.0339} & \textcolor{red}{1.0471}\\
tbats & \textcolor{red}{1.0447} & \textcolor{red}{1.0515} & \textcolor{red}{1.0348} & \textcolor{red}{1.0266} & \textcolor{red}{1.0305} & \textcolor{red}{1.0288} & \textcolor{red}{1.0201} & \textcolor{red}{1.0331}\\
\addlinespace[0em]
\multicolumn{9}{l}{\textit{Single model reconciliation}}\\
stlf$_{\text{shr}}$ & \textcolor{red}{1.0618} & \textcolor{red}{1.0696} & \textcolor{red}{1.0970} & \textcolor{red}{1.1001} & \textcolor{red}{1.0776} & \textcolor{red}{1.0625} & \textcolor{red}{1.0758} & \textcolor{red}{1.0788}\\
arima$_{\text{shr}}$ & \textcolor{red}{1.0512} & \textcolor{red}{1.0546} & \textcolor{red}{1.0334} & \textcolor{red}{1.0241} & \textcolor{red}{1.0232} & \textcolor{red}{1.0237} & \textcolor{red}{1.0182} & \textcolor{red}{1.0311}\\
tbats$_{\text{shr}}$ & \textcolor{red}{1.0320} & \textcolor{red}{1.0413} & \textcolor{red}{1.0231} & \textcolor{red}{1.0134} & \textcolor{red}{1.0212} & \textcolor{red}{1.0208} & \textcolor{red}{1.0188} & \textcolor{red}{1.0235}\\
\addlinespace[0em]
\multicolumn{9}{l}{\textit{Combination (incoherent forecasts)}}\\
ew & 1.0000 & 1.0000 & 1.0000 & 1.0000 & 1.0000 & 1.0000 & 1.0000 & \vphantom{2} 1.0000\\
ow$_{\text{var}}$ & 0.9927 & 0.9921 & 0.9982 & 0.9983 & 0.9967 & 0.9967 & 0.9990 & 0.9965\\
ow$_{\text{cov}}$ & \textcolor{red}{1.0216} & \textcolor{red}{1.0208} & \textcolor{red}{1.0390} & \textcolor{red}{1.0423} & \textcolor{red}{1.0307} & \textcolor{red}{1.0250} & \textcolor{red}{1.0325} & \textcolor{red}{1.0309}\\
\addlinespace[0em]
\multicolumn{9}{l}{\textit{Coherent combination}}\\
src & 0.9939 & 0.9941 & 0.9919 & 0.9895 & 0.9887 & \em{0.9908} & \em{0.9933} & 0.9915\\
scr$_{\text{ew}}$ & 0.9952 & 0.9959 & 0.9911 & 0.9908 & 0.9908 & 0.9932 & 0.9961 & 0.9930\\
scr$_{\text{var}}$ & \em{0.9819} & \em{0.9803} & \em{0.9869} & \em{0.9895} & \em{0.9887} & 0.9913 & 0.9972 & \em{0.9882}\\
scr$_{\text{cov}}$ & \textcolor{red}{1.0081} & \textcolor{red}{1.0081} & \textcolor{red}{1.0270} & \textcolor{red}{1.0327} & \textcolor{red}{1.0245} & \textcolor{red}{1.0197} & \textcolor{red}{1.0250} & \textcolor{red}{1.0215}\\
occ$_{\text{bv}}$ & \textcolor{red}{1.0157} & \textcolor{red}{1.0162} & \textcolor{red}{1.0344} & \textcolor{red}{1.0415} & \textcolor{red}{1.0289} & \textcolor{red}{1.0229} & \textcolor{red}{1.0283} & \textcolor{red}{1.0275}\\
occ$_{\text{shr}}$ & \textcolor{red}{1.0061} & \textcolor{red}{1.0111} & \textcolor{red}{1.0262} & \textcolor{red}{1.0325} & \textcolor{red}{1.0342} & \textcolor{red}{1.0335} & \textcolor{red}{1.0390} & \textcolor{red}{1.0269}\\
occ$_{\text{wls}}$ & 0.9885 & 0.9886 & 0.9956 & 0.9962 & 0.9945 & 0.9945 & 0.9975 & 0.9938\\
occ$_{\text{be}}$ & \textbf{0.9779} & \textbf{0.9745} & \textbf{0.9843} & \textbf{0.9852} & \textbf{0.9851} & \textbf{0.9880} & \textbf{0.9926} & \textbf{0.9843}\\
\midrule
\addlinespace[0.3em]
\multicolumn{9}{c}{\textbf{8 upper time series}}\\
\addlinespace[0em]
\multicolumn{9}{l}{\textit{Base (incoherent forecasts)}}\\
stlf & \textcolor{red}{1.0541} & \textcolor{red}{1.0736} & \textcolor{red}{1.1198} & \textcolor{red}{1.1272} & \textcolor{red}{1.1003} & \textcolor{red}{1.0785} & \textcolor{red}{1.0983} & \textcolor{red}{1.0953}\\
arima & \textcolor{red}{1.0900} & \textcolor{red}{1.0906} & \textcolor{red}{1.0664} & \textcolor{red}{1.0526} & \textcolor{red}{1.0568} & \textcolor{red}{1.0558} & \textcolor{red}{1.0439} & \textcolor{red}{1.0637}\\
tbats & \textcolor{red}{1.0662} & \textcolor{red}{1.0660} & \textcolor{red}{1.0423} & \textcolor{red}{1.0277} & \textcolor{red}{1.0366} & \textcolor{red}{1.0348} & \textcolor{red}{1.0219} & \textcolor{red}{1.0407}\\
\addlinespace[0em]
\multicolumn{9}{l}{\textit{Single model reconciliation}}\\
stlf$_{\text{shr}}$ & \textcolor{red}{1.0443} & \textcolor{red}{1.0654} & \textcolor{red}{1.1126} & \textcolor{red}{1.1224} & \textcolor{red}{1.0967} & \textcolor{red}{1.0753} & \textcolor{red}{1.0933} & \textcolor{red}{1.0893}\\
arima$_{\text{shr}}$ & \textcolor{red}{1.0684} & \textcolor{red}{1.0694} & \textcolor{red}{1.0448} & \textcolor{red}{1.0324} & \textcolor{red}{1.0297} & \textcolor{red}{1.0300} & \textcolor{red}{1.0194} & \textcolor{red}{1.0401}\\
tbats$_{\text{shr}}$ & \textcolor{red}{1.0379} & \textcolor{red}{1.0468} & \textcolor{red}{1.0316} & \textcolor{red}{1.0174} & \textcolor{red}{1.0299} & \textcolor{red}{1.0228} & \textcolor{red}{1.0177} & \textcolor{red}{1.0283}\\
\addlinespace[0em]
\multicolumn{9}{l}{\textit{Combination (incoherent forecasts)}}\\
ew & 1.0000 & 1.0000 & 1.0000 & 1.0000 & 1.0000 & 1.0000 & 1.0000 & \vphantom{1} 1.0000\\
ow$_{\text{var}}$ & 0.9856 & 0.9864 & 0.9982 & 0.9988 & 0.9955 & 0.9943 & 0.9989 & 0.9945\\
ow$_{\text{cov}}$ & \textcolor{red}{1.0129} & \textcolor{red}{1.0191} & \textcolor{red}{1.0529} & \textcolor{red}{1.0588} & \textcolor{red}{1.0422} & \textcolor{red}{1.0292} & \textcolor{red}{1.0434} & \textcolor{red}{1.0385}\\
\addlinespace[0em]
\multicolumn{9}{l}{\textit{Coherent combination}}\\
src & 0.9863 & 0.9911 & 0.9928 & 0.9918 & 0.9910 & 0.9891 & \textbf{0.9912} & 0.9904\\
scr$_{\text{ew}}$ & 0.9885 & 0.9923 & 0.9905 & 0.9913 & 0.9922 & 0.9909 & 0.9932 & 0.9912\\
scr$_{\text{var}}$ & \em{0.9663} & \em{0.9662} & \em{0.9841} & \em{0.9889} & \em{0.9894} & \em{0.9882} & 0.9986 & \em{0.9840}\\
scr$_{\text{cov}}$ & 0.9912 & 0.9953 & \textcolor{red}{1.0316} & \textcolor{red}{1.0433} & \textcolor{red}{1.0334} & \textcolor{red}{1.0243} & \textcolor{red}{1.0345} & \textcolor{red}{1.0237}\\
occ$_{\text{bv}}$ & \textcolor{red}{1.0026} & \textcolor{red}{1.0058} & \textcolor{red}{1.0410} & \textcolor{red}{1.0521} & \textcolor{red}{1.0371} & \textcolor{red}{1.0286} & \textcolor{red}{1.0372} & \textcolor{red}{1.0306}\\
occ$_{\text{shr}}$ & 0.9810 & 0.9942 & \textcolor{red}{1.0256} & \textcolor{red}{1.0397} & \textcolor{red}{1.0374} & \textcolor{red}{1.0353} & \textcolor{red}{1.0454} & \textcolor{red}{1.0246}\\
occ$_{\text{wls}}$ & 0.9774 & 0.9800 & 0.9967 & 0.9969 & 0.9959 & 0.9928 & 0.9959 & 0.9913\\
occ$_{\text{be}}$ & \textbf{0.9593} & \textbf{0.9610} & \textbf{0.9823} & \textbf{0.9849} & \textbf{0.9838} & \textbf{0.9842} & \em{0.9917} & \textbf{0.9791}\\
\midrule
\addlinespace[0.3em]
\multicolumn{9}{c}{\textbf{15 bottom time series}}\\
\addlinespace[0em]
\multicolumn{9}{l}{\textit{Base (incoherent forecasts)}}\\
stlf & \textcolor{red}{1.0781} & \textcolor{red}{1.0737} & \textcolor{red}{1.0890} & \textcolor{red}{1.0902} & \textcolor{red}{1.0703} & \textcolor{red}{1.0612} & \textcolor{red}{1.0704} & \textcolor{red}{1.0764}\\
arima & \textcolor{red}{1.0384} & \textcolor{red}{1.0517} & \textcolor{red}{1.0415} & \textcolor{red}{1.0372} & \textcolor{red}{1.0401} & \textcolor{red}{1.0337} & \textcolor{red}{1.0286} & \textcolor{red}{1.0384}\\
tbats & \textcolor{red}{1.0334} & \textcolor{red}{1.0438} & \textcolor{red}{1.0309} & \textcolor{red}{1.0260} & \textcolor{red}{1.0273} & \textcolor{red}{1.0256} & \textcolor{red}{1.0192} & \textcolor{red}{1.0290}\\
\addlinespace[0em]
\multicolumn{9}{l}{\textit{Single model reconciliation}}\\
stlf$_{\text{shr}}$ & \textcolor{red}{1.0712} & \textcolor{red}{1.0719} & \textcolor{red}{1.0888} & \textcolor{red}{1.0884} & \textcolor{red}{1.0676} & \textcolor{red}{1.0557} & \textcolor{red}{1.0666} & \textcolor{red}{1.0732}\\
arima$_{\text{shr}}$ & \textcolor{red}{1.0421} & \textcolor{red}{1.0468} & \textcolor{red}{1.0273} & \textcolor{red}{1.0197} & \textcolor{red}{1.0197} & \textcolor{red}{1.0203} & \textcolor{red}{1.0175} & \textcolor{red}{1.0263}\\
tbats$_{\text{shr}}$ & \textcolor{red}{1.0289} & \textcolor{red}{1.0384} & \textcolor{red}{1.0186} & \textcolor{red}{1.0112} & \textcolor{red}{1.0166} & \textcolor{red}{1.0197} & \textcolor{red}{1.0194} & \textcolor{red}{1.0210}\\
\addlinespace[0em]
\multicolumn{9}{l}{\textit{Combination (incoherent forecasts)}}\\
ew & 1.0000 & 1.0000 & 1.0000 & 1.0000 & 1.0000 & 1.0000 & 1.0000 & 1.0000\\
ow$_{\text{var}}$ & 0.9965 & 0.9951 & 0.9982 & 0.9980 & 0.9973 & 0.9979 & 0.9990 & 0.9975\\
ow$_{\text{cov}}$ & \textcolor{red}{1.0263} & \textcolor{red}{1.0218} & \textcolor{red}{1.0316} & \textcolor{red}{1.0336} & \textcolor{red}{1.0246} & \textcolor{red}{1.0228} & \textcolor{red}{1.0266} & \textcolor{red}{1.0269}\\
\addlinespace[0em]
\multicolumn{9}{l}{\textit{Coherent combination}}\\
src & 0.9979 & 0.9957 & 0.9915 & \em{0.9883} & \em{0.9875} & \em{0.9917} & \em{0.9944} & 0.9920\\
scr$_{\text{ew}}$ & 0.9988 & 0.9978 & 0.9914 & 0.9905 & 0.9900 & 0.9944 & 0.9976 & 0.9939\\
scr$_{\text{var}}$ & \em{0.9904} & \em{0.9878} & \em{0.9884} & 0.9898 & 0.9883 & 0.9929 & 0.9965 & \em{0.9905}\\
scr$_{\text{cov}}$ & \textcolor{red}{1.0172} & \textcolor{red}{1.0150} & \textcolor{red}{1.0245} & \textcolor{red}{1.0271} & \textcolor{red}{1.0198} & \textcolor{red}{1.0172} & \textcolor{red}{1.0201} & \textcolor{red}{1.0204}\\
occ$_{\text{bv}}$ & \textcolor{red}{1.0228} & \textcolor{red}{1.0218} & \textcolor{red}{1.0308} & \textcolor{red}{1.0358} & \textcolor{red}{1.0246} & \textcolor{red}{1.0198} & \textcolor{red}{1.0236} & \textcolor{red}{1.0258}\\
occ$_{\text{shr}}$ & \textcolor{red}{1.0197} & \textcolor{red}{1.0203} & \textcolor{red}{1.0266} & \textcolor{red}{1.0288} & \textcolor{red}{1.0325} & \textcolor{red}{1.0325} & \textcolor{red}{1.0356} & \textcolor{red}{1.0282}\\
occ$_{\text{wls}}$ & 0.9945 & 0.9933 & 0.9950 & 0.9958 & 0.9937 & 0.9955 & 0.9984 & 0.9952\\
occ$_{\text{be}}$ & \textbf{0.9879} & \textbf{0.9819} & \textbf{0.9853} & \textbf{0.9854} & \textbf{0.9859} & \textbf{0.9901} & \textbf{0.9931} & \textbf{0.9870}\\
\bottomrule
\end{tabular}

\end{table}

\begin{table}[!htb]
	\centering
	\scriptsize
	\def\arraystretch{0.9}
	\caption{AvgRelMSE of daily forecasts for the Australian electricity generation dataset for all, upper and bottom time series. Benchmark approach: ew. Bold entries identify the best performing approaches, italic entries identify the second best and red color denotes forecasts worse than the benchmark.}\label{tab:oa_Energy_accuracy_mse}
	
\begin{tabular}[t]{>{}l|cccccccc}
\toprule
\multicolumn{1}{c}{\textbf{}} & \multicolumn{8}{c}{\textbf{Forecast horizon}} \\
\cmidrule(l{0pt}r{0pt}){2-9}
\multicolumn{1}{l|}{\textbf{Approach}} & 1 & 2 & 3 & 4 & 5 & 6 & 7 & 1:7\\
\midrule
\addlinespace[0.3em]
\multicolumn{9}{c}{\textbf{All 23 time series}}\\
\addlinespace[0em]
\multicolumn{9}{l}{\textit{Base (incoherent forecasts)}}\\
stlf & \textcolor{red}{1.1205} & \textcolor{red}{1.1629} & \textcolor{red}{1.2351} & \textcolor{red}{1.2494} & \textcolor{red}{1.2252} & \textcolor{red}{1.2151} & \textcolor{red}{1.2280} & \textcolor{red}{1.2146}\\
arima & \textcolor{red}{1.1106} & \textcolor{red}{1.1158} & \textcolor{red}{1.0842} & \textcolor{red}{1.0701} & \textcolor{red}{1.0741} & \textcolor{red}{1.0726} & \textcolor{red}{1.0615} & \textcolor{red}{1.0808}\\
tbats & \textcolor{red}{1.0796} & \textcolor{red}{1.0780} & \textcolor{red}{1.0445} & \textcolor{red}{1.0270} & \textcolor{red}{1.0322} & \textcolor{red}{1.0288} & \textcolor{red}{1.0142} & \textcolor{red}{1.0393}\\
\addlinespace[0em]
\multicolumn{9}{l}{\textit{Single model reconciliation}}\\
stlf$_{\text{shr}}$ & \textcolor{red}{1.1022} & \textcolor{red}{1.1512} & \textcolor{red}{1.2258} & \textcolor{red}{1.2418} & \textcolor{red}{1.2183} & \textcolor{red}{1.2051} & \textcolor{red}{1.2204} & \textcolor{red}{1.2045}\\
arima$_{\text{shr}}$ & \textcolor{red}{1.0953} & \textcolor{red}{1.0904} & \textcolor{red}{1.0592} & \textcolor{red}{1.0398} & \textcolor{red}{1.0395} & \textcolor{red}{1.0433} & \textcolor{red}{1.0370} & \textcolor{red}{1.0532}\\
tbats$_{\text{shr}}$ & \textcolor{red}{1.0478} & \textcolor{red}{1.0577} & \textcolor{red}{1.0304} & \textcolor{red}{1.0108} & \textcolor{red}{1.0219} & \textcolor{red}{1.0213} & \textcolor{red}{1.0116} & \textcolor{red}{1.0257}\\
\addlinespace[0em]
\multicolumn{9}{l}{\textit{Combination (incoherent forecasts)}}\\
ew & 1.0000 & 1.0000 & 1.0000 & 1.0000 & 1.0000 & 1.0000 & 1.0000 & \vphantom{2} 1.0000\\
ow$_{\text{var}}$ & 0.9840 & 0.9881 & 0.9995 & \textcolor{red}{1.0032} & \textcolor{red}{1.0020} & \textcolor{red}{1.0028} & \textcolor{red}{1.0054} & 0.9995\\
ow$_{\text{cov}}$ & \textcolor{red}{1.0279} & \textcolor{red}{1.0494} & \textcolor{red}{1.0972} & \textcolor{red}{1.1103} & \textcolor{red}{1.1009} & \textcolor{red}{1.0993} & \textcolor{red}{1.1055} & \textcolor{red}{1.0908}\\
\addlinespace[0em]
\multicolumn{9}{l}{\textit{Coherent combination}}\\
src & 0.9827 & 0.9855 & 0.9863 & \em{0.9833} & \textbf{0.9852} & \textbf{0.9873} & \textbf{0.9911} & \em{0.9859}\\
scr$_{\text{ew}}$ & 0.9875 & 0.9898 & 0.9859 & 0.9859 & \em{0.9885} & \em{0.9905} & \em{0.9962} & 0.9890\\
scr$_{\text{var}}$ & \em{0.9586} & \em{0.9683} & \em{0.9838} & 0.9942 & 0.9982 & \textcolor{red}{1.0017} & \textcolor{red}{1.0114} & 0.9910\\
scr$_{\text{cov}}$ & \textcolor{red}{1.0026} & \textcolor{red}{1.0287} & \textcolor{red}{1.0795} & \textcolor{red}{1.0972} & \textcolor{red}{1.0942} & \textcolor{red}{1.0913} & \textcolor{red}{1.0981} & \textcolor{red}{1.0773}\\
occ$_{\text{bv}}$ & \textcolor{red}{1.0174} & \textcolor{red}{1.0420} & \textcolor{red}{1.0903} & \textcolor{red}{1.1056} & \textcolor{red}{1.0953} & \textcolor{red}{1.0915} & \textcolor{red}{1.0983} & \textcolor{red}{1.0834}\\
occ$_{\text{shr}}$ & 0.9914 & \textcolor{red}{1.0201} & \textcolor{red}{1.0626} & \textcolor{red}{1.0813} & \textcolor{red}{1.0930} & \textcolor{red}{1.0972} & \textcolor{red}{1.1090} & \textcolor{red}{1.0713}\\
occ$_{\text{wls}}$ & 0.9752 & 0.9810 & 0.9949 & \textcolor{red}{1.0012} & \textcolor{red}{1.0009} & \textcolor{red}{1.0015} & \textcolor{red}{1.0051} & 0.9960\\
occ$_{\text{be}}$ & \textbf{0.9481} & \textbf{0.9560} & \textbf{0.9754} & \textbf{0.9831} & 0.9891 & 0.9939 & 0.9993 & \textbf{0.9808}\\
\midrule
\addlinespace[0.3em]
\multicolumn{9}{c}{\textbf{8 upper time series}}\\
\addlinespace[0em]
\multicolumn{9}{l}{\textit{Base (incoherent forecasts)}}\\
stlf & \textcolor{red}{1.0772} & \textcolor{red}{1.1468} & \textcolor{red}{1.2474} & \textcolor{red}{1.2722} & \textcolor{red}{1.2344} & \textcolor{red}{1.2084} & \textcolor{red}{1.2267} & \textcolor{red}{1.2148}\\
arima & \textcolor{red}{1.1820} & \textcolor{red}{1.1781} & \textcolor{red}{1.1254} & \textcolor{red}{1.0947} & \textcolor{red}{1.0934} & \textcolor{red}{1.0978} & \textcolor{red}{1.0783} & \textcolor{red}{1.1143}\\
tbats & \textcolor{red}{1.1420} & \textcolor{red}{1.1319} & \textcolor{red}{1.0842} & \textcolor{red}{1.0533} & \textcolor{red}{1.0613} & \textcolor{red}{1.0573} & \textcolor{red}{1.0389} & \textcolor{red}{1.0744}\\
\addlinespace[0em]
\multicolumn{9}{l}{\textit{Single model reconciliation}}\\
stlf$_{\text{shr}}$ & \textcolor{red}{1.0549} & \textcolor{red}{1.1254} & \textcolor{red}{1.2294} & \textcolor{red}{1.2580} & \textcolor{red}{1.2232} & \textcolor{red}{1.1962} & \textcolor{red}{1.2165} & \textcolor{red}{1.1989}\\
arima$_{\text{shr}}$ & \textcolor{red}{1.1473} & \textcolor{red}{1.1357} & \textcolor{red}{1.0925} & \textcolor{red}{1.0600} & \textcolor{red}{1.0507} & \textcolor{red}{1.0535} & \textcolor{red}{1.0415} & \textcolor{red}{1.0749}\\
tbats$_{\text{shr}}$ & \textcolor{red}{1.0844} & \textcolor{red}{1.0924} & \textcolor{red}{1.0643} & \textcolor{red}{1.0367} & \textcolor{red}{1.0499} & \textcolor{red}{1.0417} & \textcolor{red}{1.0300} & \textcolor{red}{1.0529}\\
\addlinespace[0em]
\multicolumn{9}{l}{\textit{Combination (incoherent forecasts)}}\\
ew & 1.0000 & 1.0000 & 1.0000 & 1.0000 & 1.0000 & 1.0000 & 1.0000 & \vphantom{1} 1.0000\\
ow$_{\text{var}}$ & 0.9668 & 0.9766 & 0.9981 & \textcolor{red}{1.0065} & \textcolor{red}{1.0037} & \textcolor{red}{1.0032} & \textcolor{red}{1.0079} & 0.9981\\
ow$_{\text{cov}}$ & \textcolor{red}{1.0004} & \textcolor{red}{1.0377} & \textcolor{red}{1.1091} & \textcolor{red}{1.1301} & \textcolor{red}{1.1107} & \textcolor{red}{1.0957} & \textcolor{red}{1.1062} & \textcolor{red}{1.0939}\\
\addlinespace[0em]
\multicolumn{9}{l}{\textit{Coherent combination}}\\
src & 0.9730 & 0.9785 & 0.9864 & \em{0.9855} & \em{0.9862} & \textbf{0.9852} & \textbf{0.9906} & \em{0.9841}\\
scr$_{\text{ew}}$ & 0.9800 & 0.9839 & 0.9839 & 0.9871 & 0.9898 & 0.9890 & 0.9975 & 0.9874\\
scr$_{\text{var}}$ & \em{0.9292} & \em{0.9466} & \em{0.9769} & 0.9978 & \textcolor{red}{1.0026} & \textcolor{red}{1.0033} & \textcolor{red}{1.0195} & 0.9875\\
scr$_{\text{cov}}$ & 0.9589 & 0.9974 & \textcolor{red}{1.0738} & \textcolor{red}{1.1029} & \textcolor{red}{1.0962} & \textcolor{red}{1.0825} & \textcolor{red}{1.0936} & \textcolor{red}{1.0679}\\
occ$_{\text{bv}}$ & 0.9817 & \textcolor{red}{1.0163} & \textcolor{red}{1.0879} & \textcolor{red}{1.1151} & \textcolor{red}{1.1002} & \textcolor{red}{1.0878} & \textcolor{red}{1.0973} & \textcolor{red}{1.0783}\\
occ$_{\text{shr}}$ & 0.9385 & 0.9883 & \textcolor{red}{1.0541} & \textcolor{red}{1.0837} & \textcolor{red}{1.0892} & \textcolor{red}{1.0861} & \textcolor{red}{1.1064} & \textcolor{red}{1.0590}\\
occ$_{\text{wls}}$ & 0.9531 & 0.9618 & 0.9908 & \textcolor{red}{1.0042} & \textcolor{red}{1.0035} & \textcolor{red}{1.0015} & \textcolor{red}{1.0063} & 0.9919\\
occ$_{\text{be}}$ & \textbf{0.9117} & \textbf{0.9292} & \textbf{0.9664} & \textbf{0.9818} & \textbf{0.9857} & \em{0.9876} & \em{0.9971} & \textbf{0.9710}\\
\midrule
\addlinespace[0.3em]
\multicolumn{9}{c}{\textbf{15 bottom time series}}\\
\addlinespace[0em]
\multicolumn{9}{l}{\textit{Base (incoherent forecasts)}}\\
stlf & \textcolor{red}{1.1443} & \textcolor{red}{1.1716} & \textcolor{red}{1.2286} & \textcolor{red}{1.2375} & \textcolor{red}{1.2202} & \textcolor{red}{1.2186} & \textcolor{red}{1.2287} & \textcolor{red}{1.2145}\\
arima & \textcolor{red}{1.0744} & \textcolor{red}{1.0840} & \textcolor{red}{1.0628} & \textcolor{red}{1.0572} & \textcolor{red}{1.0639} & \textcolor{red}{1.0593} & \textcolor{red}{1.0526} & \textcolor{red}{1.0634}\\
tbats & \textcolor{red}{1.0477} & \textcolor{red}{1.0503} & \textcolor{red}{1.0240} & \textcolor{red}{1.0132} & \textcolor{red}{1.0170} & \textcolor{red}{1.0140} & \textcolor{red}{1.0013} & \textcolor{red}{1.0211}\\
\addlinespace[0em]
\multicolumn{9}{l}{\textit{Single model reconciliation}}\\
stlf$_{\text{shr}}$ & \textcolor{red}{1.1283} & \textcolor{red}{1.1652} & \textcolor{red}{1.2239} & \textcolor{red}{1.2332} & \textcolor{red}{1.2157} & \textcolor{red}{1.2099} & \textcolor{red}{1.2224} & \textcolor{red}{1.2075}\\
arima$_{\text{shr}}$ & \textcolor{red}{1.0686} & \textcolor{red}{1.0670} & \textcolor{red}{1.0418} & \textcolor{red}{1.0291} & \textcolor{red}{1.0336} & \textcolor{red}{1.0379} & \textcolor{red}{1.0345} & \textcolor{red}{1.0417}\\
tbats$_{\text{shr}}$ & \textcolor{red}{1.0287} & \textcolor{red}{1.0397} & \textcolor{red}{1.0127} & 0.9973 & \textcolor{red}{1.0073} & \textcolor{red}{1.0106} & \textcolor{red}{1.0020} & \textcolor{red}{1.0115}\\
\addlinespace[0em]
\multicolumn{9}{l}{\textit{Combination (incoherent forecasts)}}\\
ew & 1.0000 & 1.0000 & 1.0000 & 1.0000 & 1.0000 & 1.0000 & 1.0000 & 1.0000\\
ow$_{\text{var}}$ & 0.9934 & 0.9943 & \textcolor{red}{1.0003} & \textcolor{red}{1.0015} & \textcolor{red}{1.0011} & \textcolor{red}{1.0026} & \textcolor{red}{1.0041} & \textcolor{red}{1.0003}\\
ow$_{\text{cov}}$ & \textcolor{red}{1.0429} & \textcolor{red}{1.0556} & \textcolor{red}{1.0910} & \textcolor{red}{1.0998} & \textcolor{red}{1.0957} & \textcolor{red}{1.1013} & \textcolor{red}{1.1051} & \textcolor{red}{1.0892}\\
\addlinespace[0em]
\multicolumn{9}{l}{\textit{Coherent combination}}\\
src & 0.9880 & 0.9892 & \em{0.9863} & \textbf{0.9822} & \textbf{0.9847} & \textbf{0.9884} & \textbf{0.9913} & \em{0.9868}\\
scr$_{\text{ew}}$ & 0.9916 & 0.9930 & 0.9870 & 0.9853 & \em{0.9878} & \em{0.9913} & \em{0.9955} & 0.9898\\
scr$_{\text{var}}$ & \em{0.9747} & \em{0.9801} & 0.9875 & 0.9923 & 0.9959 & \textcolor{red}{1.0009} & \textcolor{red}{1.0072} & 0.9928\\
scr$_{\text{cov}}$ & \textcolor{red}{1.0268} & \textcolor{red}{1.0457} & \textcolor{red}{1.0826} & \textcolor{red}{1.0942} & \textcolor{red}{1.0931} & \textcolor{red}{1.0961} & \textcolor{red}{1.1005} & \textcolor{red}{1.0824}\\
occ$_{\text{bv}}$ & \textcolor{red}{1.0370} & \textcolor{red}{1.0559} & \textcolor{red}{1.0916} & \textcolor{red}{1.1006} & \textcolor{red}{1.0927} & \textcolor{red}{1.0935} & \textcolor{red}{1.0988} & \textcolor{red}{1.0862}\\
occ$_{\text{shr}}$ & \textcolor{red}{1.0207} & \textcolor{red}{1.0375} & \textcolor{red}{1.0672} & \textcolor{red}{1.0801} & \textcolor{red}{1.0951} & \textcolor{red}{1.1032} & \textcolor{red}{1.1103} & \textcolor{red}{1.0779}\\
occ$_{\text{wls}}$ & 0.9872 & 0.9914 & 0.9971 & 0.9996 & 0.9994 & \textcolor{red}{1.0016} & \textcolor{red}{1.0045} & 0.9982\\
occ$_{\text{be}}$ & \textbf{0.9682} & \textbf{0.9706} & \textbf{0.9802} & \em{0.9837} & 0.9909 & 0.9973 & \textcolor{red}{1.0005} & \textbf{0.9861}\\
\bottomrule
\end{tabular}

\end{table}

\begin{figure}[!htb]
	\centering
	\includegraphics[width = \linewidth]{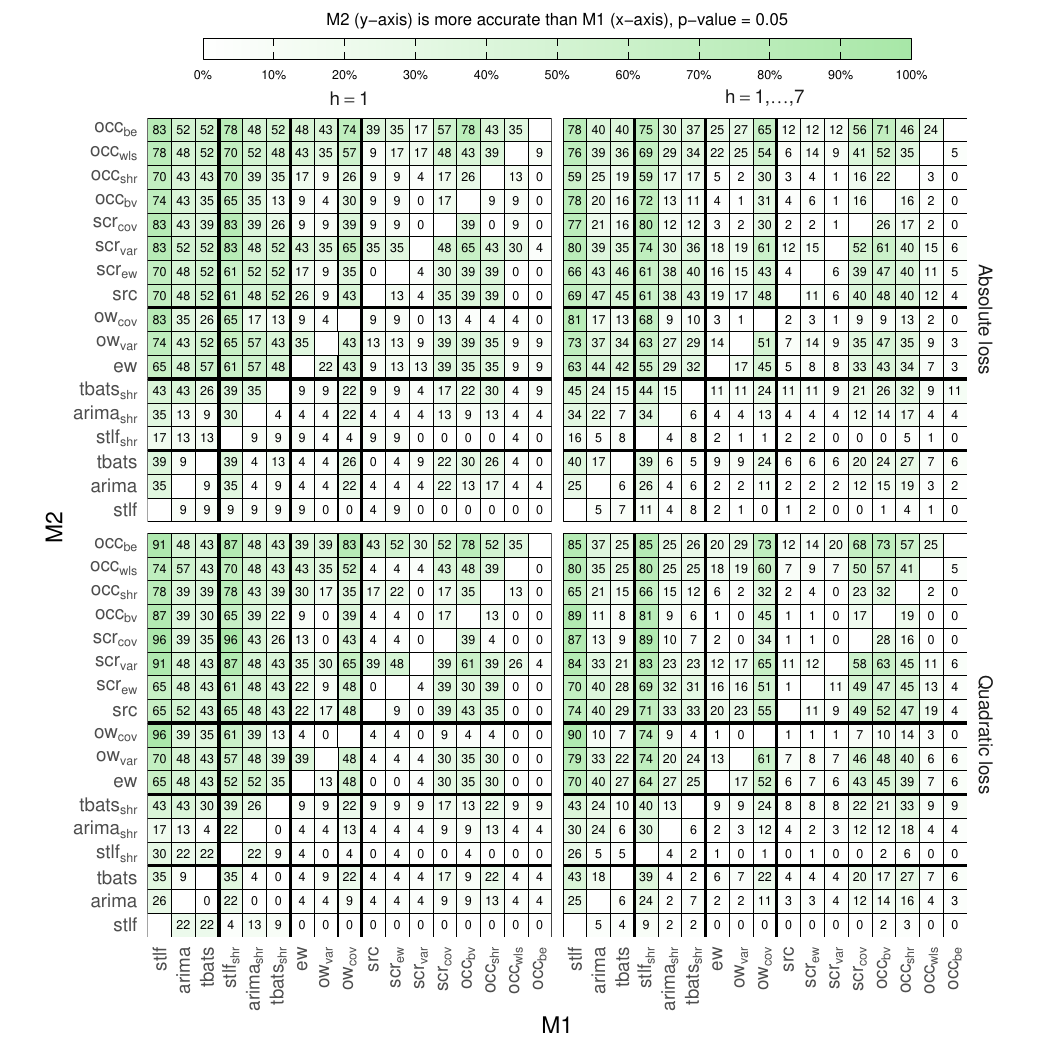}
	\caption{Pairwise DM-test results for the Australian electricity generation dataset, evaluated using absolute loss (top panels) and quadratic loss (bottom panel) across different forecast horizons. The left panel corresponds to forecast horizon $h = 1$, while the right panel is for $h = 1, \dots, 7$. Each cell reports the percentage of series for which the $p$-value of the DM-test is below $0.05$.}\label{fig:oa_energy_dm}
\end{figure}

\begin{table}[!tb]
	\centering
	\scriptsize
	\def\arraystretch{0.9}
	\caption{Model Confidence Set results for the Australian electricity generation dataset, evaluated using absolute loss across different forecast horizons ($h = 1$ and $h = 1, \dots, 7$) and aggregation level (all, upper and bottom times series). Each cell reports the percentage of series for which that approach is in the Model Confidence Set across different thresholds ($\delta \in \{95\%, 90\%, 80\%\}$).}\label{tab:oa_energy_mcs_mae}
	
\begin{tabular}[t]{>{}l|cc>{}c|ccc}
\toprule
\multicolumn{1}{c}{\textbf{ }} & \multicolumn{3}{c}{\textbf{$h =1$}} & \multicolumn{3}{c}{\textbf{$h =1:7$}} \\
\cmidrule(l{0pt}r{0pt}){2-4} \cmidrule(l{0pt}r{0pt}){5-7}
\multicolumn{1}{l|}{\textbf{Approach}} & $\delta = 95\%$ & $\delta = 90\%$ & $\delta = 80\%$ & $\delta = 95\%$ & $\delta = 90\%$ & $\delta = 80\%$\\
\midrule
\addlinespace[0.3em]
\multicolumn{7}{c}{\textbf{All 23 time series}}\\
\addlinespace[0em]
\multicolumn{7}{l}{\textit{Base (incoherent forecasts)}}\\
stlf & 47.8 & 39.1 & 26.1 & 47.8 & 43.5 & 34.8\\
arima & 60.9 & 60.9 & 56.5 & 73.9 & 73.9 & 69.6\\
tbats & 69.6 & 65.2 & 56.5 & 87.0 & 87.0 & 82.6\\
\addlinespace[0em]
\multicolumn{7}{l}{\textit{Single model reconciliation}}\\
stlf$_{\text{shr}}$ & 60.9 & 52.2 & 43.5 & 47.8 & 47.8 & 34.8\\
arima$_{\text{shr}}$ & 65.2 & 56.5 & 52.2 & 87.0 & 87.0 & 82.6\\
tbats$_{\text{shr}}$ & 91.3 & 87.0 & 73.9 & 87.0 & 87.0 & 82.6\\
\addlinespace[0em]
\multicolumn{7}{l}{\textit{Combination (incoherent forecasts)}}\\
ew & 91.3 & 91.3 & 82.6 & \em{95.7} & \em{91.3} & \em{91.3}\\
ow$_{\text{var}}$ & \em{95.7} & \em{95.7} & \em{91.3} & \em{95.7} & \em{91.3} & \em{91.3}\\
ow$_{\text{cov}}$ & 82.6 & 69.6 & 65.2 & 82.6 & 65.2 & 60.9\\
\addlinespace[0em]
\multicolumn{7}{l}{\textit{Coherent combination}}\\
src & \em{95.7} & 91.3 & 87.0 & \textbf{100.0} & \textbf{95.7} & \textbf{95.7}\\
scr$_{\text{ew}}$ & 91.3 & 91.3 & 87.0 & \textbf{100.0} & \textbf{95.7} & \em{91.3}\\
scr$_{\text{var}}$ & \textbf{100.0} & \textbf{100.0} & \textbf{95.7} & \em{95.7} & \textbf{95.7} & \textbf{95.7}\\
scr$_{\text{cov}}$ & 82.6 & 78.3 & 78.3 & 78.3 & 73.9 & 73.9\\
occ$_{\text{bv}}$ & 91.3 & 73.9 & 69.6 & 69.6 & 69.6 & 56.5\\
occ$_{\text{shr}}$ & 82.6 & 82.6 & 69.6 & 69.6 & 65.2 & 65.2\\
occ$_{\text{wls}}$ & \textbf{100.0} & \em{95.7} & 87.0 & 91.3 & \em{91.3} & \em{91.3}\\
occ$_{\text{be}}$ & \textbf{100.0} & \textbf{100.0} & \textbf{95.7} & \textbf{100.0} & \textbf{95.7} & \em{91.3}\\
\midrule
\addlinespace[0.3em]
\multicolumn{7}{c}{\textbf{8 upper time series}}\\
\addlinespace[0em]
\multicolumn{7}{l}{\textit{Base (incoherent forecasts)}}\\
stlf & 50.0 & 37.5 & 25.0 & 37.5 & 37.5 & 25.0\\
arima & 37.5 & 37.5 & 37.5 & 62.5 & 62.5 & 50.0\\
tbats & 50.0 & 50.0 & 50.0 & \em{87.5} & \em{87.5} & \em{87.5}\\
\addlinespace[0em]
\multicolumn{7}{l}{\textit{Single model reconciliation}}\\
stlf$_{\text{shr}}$ & 62.5 & 62.5 & 50.0 & 37.5 & 37.5 & 37.5\\
arima$_{\text{shr}}$ & 37.5 & 37.5 & 37.5 & \em{87.5} & \em{87.5} & 75.0\\
tbats$_{\text{shr}}$ & \em{87.5} & 75.0 & 62.5 & \em{87.5} & \em{87.5} & \em{87.5}\\
\addlinespace[0em]
\multicolumn{7}{l}{\textit{Combination (incoherent forecasts)}}\\
ew & \em{87.5} & \em{87.5} & \em{87.5} & \textbf{100.0} & \textbf{100.0} & \textbf{100.0}\\
ow$_{\text{var}}$ & \textbf{100.0} & \textbf{100.0} & \textbf{100.0} & \textbf{100.0} & \textbf{100.0} & \textbf{100.0}\\
ow$_{\text{cov}}$ & 75.0 & 75.0 & 62.5 & 75.0 & 62.5 & 62.5\\
\addlinespace[0em]
\multicolumn{7}{l}{\textit{Coherent combination}}\\
src & \em{87.5} & \em{87.5} & \em{87.5} & \textbf{100.0} & \textbf{100.0} & \textbf{100.0}\\
scr$_{\text{ew}}$ & \em{87.5} & \em{87.5} & \em{87.5} & \textbf{100.0} & \textbf{100.0} & \textbf{100.0}\\
scr$_{\text{var}}$ & \textbf{100.0} & \textbf{100.0} & \textbf{100.0} & \textbf{100.0} & \textbf{100.0} & \textbf{100.0}\\
scr$_{\text{cov}}$ & \em{87.5} & \em{87.5} & \em{87.5} & 75.0 & 75.0 & 75.0\\
occ$_{\text{bv}}$ & \em{87.5} & \em{87.5} & \em{87.5} & 62.5 & 62.5 & 62.5\\
occ$_{\text{shr}}$ & \em{87.5} & \em{87.5} & \em{87.5} & 62.5 & 62.5 & 62.5\\
occ$_{\text{wls}}$ & \textbf{100.0} & \textbf{100.0} & \em{87.5} & \textbf{100.0} & \textbf{100.0} & \textbf{100.0}\\
occ$_{\text{be}}$ & \textbf{100.0} & \textbf{100.0} & \textbf{100.0} & \textbf{100.0} & \textbf{100.0} & \textbf{100.0}\\
\midrule
\addlinespace[0.3em]
\multicolumn{7}{c}{\textbf{15 bottom time series}}\\
stlf & 46.7 & 40.0 & 26.7 & 53.3 & 46.7 & 40.0\\
arima & 73.3 & 73.3 & 66.7 & 80.0 & 80.0 & 80.0\\
tbats & 80.0 & 73.3 & 60.0 & 86.7 & \em{86.7} & 80.0\\
stlf$_{\text{shr}}$ & 60.0 & 46.7 & 40.0 & 53.3 & 53.3 & 33.3\\
arima$_{\text{shr}}$ & 80.0 & 66.7 & 60.0 & 86.7 & \em{86.7} & \em{86.7}\\
tbats$_{\text{shr}}$ & \em{93.3} & \em{93.3} & 80.0 & 86.7 & \em{86.7} & 80.0\\
ew & \em{93.3} & \em{93.3} & 80.0 & \em{93.3} & \em{86.7} & \em{86.7}\\
ow$_{\text{var}}$ & \em{93.3} & \em{93.3} & \em{86.7} & \em{93.3} & \em{86.7} & \em{86.7}\\
ow$_{\text{cov}}$ & 86.7 & 66.7 & 66.7 & 86.7 & 66.7 & 60.0\\
src & \textbf{100.0} & \em{93.3} & \em{86.7} & \textbf{100.0} & \textbf{93.3} & \textbf{93.3}\\
scr$_{\text{ew}}$ & \em{93.3} & \em{93.3} & \em{86.7} & \textbf{100.0} & \textbf{93.3} & \em{86.7}\\
scr$_{\text{var}}$ & \textbf{100.0} & \textbf{100.0} & \textbf{93.3} & \em{93.3} & \textbf{93.3} & \textbf{93.3}\\
scr$_{\text{cov}}$ & 80.0 & 73.3 & 73.3 & 80.0 & 73.3 & 73.3\\
occ$_{\text{bv}}$ & \em{93.3} & 66.7 & 60.0 & 73.3 & 73.3 & 53.3\\
occ$_{\text{shr}}$ & 80.0 & 80.0 & 60.0 & 73.3 & 66.7 & 66.7\\
occ$_{\text{wls}}$ & \textbf{100.0} & \em{93.3} & \em{86.7} & 86.7 & \em{86.7} & \em{86.7}\\
occ$_{\text{be}}$ & \textbf{100.0} & \textbf{100.0} & \textbf{93.3} & \textbf{100.0} & \textbf{93.3} & \em{86.7}\\
\bottomrule
\end{tabular}

\end{table}

\begin{table}[!tb]
	\centering
	\scriptsize
	\def\arraystretch{0.9}
	\caption{Model Confidence Set results for the Australian electricity generation dataset, evaluated using quadratic loss across different forecast horizons ($h = 1$ and $h = 1, \dots, 7$) and aggregation level (all, upper and bottom times series). Each cell reports the percentage of series for which that approach is in the Model Confidence Set across different thresholds ($\delta \in \{95\%, 90\%, 80\%\}$).}\label{tab:oa_energy_mcs_mse}
	
\begin{tabular}[t]{>{}l|cc>{}c|ccc}
\toprule
\multicolumn{1}{c}{\textbf{ }} & \multicolumn{3}{c}{\textbf{$h =1$}} & \multicolumn{3}{c}{\textbf{$h =1:7$}} \\
\cmidrule(l{0pt}r{0pt}){2-4} \cmidrule(l{0pt}r{0pt}){5-7}
\multicolumn{1}{l|}{\textbf{Approach}} & $\delta = 95\%$ & $\delta = 90\%$ & $\delta = 80\%$ & $\delta = 95\%$ & $\delta = 90\%$ & $\delta = 80\%$\\
\midrule
\addlinespace[0.3em]
\multicolumn{7}{c}{\textbf{All 23 time series}}\\
\addlinespace[0em]
\multicolumn{7}{l}{\textit{Base (incoherent forecasts)}}\\
stlf & 60.9 & 52.2 & 17.4 & 43.5 & 43.5 & 26.1\\
arima & 60.9 & 60.9 & 60.9 & 87.0 & 87.0 & 69.6\\
tbats & 69.6 & 65.2 & 65.2 & \textbf{100.0} & \em{95.7} & 87.0\\
\addlinespace[0em]
\multicolumn{7}{l}{\textit{Single model reconciliation}}\\
stlf$_{\text{shr}}$ & 69.6 & 60.9 & 47.8 & 52.2 & 47.8 & 34.8\\
arima$_{\text{shr}}$ & 56.5 & 56.5 & 56.5 & \textbf{100.0} & \textbf{100.0} & \textbf{95.7}\\
tbats$_{\text{shr}}$ & 78.3 & 78.3 & 69.6 & \textbf{100.0} & \em{95.7} & 87.0\\
\addlinespace[0em]
\multicolumn{7}{l}{\textit{Combination (incoherent forecasts)}}\\
ew & 91.3 & 87.0 & 82.6 & 91.3 & 91.3 & \em{91.3}\\
ow$_{\text{var}}$ & \textbf{100.0} & \textbf{95.7} & 87.0 & \em{95.7} & 91.3 & \em{91.3}\\
ow$_{\text{cov}}$ & 87.0 & 82.6 & 69.6 & 82.6 & 73.9 & 56.5\\
\addlinespace[0em]
\multicolumn{7}{l}{\textit{Coherent combination}}\\
src & \em{95.7} & \em{91.3} & \em{91.3} & \em{95.7} & \em{95.7} & \em{91.3}\\
scr$_{\text{ew}}$ & \em{95.7} & \em{91.3} & \em{91.3} & \em{95.7} & \em{95.7} & \em{91.3}\\
scr$_{\text{var}}$ & \em{95.7} & \textbf{95.7} & \textbf{95.7} & \em{95.7} & \em{95.7} & \textbf{95.7}\\
scr$_{\text{cov}}$ & 91.3 & \em{91.3} & 82.6 & 78.3 & 73.9 & 69.6\\
occ$_{\text{bv}}$ & 91.3 & \em{91.3} & 82.6 & 87.0 & 87.0 & 69.6\\
occ$_{\text{shr}}$ & 82.6 & 78.3 & 73.9 & 69.6 & 69.6 & 69.6\\
occ$_{\text{wls}}$ & \textbf{100.0} & \textbf{95.7} & \em{91.3} & 91.3 & 91.3 & \em{91.3}\\
occ$_{\text{be}}$ & \textbf{100.0} & \textbf{95.7} & \textbf{95.7} & \em{95.7} & \em{95.7} & \em{91.3}\\
\midrule
\addlinespace[0.3em]
\multicolumn{7}{c}{\textbf{8 upper time series}}\\
\addlinespace[0em]
\multicolumn{7}{l}{\textit{Base (incoherent forecasts)}}\\
stlf & 62.5 & 62.5 & 12.5 & 50.0 & 50.0 & 37.5\\
arima & 37.5 & 37.5 & 37.5 & \em{87.5} & \em{87.5} & 62.5\\
tbats & 50.0 & 50.0 & 50.0 & \textbf{100.0} & \em{87.5} & \em{87.5}\\
\addlinespace[0em]
\multicolumn{7}{l}{\textit{Single model reconciliation}}\\
stlf$_{\text{shr}}$ & 75.0 & 75.0 & 50.0 & 62.5 & 50.0 & 37.5\\
arima$_{\text{shr}}$ & 37.5 & 37.5 & 37.5 & \textbf{100.0} & \textbf{100.0} & \textbf{100.0}\\
tbats$_{\text{shr}}$ & 62.5 & 62.5 & 50.0 & \textbf{100.0} & \textbf{100.0} & \textbf{100.0}\\
\addlinespace[0em]
\multicolumn{7}{l}{\textit{Combination (incoherent forecasts)}}\\
ew & \em{87.5} & \em{87.5} & 75.0 & \textbf{100.0} & \textbf{100.0} & \textbf{100.0}\\
ow$_{\text{var}}$ & \textbf{100.0} & \textbf{100.0} & \em{87.5} & \textbf{100.0} & \textbf{100.0} & \textbf{100.0}\\
ow$_{\text{cov}}$ & \em{87.5} & 75.0 & 62.5 & \em{87.5} & \em{87.5} & 62.5\\
\addlinespace[0em]
\multicolumn{7}{l}{\textit{Coherent combination}}\\
src & \em{87.5} & \em{87.5} & \em{87.5} & \textbf{100.0} & \textbf{100.0} & \textbf{100.0}\\
scr$_{\text{ew}}$ & \em{87.5} & \em{87.5} & \em{87.5} & \textbf{100.0} & \textbf{100.0} & \textbf{100.0}\\
scr$_{\text{var}}$ & \textbf{100.0} & \textbf{100.0} & \textbf{100.0} & \textbf{100.0} & \textbf{100.0} & \textbf{100.0}\\
scr$_{\text{cov}}$ & \textbf{100.0} & \textbf{100.0} & \em{87.5} & \em{87.5} & 75.0 & 62.5\\
occ$_{\text{bv}}$ & \textbf{100.0} & \textbf{100.0} & \textbf{100.0} & \em{87.5} & \em{87.5} & 62.5\\
occ$_{\text{shr}}$ & \em{87.5} & \em{87.5} & \em{87.5} & 62.5 & 62.5 & 62.5\\
occ$_{\text{wls}}$ & \textbf{100.0} & \textbf{100.0} & \textbf{100.0} & \textbf{100.0} & \textbf{100.0} & \textbf{100.0}\\
occ$_{\text{be}}$ & \textbf{100.0} & \textbf{100.0} & \textbf{100.0} & \textbf{100.0} & \textbf{100.0} & \textbf{100.0}\\
\midrule
\addlinespace[0.3em]
\multicolumn{7}{c}{\textbf{15 bottom time series}}\\
stlf & 60.0 & 46.7 & 20.0 & 40.0 & 40.0 & 20.0\\
arima & 73.3 & 73.3 & 73.3 & 86.7 & 86.7 & 73.3\\
tbats & 80.0 & 73.3 & 73.3 & \textbf{100.0} & \textbf{100.0} & \em{86.7}\\
stlf$_{\text{shr}}$ & 66.7 & 53.3 & 46.7 & 46.7 & 46.7 & 33.3\\
arima$_{\text{shr}}$ & 66.7 & 66.7 & 66.7 & \textbf{100.0} & \textbf{100.0} & \textbf{93.3}\\
tbats$_{\text{shr}}$ & 86.7 & \em{86.7} & 80.0 & \textbf{100.0} & \em{93.3} & 80.0\\
ew & \em{93.3} & \em{86.7} & \em{86.7} & 86.7 & 86.7 & \em{86.7}\\
ow$_{\text{var}}$ & \textbf{100.0} & \textbf{93.3} & \em{86.7} & \em{93.3} & 86.7 & \em{86.7}\\
ow$_{\text{cov}}$ & 86.7 & \em{86.7} & 73.3 & 80.0 & 66.7 & 53.3\\
src & \textbf{100.0} & \textbf{93.3} & \textbf{93.3} & \em{93.3} & \em{93.3} & \em{86.7}\\
scr$_{\text{ew}}$ & \textbf{100.0} & \textbf{93.3} & \textbf{93.3} & \em{93.3} & \em{93.3} & \em{86.7}\\
scr$_{\text{var}}$ & \em{93.3} & \textbf{93.3} & \textbf{93.3} & \em{93.3} & \em{93.3} & \textbf{93.3}\\
scr$_{\text{cov}}$ & 86.7 & \em{86.7} & 80.0 & 73.3 & 73.3 & 73.3\\
occ$_{\text{bv}}$ & 86.7 & \em{86.7} & 73.3 & 86.7 & 86.7 & 73.3\\
occ$_{\text{shr}}$ & 80.0 & 73.3 & 66.7 & 73.3 & 73.3 & 73.3\\
occ$_{\text{wls}}$ & \textbf{100.0} & \textbf{93.3} & \em{86.7} & 86.7 & 86.7 & \em{86.7}\\
occ$_{\text{be}}$ & \textbf{100.0} & \textbf{93.3} & \textbf{93.3} & \em{93.3} & \em{93.3} & \em{86.7}\\
\bottomrule
\end{tabular}

\end{table}

\begin{figure}[!htbp]
	\centering
	\includegraphics[width = 0.9\linewidth]{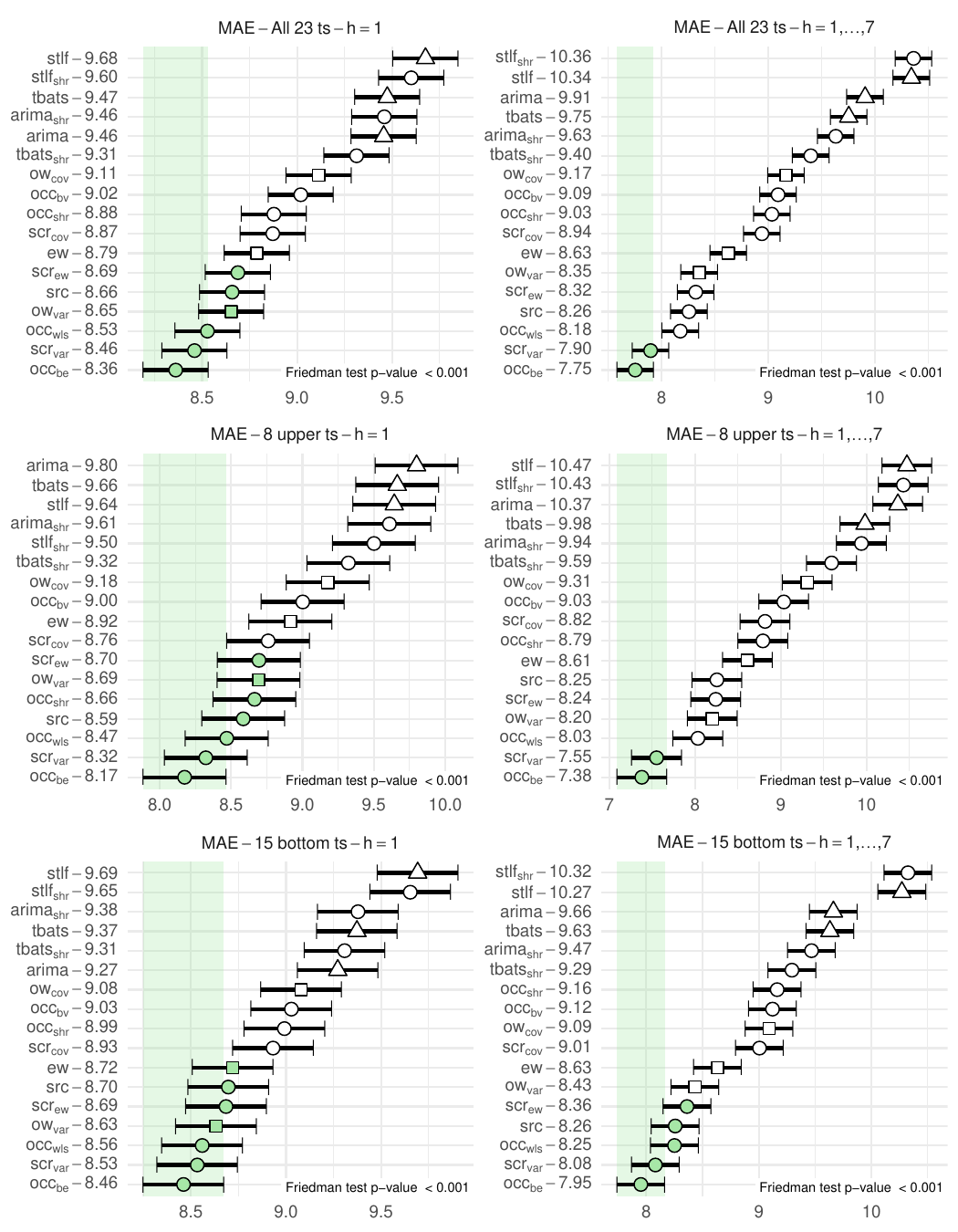}
	\caption{MCB Nemenyi test for the Australian electricity generation dataset using the MAE at different forecast horizon ($h = 1,...,7$ for the first column and $h = 1$ for the second column). In each panel, the Friedman test p-value is reported in the lower-right corner. The mean rank of each approach is shown to the right of its name. Statistically significant differences in performance are indicated if the intervals of two forecast reconciliation procedures do not overlap. Thus, approaches that do not overlap with the green interval are considered significantly worse than the best, and vice versa.}\label{fig:oa_energy_mcb_mae}
\end{figure}

\begin{figure}[!htbp]
\centering
\includegraphics[width = 0.9\linewidth]{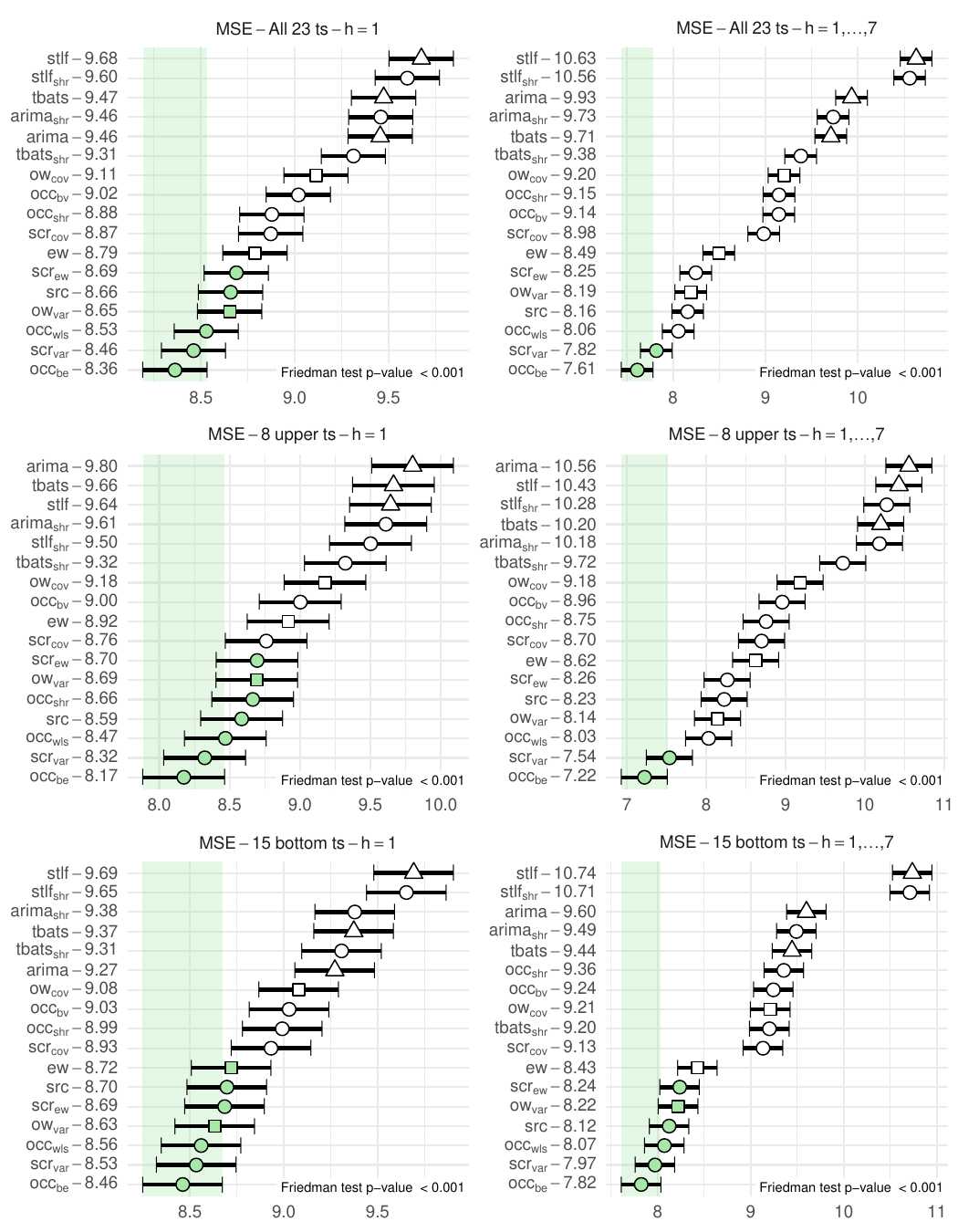}
\caption{MCB Nemenyi test for the Australian electricity generation dataset using the MSE at different forecast horizon ($h = 1,...,7$ for the first column and $h = 1$ for the second column). In each panel, the Friedman test p-value is reported in the lower-right corner. The mean rank of each approach is shown to the right of its name. Statistically significant differences in performance are indicated if the intervals of two forecast reconciliation procedures do not overlap. Thus, approaches that do not overlap with the green interval are considered significantly worse than the best, and vice versa.}\label{fig:oa_energy_mcb_mse}
\end{figure}	

\clearpage
\phantomsection\addcontentsline{toc}{section}{References}
\begingroup
\spacingset{1.5}
\setlength{\bibsep}{0pt plus 0.1ex}
\bibliographystyle{apalike3link}

\bibliography{biblio.bib}

\begin{thebibliography}{}

\bibitem[Ando and Narita, 2024]{Ando2024}
Ando, S. \& Narita, F. (2024).
\newblock \href{https://doi.org/10.3390/forecast6020025}{{An alternative proof
  of Minimum Trace reconciliation}}.
\newblock {\em Forecasting}, {\em 6}(2), 456--461.

\bibitem[Athanasopoulos et~al., 2009]{Athanasopoulos2009}
Athanasopoulos, G., Ahmed, R.~A., \& Hyndman, R. (2009).
\newblock \href{https://doi.org/10.1016/j.ijforecast.2008.07.004}{{Hierarchical
  forecasts for Australian domestic tourism}}.
\newblock {\em International Journal of Forecasting}, {\em 25}(1), 146--166.

\bibitem[Athanasopoulos et~al., 2024]{Athanasopoulos2024}
Athanasopoulos, G., Hyndman, R.~J., Kourentzes, N., \& Panagiotelis, A. (2024).
\newblock \href{https://doi.org/10.1016/j.ijforecast.2023.10.010}{{Forecast
  reconciliation: A review}}.
\newblock {\em International Journal of Forecasting}, {\em 40}(2), 430--456.

\bibitem[Baltagi, 2021]{Baltagi2021}
Baltagi, B.~H. (2021).
\newblock {\em Econometric Analysis of Panel Data}.
\newblock New York: Springer.

\bibitem[Bates and Granger, 1969]{Bates1969}
Bates, J. \& Granger, C.~W. (1969).
\newblock \href{https://doi.org/10.1057/jors.1969.103}{The combination of
  forecasts}.
\newblock {\em Operations Research Quarterly}, {\em 20}(4), 451--468.

\bibitem[Bańbura et~al., 2010]{Banbura2010-es}
Bańbura, M., Giannone, D., \& Reichlin, L. (2010).
\newblock \href{https://doi.org/10.1002/jae.1137}{Large bayesian vector auto
  regressions}.
\newblock {\em Journal of Applied Econometrics}, {\em 25}(1), 71--92.

\bibitem[Ben~Taieb et~al., 2021]{Ben_Taieb2021}
Ben~Taieb, S., Taylor, J.~W., \& Hyndman, R.~J. (2021).
\newblock \href{https://doi.org/10.1080/01621459.2020.1736081}{Hierarchical
  probabilistic forecasting of electricity demand with smart meter data}.
\newblock {\em Journal of the American Statistical Association}, {\em
  116}(533), 27--43.

\bibitem[Box and Jenkins, 1976]{Box1976-ud}
Box, G. E.~P. \& Jenkins, G.~M. (1976).
\newblock {\em Time series analysis: Forecasting and control}.
\newblock San Francisco: Holden-Day.

\bibitem[Byron, 1978]{Byron1978-ws}
Byron, R.~P. (1978).
\newblock \href{https://doi.org/10.2307/2344807}{The estimation of large social
  account matrices}.
\newblock {\em Journal of the Royal Statistical Society. Series A}, {\em
  141}(3), 359--367.

\bibitem[Byron, 1979]{Byron1979-hv}
Byron, R.~P. (1979).
\newblock \href{https://doi.org/10.2307/2982515}{Corrigenda: The estimation of
  large social account matrices}.
\newblock {\em Journal of the Royal Statistical Society. Series A}, {\em
  142}(3), 405.

\bibitem[Capistr\'an and Timmermann, 2009]{Capistran2009}
Capistr\'an, C. \& Timmermann, A. (2009).
\newblock \href{https://doi.org/10.1198/jbes.2009.07211}{Forecast combination
  with entry and exit of experts}.
\newblock {\em Journal of Business \& Economic Statistics}, {\em 27}(4),
  428--440.

\bibitem[Claeskens et~al., 2016]{Claeskens2016}
Claeskens, G., Magnus, J.~R., Vasnev, A.~L., \& Wang, W. (2016).
\newblock \href{https://doi.org/10.1016/j.ijforecast.2015.12.005}{The forecast
  combination puzzle: A simple theoretical explanation}.
\newblock {\em International Journal of Forecasting}, {\em 32}(3), 754--762.

\bibitem[Clemen, 1989]{Clemen1989}
Clemen, R.~T. (1989).
\newblock \href{https://doi.org/10.1016/0169-2070(89)90012-5}{{Combining
  forecasts: A review and annotated bibliography}}.
\newblock {\em International Journal of Forecasting}, {\em 5}(4), 559--583.

\bibitem[Cleveland, 1990]{Cleveland1990-ux}
Cleveland, R. (1990).
\newblock {STL} : A seasonal-trend decomposition procedure based on loess.
\newblock {\em Journal of Official Statistics}, {\em 6}(1), 3--73.

\bibitem[Conflitti et~al., 2015]{Conflitti2015}
Conflitti, C., De~Mol, C., \& Giannone, D. (2015).
\newblock \href{https://doi.org/10.1016/j.ijforecast.2015.03.009}{Optimal
  combination of survey forecasts}.
\newblock {\em International Journal of Forecasting}, {\em 31}(4), 1096--1103.

\bibitem[Davydenko and Fildes, 2013]{Davydenko2013}
Davydenko, A. \& Fildes, R. (2013).
\newblock \href{https://doi.org/10.1016/j.ijforecast.2012.09.002}{Measuring
  forecasting accuracy: The case of judgmental adjustments to sku-level demand
  forecasts}.
\newblock {\em International Journal of Forecasting}, {\em 29}(3), 510--522.

\bibitem[De~Livera et~al., 2011]{De-Livera2011-hx}
De~Livera, A.~M., Hyndman, R.~J., \& Snyder, R.~D. (2011).
\newblock \href{https://doi.org/10.1198/jasa.2011.tm09771}{Forecasting time
  series with complex seasonal patterns using exponential smoothing}.
\newblock {\em Journal of the American Statistical Association}, {\em
  106}(496), 1513--1527.

\bibitem[Di~Fonzo and Girolimetto, 2023]{Difonzo2023}
Di~Fonzo, T. \& Girolimetto, D. (2023).
\newblock
  \href{https://doi.org/10.1016/j.ijforecast.2021.08.004}{Cross-temporal
  forecast reconciliation: Optimal combination method and heuristic
  alternatives}.
\newblock {\em International Journal of Forecasting}, {\em 39}(1), 39--57.

\bibitem[Di~Fonzo and Girolimetto, 2024]{Difonzo2024}
Di~Fonzo, T. \& Girolimetto, D. (2024).
\newblock \href{https://doi.org/10.1016/j.ijforecast.2022.07.001}{Forecast
  combination based forecast reconciliation}.
\newblock {\em International Journal of Forecasting}, {\em 40}(2), 490--514.

\bibitem[Diebold and Mariano, 1995]{Diebold1995-xs}
Diebold, F.~X. \& Mariano, R.~S. (1995).
\newblock Comparing predictive accuracy.
\newblock {\em Journal of Business \& Economic Statistics}, {\em 13}(3),
  253--263.

\bibitem[Fleming and Wallace, 1986]{Fleming1986}
Fleming, P.~J. \& Wallace, J.~J. (1986).
\newblock \href{https://doi.org/10.1145/5666.5673}{How not to lie with
  statistics: The correct way to summarize benchmark results}.
\newblock {\em Communications of the ACM}, {\em 29}(3), 218--221.

\bibitem[Frazier et~al., 2023]{Frazier2023}
Frazier, D.~T., Covey, R., Martin, G.~M., \& Poskitt, D. (2023).
\newblock \href{https://doi.org/10.48550/arXiv.2308.05263}{Solving the forecast
  combination puzzle}.
\newblock {\em Monash Econometrics and Business Statistics Working Papers
  18/23, Monash University, Department of Econometrics and Business
  Statistics}.

\bibitem[Girolimetto et~al., 2024]{Girolimetto2024-jm}
Girolimetto, D., Athanasopoulos, G., Di~Fonzo, T., \& Hyndman, R.~J. (2024).
\newblock
  \href{https://doi.org/10.1016/j.ijforecast.2023.10.003}{Cross-temporal
  probabilistic forecast reconciliation: Methodological and practical issues}.
\newblock {\em International Journal of Forecasting}, {\em 40}(3), 1134--1151.

\bibitem[Girolimetto and Di~Fonzo, 2024a]{FoReco}
Girolimetto, D. \& Di~Fonzo, T. (2024a).
\newblock \href{https://doi.org/10.32614/cran.package.foreco}{{ FoReco}:
  Forecast reconciliation}.
\newblock {\em R package}.

\bibitem[Girolimetto and Di~Fonzo, 2024b]{Foco2}
Girolimetto, D. \& Di~Fonzo, T. (2024b).
\newblock \href{https://github.com/danigiro/FoCo2}{Foco2: Coherent forecast
  combination}.
\newblock {\em R package}.

\bibitem[Girolimetto and Di~Fonzo, 2024c]{Girolimetto2024}
Girolimetto, D. \& Di~Fonzo, T. (2024c).
\newblock \href{https://doi.org/10.1007/s10260-023-00738-6}{Point and
  probabilistic forecast reconciliation for general linearly constrained
  multiple time series}.
\newblock {\em Statistical Methods \& Applications}, {\em 33}(2), 581--607.

\bibitem[Goehry et~al., 2020]{Goehri2020}
Goehry, B., Goude, Y., Massart, P., \& Poggi, J.-M. (2020).
\newblock \href{https://doi.org/10.1109/TSG.2019.2945088}{Aggregation of
  multi-scale experts for bottom-up load forecasting}.
\newblock {\em IEEE Transactions on Smart Grid}, {\em 11}(3), 1895--1904.

\bibitem[Hansen et~al., 2011]{Hansen2011}
Hansen, P.~R., Lunde, A., \& Nason, J.~M. (2011).
\newblock \href{https://doi.org/10.3982/ECTA5771}{The {M}odel {C}onfidence
  {S}et}.
\newblock {\em Econometrica}, {\em 79}(2), 453--497.

\bibitem[Hollyman et~al., 2021]{Hollyman2021}
Hollyman, R., Petropoulos, F., \& Tipping, M.~E. (2021).
\newblock \href{https://doi.org/10.1016/j.ejor.2021.01.017}{Understanding
  forecast reconciliation}.
\newblock {\em European Journal of Operational Research}, {\em 294}(1),
  149--160.

\bibitem[Hyndman et~al., 2011]{Hyndman2011}
Hyndman, R.~J., Ahmed, R.~A., Athanasopoulos, G., \& Shang, H.~L. (2011).
\newblock \href{https://doi.org/10.1016/j.csda.2011.03.006}{{Optimal
  combination forecasts for hierarchical time series}}.
\newblock {\em Computational Statistics and Data Analysis}, {\em 55}(9),
  2579--2589.

\bibitem[Hyndman et~al., 2023]{Hyndman2023-gd}
Hyndman, R.~J., et~al. (2023).
\newblock \href{https://doi.org/10.32614/CRAN.package.forecast}{forecast:
  Forecasting functions for time series and linear models}.
\newblock {\em R package}.

\bibitem[Hyndman and Khandakar, 2008]{Hyndman2008-qw}
Hyndman, R.~J. \& Khandakar, Y. (2008).
\newblock \href{https://doi.org/10.18637/jss.v027.i03}{Automatic time series
  forecasting: The forecast package for {R}}.
\newblock {\em Journal of Statistical Software}, {\em 27}, 1--22.

\bibitem[Hyndman et~al., 2016]{Hyndman2016}
Hyndman, R.~J., Lee, A.~J., \& Wang, E. (2016).
\newblock \href{https://doi.org/10.1016/j.csda.2015.11.007}{Fast computation of
  reconciled forecasts for hierarchical and grouped time series}.
\newblock {\em {Computational Statistics \& Data Analysis}}, {\em 97}, 16--32.

\bibitem[Kapetanios et~al., 2008]{Kapetanios2008-lc}
Kapetanios, G., Labhard, V., \& Price, S. (2008).
\newblock \href{https://www.jstor.org/stable/27638959}{Forecasting using
  bayesian and information-theoretic model averaging: An application to {UK}
  inflation}.
\newblock {\em Journal of Business \& Economic Statistics}, {\em 26}(1),
  33--41.

\bibitem[Keller and Olkin, 2004]{Keller2004}
Keller, T. \& Olkin, I. (2004).
\newblock Combining correlated unbiased estimators of the mean of a normal
  distribution.
\newblock {\em Lecture Notes - Monograph Series}, {\em 45}, 218--227.

\bibitem[Koning et~al., 2005]{Koning2005-up}
Koning, A.~J., Franses, P.~H., Hibon, M., \& Stekler, H.~O. (2005).
\newblock \href{https://doi.org/10.1016/j.ijforecast.2004.10.003}{The {M3}
  competition: Statistical tests of the results}.
\newblock {\em International Journal of Forecasting}, {\em 21}(3), 397--409.

\bibitem[Kourentzes and Athanasopoulos, 2019]{Kourentzes2019-dj}
Kourentzes, N. \& Athanasopoulos, G. (2019).
\newblock \href{https://doi.org/10.1016/j.annals.2019.02.001}{Cross-temporal
  coherent forecasts for {A}ustralian tourism}.
\newblock {\em Annals of Tourism Research}, {\em 75}, 393--409.

\bibitem[Lavancier and Rochet, 2016]{Lavancier2016}
Lavancier, F. \& Rochet, P. (2016).
\newblock \href{https://doi.org/10.1016/j.csda.2015.08.001}{A general procedure
  to combine estimators}.
\newblock {\em Computational Statistics and Data Analysis}, {\em 94}, 175--192.

\bibitem[Liu et~al., 2024]{Liu2024}
Liu, L., Hao, X., \& Wang, Y. (2024).
\newblock \href{https://doi.org/10.1111/obes.12590}{Solving the forecast
  combination puzzle using double shrinkages}.
\newblock {\em Oxford Bulletin of Economics \& Statistics}, {\em 86}(3),
  714--741.

\bibitem[Magnus and Neudecker, 2019]{Magnus2019-lo}
Magnus, J.~R. \& Neudecker, H. (2019).
\newblock {\em Matrix differential calculus with applications in statistics and
  econometrics}.
\newblock Hoboken, NJ: Wiley, third edition.

\bibitem[Makridakis et~al., 2022]{Makridakis2022-we}
Makridakis, S., Spiliotis, E., \& Assimakopoulos, V. (2022).
\newblock \href{https://doi.org/10.1016/j.ijforecast.2021.11.013}{{M5} accuracy
  competition: Results, findings, and conclusions}.
\newblock {\em International Journal of Forecasting}, {\em 38}(4), 1346--1364.

\bibitem[Mohamed, 2023]{Mohamed2023}
Mohamed, R.~A. (2023).
\newblock \href{https://doi.org/10.1371/journal.pone.0287897}{Enhancing
  forecast accuracy using combination methods for the hierarchical time series
  approach}.
\newblock {\em {PLoS One}}, {\em 18}(7), e0287897.

\bibitem[Newbold and Granger, 1974]{Newbold1974}
Newbold, P. \& Granger, C.~W. (1974).
\newblock \href{https://doi.org/10.2307/2344546}{Experience with forecasting
  univariate time series and the combination of forecasts}.
\newblock {\em Journal of the Royal Statistical Society, A}, {\em 137}(2),
  131--146.

\bibitem[Panagiotelis et~al., 2021]{Panagiotelis2021}
Panagiotelis, A., Gamakumara, P., Athanasopoulos, G., \& Hyndman, R.~J. (2021).
\newblock \href{https://doi.org/10.1016/j.ijforecast.2020.06.004}{{Forecast
  reconciliation: A geometric view with new insights on bias correction}}.
\newblock {\em International Journal of Forecasting}, {\em 37}(1), 343--359.

\bibitem[Panagiotelis et~al., 2023]{Panagiotelis2023}
Panagiotelis, A., Gamakumara, P., Athanasopoulos, G., \& Hyndman, R.~J. (2023).
\newblock \href{https://doi.org/10.1016/j.ejor.2022.07.040}{Probabilistic
  forecast reconciliation: Properties, evaluation and score optimisation}.
\newblock {\em European Journal of Operational Research}, {\em 36}(2),
  693--706.

\bibitem[Porrill, 1988]{Porrill1988}
Porrill, J. (1988).
\newblock \href{https://doi.org/10.1177/027836498800700606}{Optimal combination
  and constraints for geometrical sensor data}.
\newblock {\em The International Journal of Robotics Research}, {\em 7}(6).

\bibitem[Qian et~al., 2019]{Qian2019}
Qian, W., Rolling, C.~A., Cheng, G., \& Yang, Y. (2019).
\newblock On the forecast combination puzzle.
\newblock {\em Econometrics}, {\em 7}(3), 39.

\bibitem[Rao, 2000]{Rao2000}
Rao, C.~R. (2000).
\newblock \href{https://doi.org/10.1016/S0024-3795(99)00276-1}{Statistical
  proofs of some matrix inequalities}.
\newblock {\em Linear Algebra and its Applications}, {\em 321}(1--3), 307--320.

\bibitem[Rostami-Tabar and Hyndman, 2024]{Rostami2024}
Rostami-Tabar, B. \& Hyndman, R.~J. (2024).
\newblock \href{https://doi.org/10.1177/10946705241232169}{Hierarchical time
  series forecasting in emergency medical services}.
\newblock {\em Journal of Service Research}, {\em in press}.

\bibitem[Sch{\"{a}}fer and Strimmer, 2005]{Schafer2005}
Sch{\"{a}}fer, J. \& Strimmer, K. (2005).
\newblock \href{https://doi.org/10.2202/1544-6115.1175}{{A Shrinkage Approach
  to Large-Scale Covariance Matrix Estimation and Implications for Functional
  Genomics}}.
\newblock {\em Statistical Applications in Genetics and Molecular Biology},
  {\em 4}(1).

\bibitem[Seber, 1984]{Seber1984}
Seber, G. (1984).
\newblock {\em Multivariate observations}.
\newblock New York: Wiley.

\bibitem[Smith and Wallis, 2009]{Smith2009}
Smith, J. \& Wallis, K. (2009).
\newblock \href{https://doi.org/10.1111/j.1468-0084.2008.00541.x}{A simple
  explanation of the forecast combination puzzle}.
\newblock {\em Oxford Bulletin of Economics \& Statistics}, {\em 71}(3),
  331--355.

\bibitem[Spiliotis et~al., 2019]{Spiliotis2019}
Spiliotis, E., Petropoulos, F., \& Assimakopoulos, V. (2019).
\newblock \href{https://doi.org/10.1371/journal.pone.0223422}{Improving the
  forecasting performance of temporal hierarchies}.
\newblock {\em PLoS One}, {\em 14}(10), e0223422.

\bibitem[Stone et~al., 1942]{Stone1942-fa}
Stone, R., Champernowne, D.~G., \& Meade, J.~E. (1942).
\newblock \href{https://doi.org/10.2307/2967664}{The precision of national
  income estimates}.
\newblock {\em The Review of Economic Studies}, {\em 9}(2), 111--125.

\bibitem[Sun and Deng, 2004]{Sun2004}
Sun, S.-L. \& Deng, Z.-L. (2004).
\newblock \href{https://doi.org/10.1016/j.automatica.2004.01.014}{Multi-sensor
  optimal information fusion {K}alman filter}.
\newblock {\em Automatica}, {\em 40}(6), 1017--1023.

\bibitem[Thompson et~al., 2024]{Thompson2024}
Thompson, R., Qian, Y., \& Vasnev, A.~L. (2024).
\newblock \href{https://doi.org/10.1016/j.omega.2024.103073}{Flexible global
  forecast combinations}.
\newblock {\em Omega}, {\em 126}, 103073.

\bibitem[Timmermann, 2006]{Timmermann2006}
Timmermann, A. (2006).
\newblock {Forecast combinations}.
\newblock In G. Elliott, C.~C. Granger, \& A. Timmermann (Eds.), {\em Handbook
  of Economic Forecasting}  (pp.\ 135--196). New York: Elsevier.

\bibitem[Wang et~al., 2024]{Wang2024}
Wang, X., Hyndman, R.~J., Li, F., \& Kang, Y. (2024).
\newblock \href{https://doi.org/10.1016/j.ijforecast.2022.11.005}{Forecast
  combinations: an over 50-year review}.
\newblock {\em International Journal of Forecasting}, {\em 39}(4), 1518--1547.

\bibitem[Wickramasuriya et~al., 2019]{Wickra2019}
Wickramasuriya, S.~L., Athanasopoulos, G., \& Hyndman, R.~J. (2019).
\newblock \href{https://doi.org/10.1080/01621459.2018.1448825}{Optimal forecast
  reconciliation for hierarchical and grouped time series through trace
  minimization}.
\newblock {\em Journal of the American Statistical Association}, {\em
  114}(526), 804--819.

\bibitem[Yang and Zhang, 2019]{Yang2019}
Yang, D. \& Zhang, A.~N. (2019).
\newblock \href{https://doi.org/10.3390/info10080260}{Impact of information
  sharing and forecast combination on fast-moving-consumer-goods demand
  forecast accuracy}.
\newblock {\em Information}, {\em 10}(8), 1167--1182.

\bibitem[Zhang et~al., 2024]{Zhang2024}
Zhang, B., Panagiotelis, A., \& Li, H. (2024).
\newblock \href{https://doi.org/10.1016/j.ijforecast.2024.10.002}{Constructing
  hierarchical time series through clustering: Is there an optimal way for
  forecasting?}
\newblock {\em International Journal of Forecasting}, {\em in press}.

\end{thebibliography}

\endgroup
\end{document}